\documentclass[journal,twoside,web]{ieeecolor}
\usepackage{generic}
\usepackage{cite}
\usepackage{amsmath,amssymb,amsfonts}
\allowdisplaybreaks
\usepackage{graphicx}
\usepackage{textcomp}
\usepackage{mathtools}
\usepackage{dsfont}
\usepackage{flushend}
\usepackage{enumerate}
\usepackage{dsfont}
\usepackage{pifont}
\usepackage{algpseudocode}
\usepackage{array}
\usepackage{booktabs}
\usepackage{textcomp}
\usepackage{stfloats}
\usepackage{url}
\usepackage{verbatim}
\usepackage{graphicx}
\usepackage[hidelinks]{hyperref}
\usepackage{subcaption}
\usepackage[affil-it]{authblk}
\usepackage{fancyhdr}
\usepackage{bm}
\usepackage[linesnumbered,ruled,vlined]{algorithm2e}
\SetKwInput{KwInput}{Input}
\SetKwInput{KwOutput}{Output}
\SetKw{KwAssert}{Assert}
\usepackage{centernot}

\usepackage{amsthm}
\newtheorem{lemma}{Lemma}[]
\newtheorem{theorem}{Theorem}[]
\newtheorem{condition}{Condition}[]
\newtheorem{problem}{Problem}[]
\newtheorem{proposition}{Proposition}[]

\theoremstyle{definition}
\newtheorem{definition}{Definition}[]
\newtheorem{oldremark}{Remark}[]
\newtheorem{oldexample}{Example}[]

\newtheorem{assumption}{Assumption}[]

\newenvironment{remark}
{\begin{oldremark}\renewcommand{\qedsymbol}{$\triangle$}\pushQED{\qed}}
	{\popQED\end{oldremark}}
\newenvironment{example}
{\begin{oldexample}\renewcommand{\qedsymbol}{$\triangle$}\pushQED{\qed}}
	{\popQED\end{oldexample}}

\usepackage{tikz}
\usetikzlibrary{shapes,arrows}
\usetikzlibrary{arrows,calc,positioning}
\tikzset{
	block/.style = {draw, rectangle,
		minimum height=1.2cm,
		minimum width=1.2cm},
	input/.style = {coordinate,node distance=1cm},
	output/.style = {coordinate,node distance=2cm},
	arrow/.style={draw, -latex,node distance=2cm},
	pinstyle/.style = {pin edge={latex-, black,node distance=1cm}},
	sum/.style = {draw, circle, node distance=1cm},
}
\definecolor{backgreen}{HTML}{E9F3DF}
\definecolor{backblue}{HTML}{DAE5EC}
\definecolor{backpurple}{HTML}{E5E0E8}
\definecolor{backorange}{HTML}{F2E9DF}
\definecolor{lightergray}{gray}{0.9}
\definecolor{lightgray}{gray}{0.85}

\usepackage[framemethod=TikZ]{mdframed}
\mdfdefinestyle{callout}{%
	linecolor=black,
	linewidth=1pt,%
	roundcorner=0pt,
	innertopmargin=4pt,
	innerbottommargin=4pt,
	innerrightmargin=5pt,
	innerleftmargin=5pt,
	leftmargin = 0pt,
	rightmargin = 0pt,
	backgroundcolor=lightergray
}

\def\BibTeX{{\rm B\kern-.05em{\sc i\kern-.025em b}\kern-.08em
		T\kern-.1667em\lower.7ex\hbox{E}\kern-.125emX}}
\usepackage{dsfont}

\usepackage{dirtytalk}

\usepackage{endnotes}

\definecolor{cadetblue}{rgb}{0.33, 0.41, 0.58}

\newcommand{\dint}[0]{\mathrm{d}}
\DeclareMathOperator{\argmin}{\mathrm{argmin}}
\DeclareMathOperator{\argmax}{\mathrm{argmax}}
\DeclareMathOperator{\EV}{\mathds{E}}

\DeclareMathOperator{\supp}{\mathrm{supp}}

\DeclareMathOperator{\col}{\mathrm{col}}

\DeclareMathOperator{\sign}{\mathrm{sgn}}

\DeclareMathAlphabet{\doublestruck}{U}{BOONDOX-ds}{m}{n}
\newcommand{\ones}[0]{\mathds{1}}
\newcommand{\zeros}[0]{\doublestruck{0}}

\newcommand{\NE}[0]{\mathrm{NE}}

\newcommand{\N}[0]{\mathbb{N}}

\newcommand{\Z}[0]{\mathbb{Z}}
\newcommand{\R}[0]{\mathbb{R}}

\newcommand{\Rnn}[0]{\mathbb{R}_{\geq0}}

\newcommand{\Acal}[0]{\mathcal{A}}
\newcommand{\Bcal}[0]{\mathcal{B}}
\newcommand{\Ccal}[0]{\mathcal{C}}

\newcommand{\Fcal}[0]{\mathcal{F}}
\newcommand{\Gcal}[0]{\mathcal{G}}

\newcommand{\Kcal}[0]{\mathcal{K}}

\newcommand{\Ocal}[0]{\mathcal{O}}
\newcommand{\Pcal}[0]{\mathcal{P}}

\newcommand{\Rcal}[0]{\mathcal{R}}
\newcommand{\Scal}[0]{\mathcal{S}}

\newcommand{\Ucal}[0]{\mathcal{U}}

\newcommand{\Xcal}[0]{\mathcal{X}}

\usepackage{xstring}
\newcommand{\weblink}[1]{%
	\StrSubstitute{#1}{https://}{}[\strippedurl]%
	\href{#1}{{\small \color[RGB]{0,0,0} \url{\strippedurl}}}%
}

\newcommand{\Rd}[0]{R_\mathrm{d}}
\newcommand{\Rr}[0]{R_\mathrm{r}}
\newcommand{\Rno}[0]{R_\mathrm{n}}

\newcommand{\MSNE}[0]{\mathrm{MSNE}}

\newcommand{\UDcbar}[0]{\bar{\Ucal}^c_D}
\newcommand{\UDbar}[0]{\bar{\Ucal}_D}

\newcommand{\Jeff}[0]{J_\mathrm{Eff}}

\newcommand{\JF}[0]{J_\mathrm{F}}

\newcommand{\Rsf}[0]{\mathsf{R}}

\renewcommand{\baselinestretch}{0.96}

\begin{document}
	\title{Token Economy for Fair and Efficient Dynamic Resource Allocation in Congestion Games}
	\author{Leonardo Pedroso, Andrea Agazzi, W.P.M.H. (Maurice) Heemels, Mauro Salazar	\vspace{-1cm}
		\thanks{This work was supported in part by the Eindhoven Artificial Intelligence Systems Institute (EAISI).}
		\thanks{L.~Pedroso, W.P.M.H.~Heemels and M.~Salazar are with the Control Systems Technology section, Department of Mechanical Engineering, Eindhoven University of Technology, The Netherlands (e-mail: \{l.pedroso,m.heemels,m.r.u.salazar\}@tue.nl).}
		\thanks{A.~Agazzi is with the Department of Mathematics and Statistics, University of Bern, Switzerland (e-mail: andrea.agazzi@unibe.ch).}}
	\maketitle
	\vspace{-0.4cm}
	
	\begin{abstract} 
		Self-interested behavior in sharing economies often leads to inefficient aggregate outcomes compared to a centrally coordinated allocation, ultimately harming users. Yet, centralized coordination removes individual decision power. This issue can be addressed by designing rules that align individual preferences with system-level objectives. Unfortunately, rules based on conventional monetary mechanisms introduce unfairness by discriminating among users based on their wealth. To solve this problem, in this paper, we propose a token-based mechanism for congestion games that achieves efficient and fair dynamic resource allocation. Specifically, we model the token economy as a continuous-time dynamic game with finitely many boundedly rational agents, explicitly capturing their evolutionary policy-revision dynamics. We derive a mean-field approximation of the finite-population game and establish strong approximation guarantees between the mean-field and the finite-population games. This approximation enables the design of integer tolls in closed form that provably steer the aggregate dynamics toward an optimal efficient and fair allocation from any initial condition.
	\end{abstract}

	\begin{IEEEkeywords}
		Mechanism design, Sharing economy, Token economy, Congestion games, Mean field games
	\end{IEEEkeywords}

\vspace{-0.1cm}
\section{Introduction}\label{sec:introduction}

Recent advances in the internet of things and connectivity have fueled the rise of sharing economies. In such systems, users compete for access to shared resources (e.g., mobility infrastructures, cloud computing services, and electrical power). To maximize the societal value of these resources, it is essential to design principled rules for their allocation. This challenge falls under the umbrella of mechanism design~\cite{ChremosMalikopoulos2024}, and has recently been identified as a pressing societal concern for the control community~\cite{AnnaswamyJohanssonEtAl2023}.  In this work, we are particularly interested in a resource allocation problem setting of a population of users that want to repeatedly choose a set of available resources to meet their needs (e.g., choosing a path of roads to drive between two points). A user’s reward resulting from selecting a set of resources depends solely on the number of users using those resources (e.g., the travel time increases with the congestion of the road links). 

In these settings, each user selfishly chooses resources to maximize their own reward. It is well-established that such self-interested decisions generally lead to inefficient aggregate outcomes~\cite{Dubey1986}. Specifically, there exist centralized allocations that would increase the average reward across the population. A classic illustration of this inefficiency is Pigou’s two-arc network~\cite[Chap.~17]{NisanRoughgardenEtAl2007}. More broadly, inefficiencies persist even when societal costs differ from average reward. While a central allocation could eliminate inefficiencies, it relies on an authoritarian paradigm and fails to accommodate users’ heterogeneous and time-varying needs. This naturally raises the following question: Can we design mechanisms that align selfish user behavior with societally optimal outcomes? Over the past 70 years, this question has been extensively studied, with most solutions relying on monetary pricing mechanisms~\cite[Chap.~18]{NisanRoughgardenEtAl2007}.

However, monetary approaches embody a profit-driven logic and are often unsuitable in contexts where equitable access, rather than profit, is the priority (e.g., in mobility services). In such scenarios, monetary tools are inherently unfair because they disadvantage low-income users. This motivates the need to balance efficiency with fairness. In single-shot allocation problems, these two objectives conflict, since at the societal optimum some users inevitably face lower rewards than others. In contrast, repeated allocation settings offer an opportunity to reconcile efficiency and fairness through turn-taking: users can alternate in bearing the lower reward, ensuring a fair distribution over time.

A mechanism that induces such pivotal turn-taking behavior was recently introduced in~\cite{CensiBolognaniEtAl2019}. It is based on tokens that are non-tradable, decoupled from real money, and that users spend and earn by accessing the resources. Each user has a wallet of tokens and may only choose sets of resources they can afford. Building on this principle, several studies have explored token-based incentive schemes, mainly focusing on auction-based mechanisms \cite{JalotaPavoneEtAl2020,ElokdaCenedeseEtAl2022,ElokdaBolognaniEtAl2023}, which require users to submit bids every time they request access to a set of resources. This can induce decision fatigue and leaves users uncertain about whether they will obtain the desired resource or be outbid. Our approach deviates significantly from such mechanisms: we propose simple token payment transactions where each set of resources has a fixed token toll. This eliminates the need for bidding and grants users full decision freedom, provided they have a sufficient amount of tokens to pay for the chosen set of resources. Preliminary work \cite{SalazarPaccagnanEtAl2021,PedrosoHeemelsEtAl2023KarmaParallel,PedrosoAgazziEtAl2024EqtEql} has focused on the problem of designing token tolls in simple settings. In \cite{SalazarPaccagnanEtAl2021} and \cite{PedrosoAgazziEtAl2024EqtEql} the two-resource setting is addressed and a closed-form expression for the optimal tolls is derived. In \cite{PedrosoHeemelsEtAl2023KarmaParallel} a generic number of resources is considered, but the user's choices are limited to using one resource at a time, which is not suitable to network congestion games, for instance. Furthermore, the toll design procedure therein becomes intractable as the number of resources increases. In this work, for the first time in the literature, we address a generic resource allocation setting with multiple resources whereby the agents can choose to use a subset of the resources to satisfy their needs. We derive a closed-form expression for optimal tolls and show global convergence to the system optimum.

\subsection{Statement of Contributions}

The main contributions of this paper are threefold:

\begin{itemize}
 \item We model the token economy as dynamic game of a finite population of agents whose decision dynamics are described by a family of evolutionary models \cite{Sandholm2010}. Evolutionary models (initially introduced in \cite{SmithPrice1973,Smith1982}) are a powerful tool to analyze games beyond the classical concept of Nash equilibria by softening assumptions of rationality and knowledge of
 the game and of the equilibrium by agents.
 \item We introduce a mean-field approximation of the token economy model and establish that it is a good approximation of the finite-population model. We show that a notion of equilibrium exists for the mean-field model and that it is essentially unique, i.e., the resource utilization flows of all equilibria are unique. Furthermore, we show that the state of the token economy converges close to the set of equilibria from any initial condition.
 \item We obtain a closed-form solution for optimal integer tolls that drive the state of the token economy close to the system optimum in terms of fairness and efficiency.
\end{itemize}

To summarize, for the first time in the literature, we propose a design procedure for a token economy on a generic resource allocation setting. We show that, under the designed tolls, the system optimum is an equilibrium, and for any initial condition such an equilibrium will be reached.

%

\subsection{Notation}	
For $N\in \N$, the set of consecutive positive integer numbers $\{1,2,\ldots, N\}$ is denoted by $[N]$.
The $i$th entry of a vector $x\in \R^n$ is denoted by $x_i$. The Euclidean norm of a vector $x\in \R^n$ is denoted by $\|x\|$. The $n$-dimensional vector of zeros and ones are denoted by $\zeros_{n}$ and $\ones_n$, respectively. Alternatively, $\zeros$ and $\ones$ denote the vectors of zeros and ones of appropriate dimensions, respectively. The sign of $x\in \R$ is denoted by $\sign(x)$ and takes the values of $-1$, $0$, or $1$ if $x<0$, $x=0$, or $x>0$, respectively. The column-wise concatenation of a finite number of vectors $x^1, x^2, \ldots, x^K$ is denoted by $\col(x^1, x^2, \ldots, x^K)$. The indicator function of $a\in \Xcal$ is denoted by $\delta_a:\Xcal \to \{0,1\}$ and $\delta_a(x) = 0$ if $x\neq a$ and $\delta_a(x) = 1$ if $x = a$. The support of a function $f:\Xcal\to \R$ is denoted by $\supp(f):=\{x\in \Xcal: f(x)\neq 0\}$. 
The interior of a set $\Acal$ is denoted by $\mathrm{int}(A)$ and its cardinality by $|\Acal|$. Given sets $\Xcal_1,\Xcal_2,  \ldots,  \Xcal_K$,  the Cartesian product $\Xcal_1 \times \Xcal_2 \times \cdots \times \Xcal_K$ is denoted by $\bigtimes_{k= 1}^{K}\Xcal_k$. The expected value of a random variable (r.v.) $Z$ is denoted by $\EV[Z]$. The set of all Borel probability measures on $\Acal$ is denoted by $\Pcal(\Acal)$ and the probability of an event $A$ is denoted by $\mathbb{P}(A)$. Given a probability measure $\eta  \in \Pcal(\Acal)$, the mass on $a\in \Acal$ is denoted by $\eta(a)$.
In this paper, to characterize the distribution of total mass of a population of mass $m>0$ over elements of a finite set $\Acal$ we use vectors $\mu \in X_{\Acal}:= \{ \nu \in  \Rnn^{|\Acal|} : \ones^\top \nu = m\}$. For the sake of clarity, by abuse of notation, the mass on $a\in \Acal$ is denoted by $\mu[a]$ and the mass on a subset $\Bcal\subseteq \Acal$ is denoted by $\mu[\Bcal]:= \sum_{a\in \Bcal}\mu[a]$.



\section{Token Economy Model}\label{sec:model}

In this section, we introduce the model of a token economy as a continuous-time dynamic game of (finitely) many agents, where each agent has limited rationality and knowledge. 


\subsection{Finite-Population Dynamic Congestion Game Model}\label{sec:finite_model}

The token economy is modeled as a continuous-time dynamic congestion game described by:
\begin{itemize}
	\item \emph{Population}: Consider a population of $N$ agents. There are $C$ classes of agents with similar needs. We denote the class of a player $i\in [N]$ by $c^i \in [C]$. The set of players that are in each class $c\in [C]$ is denoted by $\Ccal_c :=\{i\in [N]: c^i= c\}$.  The mass of players in class $c\in \Ccal$ is denoted by $m^c := |\Ccal_c|/N$.
	
	\item \emph{Time}: Each player uses the resources in continuous time. Each player $i\in \Ccal_c$ is equipped with a Poisson clock with rate $\Rd>0$. Each time the clock of a player rings, they want to use shared resources to meet their needs. We denote the time of the $k$-th clock ring of a player $i\in [N]$ by a r.v.\ $t^i_k$.
	
	\item \emph{Resources}: There are $\!\Rsf\!$ resources. When the clock of an agent rings they can choose a set of the resources, which is called an action, to satisfy their needs. The set of actions that satisfy the needs of agent $i\!\in\! \Ccal_c$ is denoted by $\Acal^c \!\subseteq \!2^{[\Rsf]}$, where $ 2^{[\Rsf]}$ is the power set of $[\Rsf]$ excluding the empty set, i.e., the set of all combinations of elements of $[\Rsf]$. The set of all actions is denoted by $\Acal\! := \cup_{c\in [C]} \Acal^c$. The set of actions available to agents in class $c\!\in \![C]$ that use a resource $r\!\in\! [\Rsf]$ is denoted by $\Acal^c_{[\Rsf]}(r)\!\subseteq \!\Acal^c$.
	
	\item \emph{States}: At each time $t$, each player $i\in [N]$ is characterized by a state that is the amount of tokens they possess. It is characterized by a r.v.\ $k^i(t)$, which is an integer in $\Kcal := \{0,1,\ldots,\bar{k}\}$, where $\bar{k} \in \N$ is a maximum amount of tokens allowed.
	
	\item \emph{Tolls}:  A generic tolling mechanism for each class $c\in [C]$ is a map $\tau^c:\Acal^c \to \Z$ that associates each action $a\in \Acal^c$ with an integer token toll $\tau^c(a)$. The collection of toll maps of all classes is denoted by $\tau:= (\tau^c)_{c\in [C]}$.
	
	
	\item \emph{Actions}: When the clock of player $i \in \Ccal_c$ rings at time $t$, they choose an action $a \in \Acal^c$ that they can afford, i.e., $a\in \Acal^c(k^i(t)) := \{a\in \Acal^c : \tau^c(a) \leq k^i(t)\}$. The action that each agent $i\in [N]$ would take at time $t$ if their clock were to ring is characterized by a r.v.\ $a^i(t)$. We also define $\Acal := \cup_{c\in[C]}\Acal^c$, $q^c := |\Acal^c|$, and $q:= \sum_{c\in [C]}q^c$.

	\item \emph{State transitions}: When the clock of a player $i \in \Ccal_c$ rings and they choose an action $a$, their state is updated according to a Markov transition kernel. Specifically, their amount of tokens evolves according to a deterministic Markov kernel $\phi_\Kcal^c: \Kcal \times  \Acal^c \to  \Pcal(\Kcal)$ defined as
	\begin{equation}\label{eq:kernel_K}
		\phi^c_\Kcal(k,a) = 	\begin{cases}
		\delta_{k-\tau^c(a)}(\cdot), & k-\tau^c(a) < \bar{k}\\
		\delta_{\bar{k}}(\cdot), & k-\tau^c(a) \geq \bar{k}.
		\end{cases}
	\end{equation}

	\item \emph{State-action distribution}: The empirical joint state-action distribution of class $c\in [C]$ at time $t$ is characterized by a measure-valued r.v.\ $\hat{\mu}^c_{\Kcal \times \Acal}(t)$ with support in $X^c_{\Kcal \times \Acal}:=\{\nu \in \Rnn^{(\bar{k}+1)q^c} : \ones^\top\nu = m^c\}$. Recall that, by abuse of notation, $\hat{\mu}^c_{\Kcal \times \Acal}[k,a](t)$ is the r.v.\ associated with the mass on $k\in \Kcal$ and $a\in \Acal^c$ and it is defined as the empirical measure $\hat{\mu}^c_{\Kcal \times \Acal}[s,a](t) := \frac{1}{N}\sum_{i\in \Ccal_c} \delta_{s^{i}(t)}(s) \delta_{a^{i}(t)}(a)$. The concatenation of the empirical joint state-action distributions for all classes is denoted by $\hat{\mu}_{\Kcal \times \Acal} = \col(\hat{\mu}^c_{\Kcal \times \Acal}, c\in [C])$ with support in $X_{\Kcal \times \Acal} := \bigtimes_{c\in [C]}X^c_{\Kcal \times \Acal}$. Notice also the the empirical normalized rate at which agents choose each resource $r\in [\Rsf]$ is given at time $t$ by
	\begin{equation}\label{eq:hat_sigma}
		\hat{\sigma}_r(t) = \Rd \sum_{c\in [C]} \sum_{k\in \Kcal}\sum_{a\in \Acal_{[\Rsf]}^c(r)}\hat{\mu}^c_{\Kcal\times \Acal}[k,a](t).
	\end{equation} 
	We also define $\hat{\sigma}(t) := \col(\hat{\sigma}_r(t), r \in [\Rsf])$.

	\item \emph{Single-stage reward}: Each resource $r\in [\Rsf]$ is associated with a function $w_r:\R \to \R$ that characterizes the reward of using resource $r$ based on the rate of agents using it. 
	Moreover, the single-stage reward of taking an action is modeled by a real-valued function $w:  \Acal \times \R^{\Rsf}  \to \R$ and corresponds to the summation of the rewards of the resources that the action uses. Specifically, the single-stage reward of an agent that takes action $a\in \Acal$ at time $t$ is
	\begin{equation*}
		w(a,\hat{\sigma}(t)) = \sum\nolimits_{r\in a} w_r(	\hat{\sigma}_r(t)).
	\end{equation*}
	
	\item \emph{Payoff}: The payoff of an agent $i\in \Ccal_c$ is modeled as  the long-time average reward, which is given by
	\begin{equation}\label{eq:def_Ji}
		J^i := \lim_{T\to \infty}\frac{1}{T}\EV\left[\sum_{k=1}^T w(a^i(t^i_k),\hat{\sigma}(t^i_k))\right].
	\end{equation}%
 \end{itemize}

%
%

\begin{definition}\label{def:cong_network}
	The congestion game is said to be a \emph{network congestion game} if there is a directed graph $\Gcal$ whose edges are the resources $[\Rsf]$, each class is associated with an origin and a destination node, and the actions available to satisfy the needs of each class are the set of all paths that connect the origin to the destination nodes.
\end{definition}



\vspace{-0.4cm}

\subsection{Information Structure}

The information structure of the token economy model defines the information available to each agent during the game. Then, given the information available to an agent they choose one action. We call the map from known information by an agent to their choice of action a \emph{policy}. We model the information structure of the token economy game as:
\begin{itemize}
	\item Each agent \emph{does not have access to aggregate information} of the distributions of the amount of tokens of other agents. Indeed, their amount of tokens is tracked by the operator of the resources but not disclosed publicly.
	\item Each agent does not keep an history of their past token amounts or their past actions. At each clock ring an agent knows only their current amount of tokens. 
	\item Each agent cannot predict how payoffs will evolve in the future. When they choose a map from information to a decision, they plan to use it forever.
\end{itemize}
Policies that are consistent with the above information structure are said to be \emph{oblivious}, \emph{Markov}, and \emph{stationary}, respectively. Such policies are formally characterized, for each class $c\in[C]$, by a map from $\Kcal$ to $\Pcal(\Acal^c)$, i.e., from the amount of tokens of the agent when the clock rings to a randomization of actions. The set of policies that are consistent with the above information structure and the action constraints of an agent in class $c\in [C]$ is given by
\begin{equation*}
	\Ucal^c := \left \{ u:\Kcal \to \Pcal(\Acal^c)  \;| \; \supp(u(k)) \subseteq \Acal^c(k) \; \forall k\in \Kcal \right\}.
\end{equation*} 
In general, a policy in $\Ucal^c$ is randomized in the sense that it maps the individual state of the agent to a randomization of available actions. A degenerate case are deterministic policies, which map the states to a single action with probability one. The set of deterministic policies available to an agent in class $c\in [C]$ is defined as
\begin{equation*}
	\Ucal^c_D := \left \{ u \in \Ucal^c \; |\; \forall k\in \Kcal\; \exists a\in \Acal^c(k): \supp(u(k)) = \{a\}\right\}.
\end{equation*}
By abuse of notation, whenever clear from the context, we also write a deterministic probability measure  $u(k) = \delta_a(\cdot)$ of a deterministic policy $u\in \Ucal^c_D$ as $u(s) = a$.


There is considerable debate on whether randomized policies are meaningful in real-life applications (see \cite[Chap.~3.2]{Rosenthal2006} for an insightful discussion). To characterize the agents decisions in a token economy, we consider that at each time $t$ each agent $i\in \Ccal_c$ uses a deterministic policy in $\Ucal_D^c$ that is characterized by the r.v.\ $u^i(t)$. This means that an agent using a policy $u\in \Ucal_D^c$ will choose the same set of resources to satisfy their needs in any circumstances where their amount of tokens is the same. In Section~\ref{sec:ev_dynamics}, we describe a model on how the policy chosen by each agent evolves in time. 

\vspace{-0.4cm}

\subsection{Regularity Assumptions}\label{sec:model_assumptions}

In what follows, we introduce mild regularity conditions on the model that are key for the prescriptive and descriptive analysis of Sections~\ref{sec:descriptive} and~\ref{sec:design}. 

\begin{assumption}\label{ass:affordable_prices}
	For all $c\in [C]$, $\min_{a\in \Acal^c}\tau^c(a) \leq 0$.
\end{assumption}

\begin{assumption}\label{ass:cont}
	For all $r\in [\Rsf]$ the reward function $w_r: \R \to \R$ is Lipschitz continuous and uniformly decreasing.
\end{assumption}

These assumptions impose regularity conditions on the tolls and on the reward functions of the resources. The former is necessary to ensure that agents have always access to affordable actions. The latter is necessary for the existence and uniqueness of equilibria and of solutions to the mean-field game evolution that is modeled in Section~\ref{sec:descriptive}.

Second, we argue that some deterministic policies in $\Ucal_D$ are senseless and, as a result, an agent would not even consider using them. Specifically, \emph{it would be senseless for an agent to choose cheaper actions as their amount of tokens increases}, i.e., it is senseless for agent $i\in \Ccal_c$ to choose $u(k_1) = a_1$ with toll $\tau(a_1)$ when they have $k_1$ tokens and then to choose $u(k_2) = a_2$ with toll $\tau(a_2) < \tau(a_1)$ when they have $k_2>k_1$ tokens. Indeed, on the one hand, if choosing $a_1$ has higher expected long-time average reward than choosing $a_2$ with $k_1$ tokens and $\tau(a_2) < \tau(a_1)$ that is also the case with $k_2>k_1$ tokens, so the agent would use $a_1$ again when they have $k_2$ tokens. On the other hand, if choosing $a_2$ has higher reward than $a_1$ with $k_2$ tokens and $\tau(a_2) <\tau(a_1)$, then the agent would also use $a_2$ when they have $k_1$ tokens. As a result, henceforth, we discard such policies as formally presented in the following assumption.

\begin{assumption}\label{ass:senseless_policies}
	For all $c\in [C]$ an agents $i\in \Ccal_c$ can only choose policies in $\bar{\Ucal}_{D}^c$, where
	\begin{equation*}
		\bar{\Ucal}_{D}^c \!:=\! \left\{ u\!\in\! \Ucal_D^c\,|\, \forall k,k^\prime \!\in\! \Kcal : k^\prime \!>\! k\! \implies\! \tau(u(k^\prime)) \!\geq \!\tau(u(k))  \right\}\!.\!
	\end{equation*}
\end{assumption}


We also define $n^c := |\bar{\Ucal}_{D}^c|$ and $n:= \sum_{c\in [C]}n^c$. Under Assumption~\ref{ass:senseless_policies}, the empirical joint state-policy distribution of class $c\in [C]$ is characterized by a r.v.\ $\hat{\mu}^c(t)$ whose support in $X^c :=\{\nu \in \Rnn^{(\bar{k}+1)n^c} : \ones^\top\nu = m^c\}$, which, by abuse of notation, is given by $\hat{\mu}^c[k,u](t) := \frac{1}{N}\sum_{i\in \Ccal_c}  \delta_{k^{i}(t)}(s) \delta_{u^{i}(t)}(u)$ for all $k\in \Kcal$ and all $u\in \UDcbar$.  The concatenation of the empirical joint state-policy distributions for all classes is denoted by $\hat{\mu} = \col(\hat{\mu}^c, c\in [C])$ with support in $X:= \bigtimes_{c\in [C]}X^c$. Define map $\sigma: X \to \Rnn^\Rsf$ that expresses that resource utilization rates as a function of state-policy distributions $\mu \in X$ as
\begin{equation*}
	\sigma_r(\mu) =  \Rd\sum_{c\in [C]}\sum_{k\in \Kcal}\sum_{a\in \Acal_{[\Rsf]}^c(r)}\sum_{u\in \UDcbar} \mu^c[k,u]u(a|k).
\end{equation*}
for all $r\in [\Rsf]$. Interestingly, Assumption~\ref{ass:senseless_policies} will not take an explicit role in the analysis of Sections~\ref{sec:descriptive} and~\ref{sec:design}. Instead, its main purpose is to support the validity of Assumption~\ref{ass:noise}, which is introduced in what follows. This aspect is analyzed further in Example~\ref{eg:effect_noise}.

Third, a key property to significantly simplify the analysis of the game is that the long-time average reward of an agent does not depend on their initial amount of tokens. This can be achieved under another assumption that requires that the Markov chains of the transitions of the amount of tokens have a unique recurrent communicating class. However, that is not generally possible under the deterministic Markov kernel in \eqref{eq:kernel_K}. To overcome that, we assume that such transitions have small stochastic noise. Specifically, we consider that the operator of the resources maintains a noise Poisson clock with rate $\Rno \ll \Rd$ associated with each agent. Each time the noise clock of an agent rings, the agent is gifted one token unit. Formally, the transition kernel of class of regularization noise Markov chain $	\phi_{n}(k) : \Kcal \to \Pcal(\Kcal)$ is characterized for all $k\in \Kcal$ by
\begin{equation}\label{eq:kernel_noise}
	\phi_{n}(k) = 	\begin{cases}
		\delta_{k+1}(\cdot), & k < \bar{k}\\
		\delta_{\bar{k}}(\cdot), & k = \bar{k}.
	\end{cases}
\end{equation}

The addition of noise is a powerful regularization technique often found in the literature \cite{FreidlinWentzell1998,BorkarSowmyaEtAl2025}. In this case in particular, the aforementioned regularization noise model has a physical interpretation and can actually be implemented. For a detailed overview of elementary Markov chain analysis tools used in this paper see \cite{Norris1997}. This assumption is formally presented below.

\begin{assumption}\label{ass:noise}
	Each agent is associated with a noise Poisson clock of rate $\Rno < \Rd/4$. Each time it rings, the amount of tokens of the agents evolves according to the Markov transition kernel $\phi_{n}$ defined in \eqref{eq:kernel_noise}.
\end{assumption}

Notice that for any agent $i\in \Ccal_i$ that uses any policy $u\in \Ucal_D^c$ the continuous-time Markov chain of the amount of tokens is characterized by two jump chains: (i)~one associated with the use of the resources, i.e., $\phi_{\Kcal}^{c,u}(k) = \sum_{k^\prime \in \Kcal}\sum_{a\in \Acal^c(k^\prime)} \phi^c_{\Kcal}(k^\prime|k,a)u(a|k)\delta_{k^\prime}(\cdot)$, which has rate $\Rd$; (ii)~and another associated with the regularization noise, which is characterized by $\phi_n$ in \eqref{eq:kernel_K} and has rate $\Rno$. By the memoryless property of Poisson processes, the continuous-time Markov chain of the amount of tokens is equivalently described by a single jump chain characterized by 
\begin{equation}\label{eq:equiv_jump_chain_noise}
	\phi^{c,u} = \frac{\Rd}{\Rd+\Rno}\phi_{\Kcal}^{c,u}+\frac{\Rno}{\Rd+\Rno} \phi_n,
\end{equation}
with rate $\Rno+\Rd$. The collection of Markov chains of all classes and all policies is denoted by $\phi:=(\phi^{c,u})_{c\in [C], u\in \UDcbar}$. In the following result, we establish two key properties that follow from Assumption~\ref{ass:noise}.	

\begin{lemma}\label{lem:F_def}
	Under Assumption~\ref{ass:noise}, the continuous-time Markov chain of the amount of tokens of an agent $i\in \Ccal_c$ that uses a policy $u\in \Ucal^c_D$, whose jump chain is defined in \eqref{eq:equiv_jump_chain_noise},  admits a unique stationary distribution denoted by $\eta^{c,u} \in \Pcal(\Kcal)$.
	Moreover, for any fixed state-policy distribution $\mu \in X$, under Assumption~\ref{ass:cont}, the long-time average reward of an agent $i\in \Ccal_c$ using policy $u \in \UDcbar$ does not depend on the initial state and can be expressed as
	\begin{equation}\label{eq:Ji_avg}
		J^i(\mu) =  \sum_{k\in \Kcal} \sum_{a\in \Acal^c} \eta^{c,u}(k)u(a|k)w(a,\sigma(\mu)).
	\end{equation}
\end{lemma}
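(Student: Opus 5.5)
The proof has two parts: first show that the token chain has a unique recurrent communicating class, then apply the ergodic theorem for finite-state continuous-time Markov chains.

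\emph{Uniqueness of the stationary distribution.} The state space $\Kcal$ is finite. A finite continuous-time Markov chain has a unique stationary distribution if and only if it has exactly one closed communicating class. I will show that $\bar{k}$ is reachable from every $k\in\Kcal$. This follows from the noise component of the jump chain \eqref{eq:equiv_jump_chain_noise}. With probability $\Rno/(\Rd+\Rno)>0$ each jump applies $\phi_n$, so from any $k<\bar{k}$ the chain moves to $k+1$ with positive probability. After at most $\bar{k}$ such jumps it reaches $\bar{k}$. Since $\bar{k}$ is reachable from everywhere, every closed class must contain $\bar{k}$. Closed classes are disjoint, so there is exactly one, namely the communicating class of $\bar{k}$; all other states are transient. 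The stationary distribution $\eta^{c,u}$ is therefore unique. It is supported on that class and satisfies $\eta^{c,u}\phi^{c,u}=\eta^{c,u}$ (viewing $\phi^{c,u}$ as a row-stochastic matrix). Moreover, the stationary law of the continuous-time chain equals that of its jump chain, because all holding rates equal $\Rd+\Rno$. The bound $\Rno<\Rd/4$ is not needed here; only $\Rno>0$ matters.

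\emph{Long-time average reward.} Fix $\mu$, so that $\sigma(\mu)$ is constant and $w(a,\sigma(\mu))$ is a fixed bounded function of $a$. Assumption~\ref{ass:cont} gives continuity, and finiteness of $\Acal$ gives boundedness. Consider the agent's state sampled at its decision clock rings $t^i_k$, and define the per-ring reward $g(k):=\sum_a u(a|k)w(a,\sigma(\mu))$. The expected reward at ring $k$ is $\EV[g(k^i(t^i_k))]$.

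By Poisson thinning, the decision rings are a rate-$\Rd$ subset of the rate-$(\Rd+\Rno)$ jumps, chosen independently of the state. PASTA (Poisson arrivals see time averages) then implies that the state seen just before a decision ring has time-average distribution equal to the stationary law $\eta^{c,u}$. Equivalently, the embedded chain has time-average occupation $\eta^{c,u}$, and a fraction $\Rd/(\Rd+\Rno)$ of its steps are decision steps, sampled independently of the state.

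The ergodic theorem for finite chains with a single recurrent class, $\frac{1}{n}\sum_{m\le n} f(X_m)\to \sum_k \eta(k)f(k)$ almost surely for every initial state, applies here. Combined with bounded convergence, this yields \eqref{eq:Ji_avg}, independently of the initial state. The sum in \eqref{eq:def_Ji} counts rings and is normalized by $T$, so normalization by time versus by count differs only by the constant factor $\Rd$. I will read the definition as the per-ring average, so that the result matches \eqref{eq:Ji_avg}, and remark on this.

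\emph{Main obstacle.} The delicate step is justifying that the states seen at decision rings are distributed, on average, according to $\eta^{c,u}$, rather than according to the state just after a jump. I will handle it by working with the merged jump chain and noting that whether each jump is a decision or a noise event is an independent coin flip. The occupation of $k$ among decision steps is therefore asymptotically the same as among all steps, namely $\eta^{c,u}(k)$, and the reward is evaluated at the pre-jump state.
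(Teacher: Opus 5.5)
Your proposal is correct and follows essentially the same route as the paper. Noise-driven upward moves leave a single closed (recurrent) class, which gives a unique stationary distribution; the ergodic theorem for finite chains plus bounded/dominated convergence then yields \eqref{eq:Ji_avg} independently of the initial state. Your independent-thinning/PASTA argument for the states seen at decision rings makes explicit a step the paper leaves implicit. Also, \eqref{eq:def_Ji} already averages over the first $T$ rings, so the per-ring reading you adopt is simply the literal definition and needs no reinterpretation.
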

\begin{proof}
	See Appendix~\ref{sec:proof_lem_F_def}.
\end{proof}

Since there is a finite number of policies for each class, define for all $c\in [C]$ a payoff map $F^c: X\to \R^{|\UDcbar|}$ resorting to \eqref{eq:Ji_avg} as 
	\begin{equation}\label{eq:def_F}
	\!\!F^c(\mu) \!= \!\col\!\left(\sum_{k\in \Kcal} \sum_{a\in \Acal^c} \!\eta^{c,u}(k)u(a|k)w(a,\sigma(\mu)),u\!\in\! \UDcbar\!\right)\!\!.\!
	\end{equation}
	For the sake of clarity, by abuse of notation, we denote the component associated with policy $u\in \UDcbar$ by $F^c_u(\mu)$. We also write the concatenation of the payoff maps of all classes as a payoff map $F^c: X\to \R^{|\UDbar|}$ given by $F(\mu) = \col(F^c(\mu), c\in [C])$. Notice that the model can be fully characterized by the pair $(F,\phi)$.

\begin{example}\label{eg:effect_noise}
	\begin{figure}[b]
		\centering
		\includegraphics[width = 0.7\linewidth]{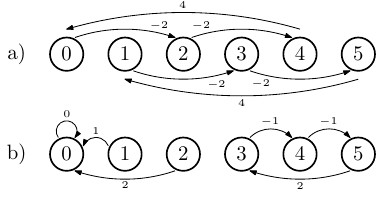}
		\caption{Illustrative token transition chains: Nodes represent the token amounts, edges represent the transition after taking an action and are labeled with the corresponding toll value.}
		\label{fig:illustrative_ass_noise}
	\end{figure}
	In this example, we illustrate why Assumption~\ref{ass:noise} is reasonable and why it is necessary to establish Lemma~\ref{lem:F_def}. There are two distinct reasons why the deterministic token dynamics $\phi^c_\Kcal$ may lead to non-unique steady-state distributions. On the one hand, consider the case where all tolls are even. Then, as illustrated in Fig.~\ref{fig:illustrative_ass_noise}(a), according to the deterministic token transitions of a policy in $\Ucal^c_D$, an agent that starts with an even (odd) amount of tokens will always remain with an even (odd) amount of tokens. As a result, depending on the parity of the initial amount of tokens of an agent, the stationary state distribution will vary. In this case, noise mixes the even and odd chains leading to a unique stationary token amount distribution.
	On the other hand, as illustrated in Fig.~\ref{fig:illustrative_ass_noise}(b), it can also happen that the map between the amount of tokens and actions of a policy in $\Ucal^c_D$ may induce two or more distinct communicating classes that are not intertwined. Notice that, contrarily to the case of Fig.~\ref{fig:illustrative_ass_noise}(a), the policy of Fig.~\ref{fig:illustrative_ass_noise}(b) is \emph{senseless} in the context of Assumption~\ref{ass:senseless_policies}. Moreover, in Fig.~\ref{fig:illustrative_ass_noise}(a) the two chains have identical payoff and their mixing is fast. But that is not the case for Fig.~\ref{fig:illustrative_ass_noise}(b), where the two chains have potentially significantly distinct payoff and will only mix very slowly (since $\Rno \ll \Rd$). For these reasons, in the case of Fig.~\ref{fig:illustrative_ass_noise}(b), even though the regularization noise does lead to a unique recurrent communicating class, the payoff of an agent cannot be accurately be modeled as the long-time average payoff. However, for policies that satisfy Assumption~\ref{ass:senseless_policies}, the long-time average payoff is a reasonable modeling choice.
\end{example}

\vspace{-0.3cm}

\subsection{Evolutionary Decision Model}\label{sec:ev_dynamics}

Thus far we modeled the way an agent chooses an action depending on their amount of tokens making use of the concept of a policy. However, we did not make any considerations on how an agent chooses a policy and how they may switch policies as the aggregate decisions change with time. In this section, we address this question making use of an evolutionary model. Evolutionary models have been thoroughly analyzed for static games, where the rewards do not depend on the evolution of an individual state of each agent (e.g., \cite{Sandholm2010}). Those settings are also known as population games. The token economy model presented in this section is dynamic, therefore the proposed evolutionary model is based on a dynamic version recently proposed in \cite{PedrosoAgazziEtAl2025MFGAvg,PedrosoAgazziEtAl2025MFGAvgII}.

According to the proposed evolutionary model, each agent $i\in \Ccal_c$ uses a deterministic policy $u^i(t) \in \UDcbar$ at time $t$. During the game, agents, who have access to very limited information, will revise their policy choice and possibly switch to another policy. This is expressive of \emph{inertia} and \emph{myopia} properties of behavior seen in real-life. Recall that $\hat{\mu}(t)$ denotes the empirical joint state-policy distribution, and additionally we also denote the empirical marginal policy distribution of class $c\in [C]$ as $\hat{\mu}^c[\Kcal,\cdot](t)$, i.e., $\hat{\mu}^c[\Kcal,\cdot](t) := \col(\hat{\mu}^c[\Kcal,u](t); u\in \UDcbar)$ with $\hat{\mu}^c[\Kcal,u](t) = \sum_{k\in \Kcal}\hat{\mu}^c[k,u](t)$. Formally, the evolutionary model is characterized by
\begin{itemize}
	\item \emph{Time}: Each agent makes policy revisions in continuous-time. Each agent is equipped with a Poisson clock with rate $\Rr \ll \Rd $. Each time the revision clock of an agent rings they revise the policy that they are using. The clocks of different players are independent and have the same rate.  Action, noise, and revision clocks of agents are independent.
	
	\item \emph{Policy transitions}: Upon a revision opportunity of a player, their policy choice evolves according to a \emph{revision protocol}. Let $X^c_{\Ucal_D}:= \{\nu \in \Rnn^{n^c}: \ones^\top\nu = m^c\}$, a revision protocol of a class $c\in [C]$ is a map $\rho^c : \R^{n^c} \times X^c_{\Ucal_D} \to \Rnn^{n^c\times n^c} $, where the component associated with the pair $(u,v) \in \UDcbar \times \UDcbar$ is denoted, by abuse of notation, by $\rho^c_{uv}$. Specifically, a player using policy $u\in \UDcbar$ switches to policy $v\in \UDcbar$ with a switch rate $\rho^c_{uv}(F^c(\hat{\mu}),\hat{\mu}^c[\Kcal,\cdot])$, where $F^c(\hat{\mu})$ is defined in \eqref{eq:def_F} and the policy ordering of $F^c(\hat{\mu})$  and of $\hat{\mu}^c[\Kcal,\cdot]$ is consistent.
\end{itemize}
Intuitively, if agent $i\in \Ccal_c$ using policy $u\in \UDcbar$ receives a revision opportunity, they switch to a policy $v\in \UDcbar$ with probability $\rho^c_{uv}(F^c(\hat{\mu}),\hat{\mu}^c[\Kcal,\cdot])/\Rr$, and they continue to use the same policy with probability $1-\sum_{v\neq u}\rho^c_{uv}(F^c(\hat{\mu}),\hat{\mu}^c[\Kcal,\cdot])/\Rr$. We make an assumption to ensure that the aforementioned switching probabilities are well defined and continuous as follows.

\begin{assumption}\label{ass:rev_protocol}
	For all $c\in [C]$, the revision protocol $\rho^c$ is Lipschitz continuous and for all $u\in \UDcbar$
	\begin{equation*}
		\Rr \geq \max_{\mu \in X} \sum_{v\in \UDcbar \setminus \{u\}} \rho^c_{uv}(F^c(\mu),\mu^c[\Kcal,\cdot]).
	\end{equation*}
\end{assumption}

The revision protocols $\rho^c$ for $c\in [C]$ are characterized analogously to the literature of evolutionary dynamics for static games \cite{Sandholm2010}. Therein, physically meaningful families of Lipschitz revision protocols are defined, which are presented in what follows for the sake of completeness.

\begin{definition}[{Imitative via comparison revision protocol \cite[Chap.~5.4]{Sandholm2010}}]\label{def:imitative}
	Consider a revision protocol $\rho^c$ defined as $\rho^c_{uv}(F,x) = r^c_{uv}(F,x)x_v/m^c$, where  $r^c :\R^{n^c} \times X^c_{\Ucal_D} \to \Rnn^{n^c\times n^c}$ is a Lipschitz continuous conditional imitation rate map with monotone net conditional imitation rates, i.e., $F_v \geq F_u \iff r^c_{kv}(F,x) -   r^c_{vk}(F,x)  \geq  r^c_{ku}(F,x) -   r^c_{uk}(F,x), \forall F \in \R^{n^c}\;\forall x \in X^c_{\Ucal_D} \;\forall u,v,k \in \UDcbar$. Then,  $\rho^c$ is said to be an \emph{imitative} revision protocol.  \hfill$\triangle$
\end{definition}

\begin{definition}[{Pairwise comparison revision protocol \cite[Chap.~5.6]{Sandholm2010}}]\label{def:pairwise_cmp}
	Consider a revision protocol $\rho^c$ defined as $\rho^c_{uv}(F,x) = r^c_{uv}(F)$, where $r^c :\R^{n^c} \to \Rnn^{n^c}$ is a Lipschitz continuous rate map that is sign-preserving, i.e., $\sign(r^c_{uv}(F)) = \sign(\max(0,F_v-F_u)), \forall F
	\in \R^{n^c} \; \forall u,v \in \UDcbar$.
	Then, the revision protocol $\rho^c$ defined as $\rho^c_{uv}(F,x) = r^c_{uv}(F)$ is called a \emph{pairwise comparison} revision protocol. \hfill$\triangle$
\end{definition}

In \cite[Chap.~5.5]{Sandholm2010} a third family of revision protocols is defined. Nevertheless, it requires that each agent has access to the average payoff of the population. Since that is not physically meaningful in a token economy game, we disregard it in this setting.

\begin{remark}
	The families of revision protocols in Definitions~\ref{def:imitative} and~\ref{def:pairwise_cmp} have an intuitive interpretation. First, following an imitative revision protocol, when an agent is given a revision opportunity they choose another agent of the same class at random and imitate their policy with a probability that depends on its payoff. Second, following a pairwise comparison protocol, when an agent is given a revision opportunity they choose a policy available to their class at random and switch with nonzero probability if it strictly increases the agent's current payoff. Notice that information requirements of these families is very mild, since they require knowing, at most, the payoff of the current policy and a randomly chosen policy only.
\end{remark}

\section{Problem Statement}\label{sec:probelm_statement}

With the token economy model presented in Section~\ref{sec:model}, we are ready to state the token economy  design problem in this section.

\begin{mdframed}[style=callout]
	Intuitively, the goal is to design for all $c\in \Ccal$ a toll map $\tau^c: \Acal^c \to \Z$ such that the allocation of resources to the agents is \emph{efficient} and \emph{fair} as time evolves.
\end{mdframed}

In the next two sections, we formalize the concepts of efficiency and fairness in this context. Then, in Section~\ref{sec:design_probelm}, we formalize the design problem.

\subsection{Efficiency}
From the macroscopic system-level perspective, we consider that there is an efficiency metric associated with the aggregate use of the resources that we desire to maximize. Formally, it is characterized by a function $\Jeff: \Rnn^\Rsf\to \R$ of the normalized rate at which agents use the resources, i.e., a function of $\hat{\sigma}$ defined in \eqref{eq:hat_sigma}.

\begin{example}
	The efficiency metric can be for example the average reward of the population, i.e., $\Jeff(\sigma) = \sum_{r\in [\Rsf]}\sigma_r w_r(\sigma_r)$. Another meaningful welfare function is, for example, the negative of the total environmental impact of using the resources.
\end{example}

\subsection{Fairness}

First, the most fundamental fairness consideration in a token economy is that \emph{users with the same needs and the same impact on congestion have the same average reward in time, no matter other external factors such as wealth or social status}. This fairness condition is called intra-class fairness and for class $c\in [C]$ it is formalized resorting to the payoff definition in \eqref{eq:Ji_avg} as
\begin{equation}\label{eq:intra_class_fairness}
	J^i(\hat\mu(t)) = J^j(\hat\mu(t)) \quad \forall i,j \in \Ccal_c.
\end{equation}
Notably, the intra-class fairness condition is in line with recent equity-over-time distributive priciples of justice for mobility systems \cite{QiuReyEtAl2025}. 


Second, as discussed in \cite{PedrosoAgazziEtAl2024EqtEql}, since in a token economy users may have different needs (different action sets $\Acal^c$) and, hence, they contribute differently to the congestion of the resources, one can evaluate inter-class fairness from different perspectives. Moreover, these fairness considerations vary significantly depending on the application of the token economy. For example, in a mobility system many tailored inter-class fairness metrics have been recently proposed \cite{BangDaveEtAl2024,SalazarBetancurEtAl2024Acc,SalazarBetancurEtAl2025}. As a result, the characterization of specific inter-class fairness metrics is out of the scope of this work. Denote the average reward of class $c\in [C]$ by $\bar w_c$, which can be written as a function of a state-policy distribution $\mu \in X$ as 
\begin{equation*}
	\bar w_c(\mu) =  \frac{1}{m^c}\sum_{k\in \Kcal}\sum_{u\in \UDcbar} \sum_{a\in \Acal^c} \mu^c[k,u]u(a|k)w(a,\sigma(\mu)),
\end{equation*}
and its vector concatenation by $\bar w = \col(\bar w_1, \bar w_2,\ldots, \bar w_C)$. Henceforth, we assume the existence of an inter-class fairness metric that is characterized by a function  $\JF: \Rnn^C \to \R$ of $\bar w$, which is in line with the aforementioned literature in a mobility context. It may also be the case that inter-class fairness considerations impose constraints on the average reward of some classes. For simplicity, these constraints are henceforth ignored, although they can be readily incorporated into the proposed design approach if needed. 

\begin{example}
	The inter-class fairness metric that one desires to maximize can be the minimum average payoff across all classes, i.e., $\JF(\bar{w}) = \min_{c\in[C]} \bar w_c$.
\end{example}



\subsection{Design Problem}\label{sec:design_probelm}

Consider any combination of efficiency and inter-class fairness metrics characterized by $g(\Jeff(\sigma),\JF(\bar w))$ for some function $g$, which satisfies the following mild assumption.

\begin{assumption}\label{ass:J}
	The map $\mu \mapsto \,g(\Jeff(\sigma(\mu)),\JF(\bar w(\mu)))$ is continuous w.r.t.\ $\mu$ and bounded in $X$.
\end{assumption} 

We are interested in maximizing the long-time average combination of efficiency and inter-class fairness metrics, i.e., 
\begin{equation}\label{eq:def_J}
	J = \lim_{T\to \infty}\frac{1}{T}\EV\!\left[ \int_0^T \!\!g\big(\Jeff\big(\sigma(\hat{\mu}(t))\big),\JF\big(\bar w(\hat{\mu}(t))\big) \big) \dint t\right]\!\!.
\end{equation}
One can then state the design problem as follows.

\begin{problem}\label{prob:design}
	 Find a time-invariant toll map collection $\tau = (\tau^c)_{c\in [C]}$ that attains, from any initial condition:
	 \begin{enumerate}[(i)]
	 		\item Intra-class fairness i.e., $J^i = J^j$ for all $i,j \in \Ccal_c$ for all $c\in [C]$;
	 		\item Maximum chosen combination of efficiency and inter-class fairness $J$ defined in \eqref{eq:def_J}.
	 \end{enumerate}
\end{problem}

Intuitively, our goal is to design token tolls for using the resources such that: (i)~every agent in the same class perceives the same average payoff; and (ii)~average resource utilization rates of each class converge, from any initial condition, to a feasible maximizer of $g(\Jeff(\sigma(\mu)),\JF(\bar w(\mu)))$.

The approach to this problem is divided into two parts. First, in Section~\ref{sec:descriptive}, we take a \emph{descriptive} approach. Specifically, for any given toll collection $\tau$, we describe the evolution of the decisions of the agents and of the resource utilization rates. Second, in Section~\ref{sec:design}, we take a \emph{prescriptive} approach. Specifically, using the results of the descriptive analysis, we design $\tau$ to prescribe the goals of Problem~\ref{prob:design}.

\section{Descriptive Analysis}\label{sec:descriptive}

In this section, given any toll collection $\tau$ that satisfies Assumption~\ref{ass:affordable_prices}, we describe the evolution of the token economy. Specifically, we use an analogous mean-field model, which has strong approximation properties w.r.t.\ the finite-population model. Unlike the finite-population model, the mean-field model is analytically tractable and one can gather insight into the equilibria of the token economy and their stability. In Section~\ref{sec:mean_field_model} we propose an analogous mean-field model and establish approximation properties w.r.t.\ the finite-population model presented in Section~\ref{sec:finite_model}. In Section~\ref{sec:mf_equilibria} we study the equilibria of the analogous mean-field model and in Section~\ref{sec:mf_stability} we study the evolutionary stability of the equilibria.

\subsection{Mean Field Token Economy Model}\label{sec:mean_field_model}

In this section, we introduce the mean-field approximation of the finite-population model, where the population is infinite, each agent carries infinitesimal weight, and the payoff of each agent depends on the mean-field rather that the individual states and action of all the players. 

The Poisson clock model of action, noise, and revision times allows to describe the dynamics of the mean-field as an ordinary differential equation (ODE). At time $t$, we denote the joint state-policy distribution of class $c\in [C]$ by  $\mu^c(t) \in X^c$. The collection of joint state-policy distributions of all classes is denoted by $\mu(t) :=\col(\mu^c(t), {c\in \Ccal} )\in X$. The mean-field approximation of the token economy dynamic game with evolutionary dynamics is described for all $c\in [C]$, for all $ k\in \Kcal$, and all $u\in \UDcbar$ by
\begin{equation}\label{eq:ODE_mu_ev}
	\dot{\mu}^c[k,u]  = f_{k,u}^{c,d}(\mu) + 	f_{k,u}^{c,r}(\mu),
\end{equation}
where
\begin{equation}\label{eq:fd_fr}
	\begin{split}
		\!\!\!f_{k,u}^{c,d}(\mu) &=  (\Rd\!+\!\Rno) \!\left(\sum_{k^\prime \in \Kcal}\!\phi^{c,u}(k|k^\prime)\mu^c[k^\prime,u](t)\! -\! \mu^c[k,u]\!\right)\!\!\!\!\!\!\! \\
		\!\!\!f_{k,u}^{c,r}(\mu) &= \sum_{u^\prime \in \UDcbar}  \mu^c[k,u^\prime] \rho^c_{u^\prime u}(F(\mu),\mu^c[\Kcal,\cdot]) \\
		& - \mu^c[k,u] \sum_{u^\prime \in \UDcbar}  \rho^c_{uu^\prime}(F(\mu),\mu^c[\Kcal,\cdot]).
	\end{split}\!\!\!\!\!
\end{equation}
The ODE in \eqref{eq:ODE_mu_ev} follows from balances of the proportion of mass in each class, state, and policy in an infinitesimal interval. For more details on how to obtain \eqref{eq:ODE_mu_ev} from the finite-population model see \cite[Section~IV]{PedrosoAgazziEtAl2025MFGAvg}.
Under the regularity assumptions introduced in Section~\ref{sec:model_assumptions}, one can show that the mean-field dynamics are well-posed and that they approximate the dynamics of the finite-population as $N\to \infty$.

\begin{lemma}\label{lem:mfg_approx}
	Under Assumptions~\ref{ass:affordable_prices}--\ref{ass:rev_protocol}, the mean-field model of the token economy game, characterized by \eqref{eq:ODE_mu_ev}, with initial condition $\mu(0)\in X$ has a solution $\mu(t)$ defined in $t\in [0,\infty)$, which is unique and Lipschitz continuous w.r.t.\ $\mu(0)$. Furthermore, if $\lim_{N\to \infty}\hat{\mu}(0)= \mu(0)$ almost surely, then for all $T<\infty$ $\hat{\mu}(t)$ converges in probability to $\mu(t)$ for all $t\in [0,T]$ as $N\to \infty$, i.e., $\forall T,\epsilon >0$ $\lim_{N\to \infty} \mathbb{P}(\sup_{t\in [0,T]} \| \hat{\mu}(t)-\mu(t)\|\geq\epsilon) = 0$. 
\end{lemma}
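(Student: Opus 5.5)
The plan has two parts: well-posedness of the ODE \eqref{eq:ODE_mu_ev} and the finite-population approximation, which we obtain from Kurtz's theorem for density-dependent Markov jump processes. Both follow once we show that the vector field $f:=f^{d}+f^{r}$ in \eqref{eq:fd_fr} is Lipschitz continuous on $X$ and leaves $X$ forward invariant.

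\emph{Lipschitz continuity.} The term $f^{c,d}$ is linear in $\mu$ with fixed coefficients given by the kernels $\phi^{c,u}$, so it is globally Lipschitz. For $f^{c,r}$, we first check that $\mu\mapsto F^c(\mu)$ is Lipschitz. The map $\sigma$ is linear in $\mu$, each $w_r$ is Lipschitz by Assumption~\ref{ass:cont}, and the weights $\eta^{c,u}(k)u(a|k)$ do not depend on $\mu$: by Lemma~\ref{lem:F_def} (Assumption~\ref{ass:noise}), $\eta^{c,u}$ is the unique stationary distribution of a fixed chain. Hence $F^c$ is a finite sum of Lipschitz functions. The map $\mu\mapsto\mu^c[\Kcal,\cdot]$ is linear, so Assumption~\ref{ass:rev_protocol} makes $\mu\mapsto\rho^c(F^c(\mu),\mu^c[\Kcal,\cdot])$ Lipschitz on the compact set $X$, and therefore bounded there. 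Products of bounded Lipschitz functions on a compact set are Lipschitz, so $f^{c,r}$ is Lipschitz on $X$.

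\emph{Invariance and global existence.} Summing \eqref{eq:ODE_mu_ev} over $k,u$ gives zero for each class $c$: the $f^d$ terms cancel because $\phi^{c,u}$ is stochastic, and the $f^r$ terms cancel by exchanging $u$ and $u'$. So class masses are conserved. If $\mu^c[k,u]=0$, then $\dot\mu^c[k,u]\ge 0$, since only the nonnegative inflow terms remain. By a standard tangency (Nagumo) argument, $X$ is forward invariant. For a rigorous version we extend $f$ to a Lipschitz map on a neighborhood, for instance by composing with the projection onto $X$. Picard--Lindel\"of then gives a local solution, compactness of $X$ extends it to $[0,\infty)$, and Gr\"onwall's inequality gives uniqueness and $\|\mu(t)-\mu'(t)\|\le e^{Lt}\|\mu(0)-\mu'(0)\|$, i.e., Lipschitz dependence on $\mu(0)$ on compact time intervals.

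\emph{Approximation.} The vector $\hat\mu$ is a continuous-time Markov chain on $X\cap\frac1N\Z^{(\bar k+1)n}$. Its jumps have size $1/N$ and come in three kinds: action jumps at rate $N\Rd\hat\mu^c[k,u]u(a|k)$, noise jumps at rate $N\Rno\hat\mu^c[k,u]$, and revision jumps at rate $N\hat\mu^c[k,u]\rho^c_{uv}$. So it is a density-dependent family whose transition rates are $N\beta_\ell(\hat\mu)$. Its drift $\sum_\ell \ell\,\beta_\ell(\mu)$ coincides with $f(\mu)$ by construction. The one subtlety is that $F^c(\hat\mu)$ is built from the stationary payoffs of Lemma~\ref{lem:F_def}, and this is exactly how the revision rates are specified in the model, so the rates are Lipschitz functions of $\hat\mu$.

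We then follow Kurtz's theorem as in \cite{PedrosoAgazziEtAl2025MFGAvg}. Write $\hat\mu(t)=\hat\mu(0)+\int_0^t f(\hat\mu(s))\dint s+M_N(t)$, where $M_N$ is a martingale. Doob's inequality bounds $\EV[\sup_{t\le T}\|M_N(t)\|^2]$ by $O(T/N)$, using the bounded rates and jump size $1/N$. Gr\"onwall's inequality then gives
\[
\sup_{t\le T}\|\hat\mu(t)-\mu(t)\|\le \Big(\|\hat\mu(0)-\mu(0)\|+\sup_{t\le T}\|M_N(t)\|\Big)e^{LT}.
\]
Since $\hat\mu(0)\to\mu(0)$ almost surely and the martingale term vanishes in probability, the right-hand side converges to zero in probability, which is the claim.

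The main obstacle is making sure the rates are genuinely Lipschitz on the lattice. This requires controlling the noise and boundary effects at $\bar k$ and confirming that $\Rr$ dominates the switch rates, which Assumption~\ref{ass:rev_protocol} provides. The martingale estimate itself is routine.
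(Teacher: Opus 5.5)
Your proposal is correct and follows the paper's route. You show the vector field is Lipschitz and tangent to $X$, obtain existence and uniqueness from Picard--Lindel\"of on the compact convex set $X$, get Lipschitz dependence on $\mu(0)$ from Gr\"onwall, and get convergence in probability from Kurtz's theorem. The only difference is that you prove Kurtz's theorem directly (martingale decomposition, Doob's inequality, Gr\"onwall) instead of citing Sandholm's Theorems~10.2.1 and~10.2.3, and this also handles the almost-sure convergence of the random initial condition.
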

\begin{proof}
See Appendix~\ref{sec:proof_lem_mfg_approx}.
\end{proof}

\begin{mdframed}[style=callout]
	Intuitively, Lemma~\ref{lem:mfg_approx} states that, for a large population, which is reasonable in the context of a token economy, the evolution of the finite-population model, characterized by $\hat{\mu}(t)$, closely approximates that of the mean-field game, characterized by $\mu(t)$. Importantly, the mean-field model is deterministically governed by the ODE \eqref{eq:ODE_mu_ev}, which makes it analytically tractable.
\end{mdframed}

The evolutionary dynamics of the mean-field model proposed in this section resemble those of evolutionary dynamics models for population games~\cite{Sandholm2010}. However, crucially, the model in \eqref{eq:ODE_mu_ev} also captures state dynamics, which are coupled with the revision dynamics. Consequently, the extensive theory developed over the past two decades on the evolutionary game theory of population games~\cite{Sandholm2010} cannot be applied directly. Instead, we adopt arguments similar to those in recent work~\cite{PedrosoAgazziEtAl2025MFGAvg,PedrosoAgazziEtAl2025MFGAvgII} on a class of mean-field games that encompasses the model considered here.


\subsection{Equilibria}\label{sec:mf_equilibria}

In the context of game theory, solution concepts are rules or criteria that characterize reasonable outcomes of a game, such as the celebrated Nash equilibrium (NE). In this section, we analyze a solution concept of the mean-field game introduced in Section~\ref{sec:mean_field_model}. We show that under mild conditions such solution concept \emph{exists}, admits  \emph{unique resource rates}, and is the \emph{unique rest point of} \eqref{eq:ODE_mu_ev}.

We start by defining a solution concept called Mixed Stationary Nash Equilibrium (MSNE) that is proposed in \cite{PedrosoAgazziEtAl2025MFGAvg}.

\begin{definition}[MSNE]\label{def:MSNE}
	A joint state-policy distribution $\mu \in X$ is said to be a MSNE in the average payoff mean-field game, denoted by $\mu \in \MSNE(F,\phi)$, if for all $c\in [C]$ and all $u\in \UDcbar$
	\begin{align}
		&\mu^c[\Kcal,u] >0 \implies   F^c_u(\mu)  \geq F^c_v(\mu), \quad \forall v\in \UDcbar \label{eq:MSNE_u}\\
		&\mu^c[k,u] = \eta^{c,u}(k)\mu^c[\Kcal,u]  \quad \forall k\in \Kcal,   \label{eq:MSNE_s}
	\end{align}
	where $\mu^c[\Kcal,u] = \sum_{k\in \Kcal}\mu^c[k,u]$ for all $u\in \UDcbar$, and $\eta^{c,u} \in \Pcal(\Scal)$ is the stationary state distribution of the continuous-time state transition Markov chain associated with $u$, which is unique by Lemma~\ref{lem:F_def} under Assumption~\ref{ass:noise}. Moreover, for some $\epsilon\geq0$, $\mu$ is said to be an $\epsilon$-MSNE  in the average payoff mean-field game, denoted by $\mu \in \MSNE_\epsilon(F,\phi)$, if  for all $c\in [C]$ and all $u\in \UDcbar$
		\begin{align}
		&\mu^c[\Kcal,u] > \epsilon \implies   F^c_u(\mu)  \geq F^c_v(\mu) -\epsilon, \quad \forall v\in \UDcbar \label{eq:epsMSNE_u}\\
		&|\mu^c[k,u] - \eta^{c,u}(k)\mu^c[\Kcal,u]|\leq \epsilon  \quad \forall k\in \Kcal. \label{eq:epsMSNE_s}
\end{align}
\end{definition}

The MSNE is a simple and intuitive notion of equilibrium. It is a condition in which each agent adopts a deterministic policy in steady-state and has no incentive to switch to any other available deterministic policy. Moreover, an $\epsilon$-MSNE is a relaxed version of this condition, where each agent satisfies the requirements of an MSNE up to an accuracy level $\epsilon$. Crucially, we show in the following result that an MSNE exists and that the corresponding equilibrium resource flows are unique.

\begin{theorem}\label{th:unique_MSNE}
	Under Assumptions~\ref{ass:affordable_prices}-\ref{ass:noise}, $\MSNE(F,\phi)$ is non-empty, compact, and convex; and the equilibrium resource flows are unique, i.e., $\sigma(\mu) = \sigma(\nu)$ for all $\mu,\nu \in \MSNE(F,\phi)$. Moreover, for all $\delta > 0$ there exists $\epsilon> 0$ such that $\|\sigma(\mu)-\sigma(\nu)\| \leq\delta$ for all $\mu \in \MSNE_{\epsilon}(F,\phi)$ and all $\nu \in \MSNE(F,\phi)$.
\end{theorem}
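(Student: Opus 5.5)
Since every MSNE has the stationarity structure \eqref{eq:MSNE_s}, an MSNE is fully determined by its marginal policy distribution $x=(x^c)_{c\in[C]}$, with $x^c=\mu^c[\Kcal,\cdot]\in X^c_{\Ucal_D}$, through the linear map $L:x\mapsto\mu$, $\mu^c[k,u]=\eta^{c,u}(k)x^c_u$. I would therefore reduce the problem to a static population game on the product of simplices $X_{\Ucal_D}:=\bigtimes_c X^c_{\Ucal_D}$ with payoff $\tilde F(x):=F(L(x))$. Then $\mu\in\MSNE(F,\phi)$ if and only if $\mu=L(x)$ with $x$ a Nash equilibrium of $\tilde F$.

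The key structural observation is that $\tilde F$ is a potential game. The rates $\sigma(L(x))$ are linear in $x$: $\sigma_r(L(x))=\Rd\sum_c\sum_u x^c_u\,\beta^{c,u}_r$, where $\beta^{c,u}_r:=\sum_k\eta^{c,u}(k)\sum_{a\in\Acal^c_{[\Rsf]}(r)}u(a|k)$ is the stationary frequency with which policy $u$ uses resource $r$. Since $F^c_u(L(x))=\sum_r\beta^{c,u}_r w_r(\sigma_r)$, the candidate potential is
\[
P(x)=\frac{1}{\Rd}\sum_{r\in[\Rsf]}\int_0^{\sigma_r(L(x))}w_r(s)\,\dint s ,
\]
which satisfies $\partial P/\partial x^c_u=\tilde F^c_u(x)$. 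By Assumption~\ref{ass:cont}, each $w_r$ is continuous and strictly decreasing, so $P$ is concave in $x$, and in fact strictly concave as a function of $\sigma$. Existence then follows because $P$ attains a maximum on the compact convex set $X_{\Ucal_D}$, and its maximizers are exactly the points satisfying the KKT conditions, which coincide with the Nash conditions \eqref{eq:MSNE_u}. The MSNE set is the image under the linear map $L$ of the argmax of a concave function over a polytope, so it is convex and compact. Uniqueness of flows follows from strict concavity in $\sigma$: if two maximizers had different $\sigma$, their midpoint would give a strictly larger $P$. Alternatively, a monotonicity argument gives $\sum_r(\sigma_r-\sigma'_r)(w_r(\sigma_r)-w_r(\sigma'_r))\ge 0$ from the equilibrium inequalities, and the strict decrease of $w_r$ forces equality. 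Assumption~\ref{ass:affordable_prices} is used only to guarantee that $\Ucal^c_D$ is nonempty, and Assumption~\ref{ass:noise} supplies the uniqueness of $\eta^{c,u}$ via Lemma~\ref{lem:F_def}.

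For the $\epsilon$-part I would argue by contradiction using compactness. Suppose that for some $\delta>0$ there are $\epsilon_j\to0$ and $\mu_j\in\MSNE_{\epsilon_j}(F,\phi)$ with $\|\sigma(\mu_j)-\sigma(\nu)\|>\delta$. Since $X$ is compact, pass to a subsequence $\mu_j\to\mu^*$. By continuity of $F$ (inherited from Lipschitz $w_r$ and the linearity of $\sigma$), \eqref{eq:epsMSNE_s} passes to the limit as \eqref{eq:MSNE_s}. For \eqref{eq:MSNE_u}, if $\mu^{*c}[\Kcal,u]>0$, then $\mu_j^c[\Kcal,u]>\epsilon_j$ eventually, so the $\epsilon_j$-inequalities hold and survive the limit. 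Hence $\mu^*\in\MSNE(F,\phi)$, yet continuity of $\sigma$ gives $\|\sigma(\mu^*)-\sigma(\nu)\|\ge\delta$, contradicting flow uniqueness.

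The main obstacle I expect is verifying carefully that the gradient of $P$ equals the payoff. This needs $F^c_u$ to be affine in the $w_r$ with coefficients $\beta^{c,u}_r$ that match the coefficients of $x^c_u$ in $\sigma_r$, up to the factor $\Rd$. It also needs the KKT conditions on a product of simplices to be equivalent to \eqref{eq:MSNE_u} for each class, including the handling of mass $m^c$ and of policies with zero mass.
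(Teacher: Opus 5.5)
Your proof is correct. Its first half matches the paper's argument: the reduction $\mu^c[k,u]=\eta^{c,u}(k)x^c_u$ to a static policy game, the potential $\frac{1}{\Rd}\sum_r\int_0^{\sigma_r}w_r(s)\,\mathrm{d}s$, the identification of Nash equilibria with maximizers of this concave potential, and strict concavity in $\sigma$ for uniqueness of flows.

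The $\epsilon$-part is where you genuinely depart from the paper. The paper argues quantitatively. Starting from $\mu\in\mathrm{MSNE}_\epsilon(F,\phi)$, it builds the stationary surrogate $\zeta$ with the same policy marginals $z$. It then forms a point $y$ by zeroing out policies of mass at most $\epsilon$ and projecting back onto the simplices, and uses global Lipschitz continuity of the payoff to show $y$ is an $l_6\epsilon$-Nash equilibrium of the steady-state game. From $\Fcal(y)^\top(x^\star-y)\le l_6\epsilon$ and concavity of the potential it gets a potential gap of order $\epsilon$. Strong concavity of $\tilde U$ (the ``uniformly decreasing'' part of Assumption~\ref{ass:cont}) then turns this gap into a bound on $\|\sigma(\mu)-\sigma^\star\|$.

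Your sequential-compactness argument replaces this chain with one observation: the $\epsilon_j$-MSNE conditions pass to the limit as $\epsilon_j\to0$, so every accumulation point is an exact MSNE. It needs only continuity of $F$ and $\sigma$, compactness of $X$, and strict monotonicity of the $w_r$. It uses no Lipschitz constants, projections, or strong concavity.

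What you give up is an explicit modulus linking $\epsilon$ to $\delta$, which the paper's route supplies. Incidentally, an $\mathcal{O}(\epsilon)$ potential gap combined with strong concavity only yields $\|\sigma(\mu)-\sigma^\star\|=\mathcal{O}(\sqrt{\epsilon})$. So the explicit choice there should be $\epsilon\propto\delta^2$ rather than $\delta/l_9$; the qualitative statement is unaffected.
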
 
\begin{proof}
	See Appendix~\ref{sec:proof_th_unique_MSNE}.
\end{proof}

\begin{mdframed}[style= callout]
	The uniqueness of the equilibrium resource utilization flows established in Theorem~\ref{th:unique_MSNE} is a remarkable property from a prescriptive perspective. Specifically, if the toll maps are designed so that an equilibrium with optimal resource utilization rates (according to criterion~(ii) of Problem~\ref{prob:design}) exists, then all other equilibria necessarily achieve the same optimal resource utilization rates.
\end{mdframed}


\vspace{-0.3cm}

\subsection{Stability of Equilibria}\label{sec:mf_stability}

In this section, we show that the evolutionary dynamics of the mean-field game characterized by \eqref{eq:ODE_mu_ev} converge close to the set of MSNE in a regime where the time scale separation of the action and revision dynamics is sufficiently large, which is reasonable since $\Rr \ll \Rd$.


\begin{theorem}\label{th:global_stable}
	Under Assumptions~\ref{ass:affordable_prices}--\ref{ass:rev_protocol}, consider an imitative payoff or pairwise comparison revision protocol $\rho^c$ for each class $c\in [C]$. Then, for all $\epsilon>0$ and all $\mu(0)\in \mathrm{int}(X)$ there exist $\delta,T >0$ such that for all $\Rr/(\Rd+\Rno) < \delta$ trajectories of \eqref{eq:ODE_mu_ev} satisfy $\mu(t) \in \MSNE_\epsilon(F,\phi)\;\;\; \forall t \geq T$.
\end{theorem}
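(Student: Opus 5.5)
We will prove the theorem with a singular-perturbation argument that combines two facts. First, when $\Rr/(\Rd+\Rno)$ is small, the state dynamics $f^{c,d}$ are fast and the revision dynamics $f^{c,r}$ are slow. Second, the slow reduced dynamics form a standard evolutionary dynamic for a potential population game, for which global convergence results from \cite{Sandholm2010} are available.

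\emph{Step 1 (time-scale separation).} Rescale time by $\Rd+\Rno$ and write $\varepsilon := \Rr/(\Rd+\Rno)$. By Assumption~\ref{ass:rev_protocol}, $f^{c,r}$ is of order $\varepsilon$ in the rescaled time. The fast subsystem, obtained with the policy marginals $x^c_u := \mu^c[\Kcal,u]$ frozen, is a collection of linear, decoupled continuous-time Markov chains with generators built from $\phi^{c,u}$. By Lemma~\ref{lem:F_def}, each such chain has a unique stationary distribution $\eta^{c,u}$. Because the state space is finite and the chain is irreducible on its recurrent class, convergence to $\eta^{c,u}$ is exponential, uniformly over the finitely many $(c,u)$.

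The marginals $x^c_u$ are exactly conserved by $f^{c,d}$, and they evolve under $f^{c,r}$ by summing over $k$. This gives
\begin{equation*}
\dot x^c_u = \sum_{u'} x^c_{u'}\rho^c_{u'u}(F^c(\mu),x^c) - x^c_u\sum_{u'}\rho^c_{uu'}(F^c(\mu),x^c),
\end{equation*}
which is the mean dynamic of \cite{Sandholm2010}, except that $F$ is evaluated at $\mu$ rather than at the slow manifold $\mu^c[k,u]=\eta^{c,u}(k)x^c_u$.

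Define the deviation $e := \mu - \Pi(x)$, where $\Pi$ maps the marginals onto the slow manifold. A Tikhonov/Fenichel-type estimate, or a direct Lyapunov bound on $\|e\|$ using exponential stability of the fast chain, should give $\|e(t)\|\le C e^{-\kappa t}\|e(0)\| + C\varepsilon$. Consequently $e(t)=O(\varepsilon)$ after an $O(1)$ transient, and \eqref{eq:epsMSNE_s} holds for small $\varepsilon$.

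\emph{Step 2 (reduced game is potential and stable).} On the slow manifold, $\sigma(\Pi(x))$ is linear in $x$, since $\sigma_r = \Rd\sum_{c,u} x^c_u \sum_k \eta^{c,u}(k)\sum_{a\ni r}u(a|k)$. The payoff is $F^c_u(\Pi(x)) = \sum_r \ell^{c,u}_r w_r(\sigma_r)$, where $\ell^{c,u}_r$ is the expected number of uses of resource $r$ per decision. This is a congestion-type population game with potential
\begin{equation*}
P(x)=\frac{1}{\Rd}\sum_r \int_0^{\sigma_r(x)} w_r(s)\,\dint s.
\end{equation*}
The normalization by $\Rd$ makes $\nabla_x P$ equal to $F$, matching $\sigma_r$'s factor $\Rd$. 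By Assumption~\ref{ass:cont}, $P$ is concave and strictly concave in $\sigma$, consistent with Theorem~\ref{th:unique_MSNE}.

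For pairwise comparison protocols (positive correlation and Nash stationarity), $P$ is a strict Lyapunov function for the reduced dynamics, and the reduced dynamics converge globally to the Nash set, which coincides with the MSNE set. For imitative protocols, $P$ is still nondecreasing, but the rest points include non-Nash restricted equilibria on faces of $X$. This is why the theorem requires $\mu(0)\in\mathrm{int}(X)$: interior trajectories of imitative dynamics remain interior, and in potential games with concave potential they converge to the Nash set (Sandholm, Chap.~7).

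\emph{Step 3 (combining).} With $\dot P = \nabla P\cdot\dot x$, the perturbation caused by evaluating $F$ at $\mu$ instead of at $\Pi(x)$ is bounded by $L\|e\|=O(\varepsilon)$, using Lipschitz continuity of $F$ and $\rho$. So $P$ increases at a rate bounded below by a positive constant $\gamma(\epsilon')$ outside any $\epsilon'$-neighborhood of the equilibrium set, up to an $O(\varepsilon)$ error. Choosing $\delta$ so that this error is below $\gamma/2$, $P$ increases by at least $\gamma/2$ per unit time while the trajectory is far from the MSNE set. Since $P$ is bounded, the trajectory enters the neighborhood by some time $T$ and remains in a slightly larger one afterwards. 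Continuity then converts this neighborhood into the $\epsilon$-MSNE conditions \eqref{eq:epsMSNE_u}–\eqref{eq:epsMSNE_s}.

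\emph{Main obstacle.} The hardest step is the imitative case. The uniform decrease rate $\gamma(\epsilon')$ fails near the boundary, where imitation stalls, so $T$ must depend on $\mu(0)$; the statement permits this. The plan there is to first bound from below how fast interior mass can decay under $O(\varepsilon)$ perturbations, and then apply the Lyapunov argument on compact subsets of $\mathrm{int}(X)$ that contain the trajectory over the relevant horizon, in the style of \cite{PedrosoAgazziEtAl2025MFGAvgII}.
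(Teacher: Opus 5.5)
\textbf{Context.} The paper does not carry out a singular-perturbation analysis itself. Proposition~\ref{prop:ss_is_potential} shows that the steady-state game is a full potential game, with the same potential as your $P$. The paper then invokes \cite[Corollary~4]{PedrosoAgazziEtAl2025MFGAvgII} to obtain $d_{\MSNE(F,\phi)}(\mu(t))\le B$ for all $t\ge T$. The only argument actually written out is that a sufficiently small $B$ forces $\mu\in\MSNE_\epsilon(F,\phi)$.

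Your Steps~1--3 are a self-contained substitute for that corollary, and much of it is sound. Step~1 is even simpler than you suggest: $f^{c,d}$ is linear, independent of $F$, and annihilates $\Pi(x)$. So $e$ solves a linear ODE whose matrix is Hurwitz on zero-sum vectors, driven by an $O(\Rr)$ forcing term. For pairwise comparison protocols, positive correlation and Nash stationarity make your Lyapunov argument with $P$ go through.

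\textbf{The gap: imitative protocols.} You flag this case but do not close it. Step~3 fails there because non-Nash restricted equilibria are rest points with $\dot P=0$. The repair you sketch does not work:
\begin{itemize}
\item $\NE(\Fcal)$ typically lies on the boundary of $X$, so the trajectory must leave every compact $K\subset\mathrm{int}(X)$.
\item The Lyapunov rate on $K$ minus a neighborhood of the Nash set degrades as $K$ approaches the boundary, while a longer horizon lets interior mass decay further. The estimate is circular.
\item A finite-horizon bound says nothing about $t$ beyond that horizon.
\end{itemize}

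\textbf{A route that works} has three parts.
\begin{enumerate}
\item[(i)] Show that the \emph{unperturbed} reduced imitative dynamic converges to $\NE(\Fcal)$ from interior points. This needs a real argument, not a loose pointer. Strict concavity of $\tilde U$ makes $\sigma$, hence $F$, constant on the connected $\omega$-limit set. If that set were non-Nash, a single policy unused throughout it would have uniformly positive growth rate nearby, contradicting its mass tending to zero.
\item[(ii)] Use a finite-horizon Gr\"onwall comparison to bring the perturbed trajectory into $\{P\ge P^\star-\eta\}$.
\item[(iii)] Trap it there. Each restricted equilibrium maximizes the concave $P$ on its face, and there are finitely many faces, so non-Nash ones satisfy $P\le P^\star-\eta_0$. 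Hence, for $2\eta<\eta_0$, $\nabla P\cdot V$ is bounded below by a positive constant on $\{P^\star-2\eta\le P\le P^\star-\eta\}$. This dominates the $O(\Rr/(\Rd+\Rno))$ perturbation.
\end{enumerate}
Alternatively, verify the potential hypothesis and cite the corollary, as the paper does.

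\textbf{A smaller gap in the last step.} ``Continuity'' alone does not yield \eqref{eq:epsMSNE_u}. A policy with small positive mass under $\mu$ may be unused, and strictly suboptimal, at every nearby MSNE. The paper handles this by taking $B\le\epsilon$, so the premise $\mu^c[\Kcal,u]>\epsilon$ fails in exactly that case.
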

\begin{proof}
	See Appendix~\ref{sec:proof_th_global_stable}.
\end{proof}

\begin{mdframed}[style= callout]
	The global convergence guarantee established in Theorem~\ref{th:global_stable} is another remarkable property from a prescriptive perspective. Specifically, it states that, for the entire classes of imitative and pairwise comparison revision protocols, the mean-field trajectory $\mu(t)$, which by Lemma~\ref{lem:mfg_approx} provides a good approximation of the finite-population token economy, enters a set of $\epsilon$-MSNE in finite time.
\end{mdframed}

\begin{remark}
	Notice that the convergence result of Theorem~\ref{th:global_stable} only holds if the initial condition is in the interior of $X$, i.e.,  $\mu(0)\in \mathrm{int}(X)$. That is because if no player initially chooses a policy $u$, then policy $u$ will not ever be chosen when revisions rely on the imitation of other players. Nevertheless, inevitable perturbations of the revision protocol in realistic applications make trajectories move away from such degenerate conditions. Furthermore, if one considers instead that each class plays an \emph{hybrid} protocol, i.e., if at each revision ring each agent plays an imitative or pairwise comparison randomly with a non-degenerate proportion, then the result of Theorem~\ref{th:global_stable} holds for any initial condition  $\mu(0)\in X$. It is straightforward to show this from the proof of Theorem~\ref{th:global_stable} and the properties of hybrid protocols in \cite[Chapter~5.7]{Sandholm2010}.
\end{remark}

\section{Prescriptive Analysis}\label{sec:design}

In this section, we focus on the design of the toll maps with the goal of proposing a solution to Problem~\ref{prob:design}. For that, we make extensive use of the descriptive analysis of Section~\ref{sec:descriptive}. First, we show that for any toll maps that satisfy Assumption~\ref{ass:affordable_prices}, the intra-class fairness condition in \eqref{eq:intra_class_fairness} is satisfied at a MSNE.


\begin{theorem}\label{th:opt_intra_class_fair}
For any $c\in [C]$, let $i,j$ represent independent samples drawn from $\Ccal^c$ uniformly at random.
Under Assumptions~\ref{ass:affordable_prices}--\ref{ass:noise}, consider any toll map collection $\tau$, then for any $\delta\in (0,1)$ there exists $\epsilon >0$ such that $\mathbb{P}(|J^i(\mu) - J^j(\mu)|\leq \delta) \geq (1-\delta)^2$ for all $\mu\in \MSNE_\epsilon(F,\phi)$. Moreover, the result also holds for $\delta = 0$ with $\epsilon = 0$.
\end{theorem}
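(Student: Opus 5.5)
The plan is to read the statement probabilistically. Sampling $i$ uniformly from $\Ccal_c$ amounts to sampling a policy $u\in\UDcbar$ with probability $\mu^c[\Kcal,u]/m^c$, and by Lemma~\ref{lem:F_def} the payoff of that agent is $J^i(\mu)=F^c_u(\mu)$. The event $|J^i(\mu)-J^j(\mu)|\le\delta$ therefore concerns only the policies of the two sampled agents, and it suffices to bound the probability that each of them uses a policy that is $\delta$-close to optimal.

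\textbf{Exact case ($\delta=\epsilon=0$).} Take $\mu\in\MSNE(F,\phi)$. By \eqref{eq:MSNE_u}, every policy $u$ with $\mu^c[\Kcal,u]>0$ satisfies $F^c_u(\mu)=\max_{v\in\UDcbar}F^c_v(\mu)=:M^c(\mu)$. Almost surely, both sampled agents use policies with positive mass, so $J^i(\mu)=J^j(\mu)=M^c(\mu)$ with probability $1=(1-0)^2$.

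\textbf{Approximate case.} Fix $\delta\in(0,1)$ and take $\mu\in\MSNE_\epsilon(F,\phi)$. Split $\UDcbar$ into the heavy policies $H:=\{u:\mu^c[\Kcal,u]>\epsilon\}$ and the light ones. By \eqref{eq:epsMSNE_u}, every $u\in H$ satisfies $F^c_u(\mu)\ge M^c(\mu)-\epsilon$, and trivially $F^c_u(\mu)\le M^c(\mu)$. Hence for any $u,v\in H$ we get $|F^c_u(\mu)-F^c_v(\mu)|\le\epsilon$.

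The light policies carry total mass at most $n^c\epsilon$. So the probability that a single uniformly sampled agent uses a heavy policy is at least $1-n^c\epsilon/m^c$. Since $i$ and $j$ are independent, both use heavy policies with probability at least $(1-n^c\epsilon/m^c)^2$.

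It remains to choose $\epsilon\le\min\{\delta,\ \delta\min_c m^c/\max_c n^c\}$. Then $n^c\epsilon/m^c\le\delta$ and $\epsilon\le\delta$, which gives $\mathbb{P}(|J^i-J^j|\le\delta)\ge(1-\delta)^2$ for every class $c$ at once. This choice is possible because $C$ is finite and $n^c$ is finite.

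\textbf{Main point of care.} The only delicate step is being precise about what $J^i(\mu)$ means: it is the stationary long-run average payoff of agent $i$'s policy, evaluated at the fixed distribution $\mu$, as given by Lemma~\ref{lem:F_def}. With that reading, condition \eqref{eq:epsMSNE_s} is not needed at all. In particular, $J^i(\mu)$ depends only on $u^i$, not on agent $i$'s current token count. The rest of the argument is elementary counting.
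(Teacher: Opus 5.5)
Your proof is correct and follows essentially the same route as the paper's. Both arguments split policies by whether their mass exceeds $\epsilon$, use \eqref{eq:epsMSNE_u} to show that heavy policies are $\epsilon$-optimal and hence within $\epsilon$ of each other, bound the light mass by (number of policies)$\cdot\epsilon$, take $\epsilon$ proportional to $\delta m^c$ divided by that number, and use independence of the two samples to get $(1-\delta)^2$. The only differences are cosmetic: you count $n^c=|\UDcbar|$ where the paper uses the looser $|\Ucal^c_D|$, and you make $\epsilon$ uniform over classes; neither affects the result.
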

\begin{proof}
	See Appendix~\ref{sec:proof_th_opt_intra_class_fair}.
\end{proof}

\begin{mdframed}[style= callout]
	It follows from Theorem~\ref{th:opt_intra_class_fair} that, for any toll map, the intra-class fairness condition  \eqref{eq:intra_class_fairness} is satisfied with arbitrary precision in a set of $\epsilon$-MSNE, which is reached in finite-time from any initial condition from Theorem~\ref{th:global_stable}.
\end{mdframed}

Second, one may compute a maximizer of the chosen combination of efficiency and fairness in line with criterion~(ii) of Problem~\ref{prob:design} as $f^\star \in $
\begin{equation}\label{eq:sys_opt}
	\begin{aligned}
		\!\!\!\underset{f^c_a,\; c\in[C], a\in \Acal^c}{\argmax}\;\; &g(\Jeff(\sigma),\JF(\bar w)) \\
		\mathrm{s.t.}\quad  \quad \;
		&  \sigma_r = \sum_{c\in[C]}\sum_{a\in \Acal^c_{[\Rsf]}(r)} f^c_a, \; r\in[\Rsf] \\
		&  \bar w_c =  \sum_{a\in \Acal^c} f^c_{a}w(a,\sigma)/(m^c\Rd), \;  c\in[C]\\
		&  \sum_{a\in \Acal^c} f^c_a  = m^c\Rd , \quad c\in[C] \\
		& f^c_a \geq 0, \quad  c\in[C], \;a\in \Acal^c.
	\end{aligned}
\end{equation}
A solution $f^\star$ to \eqref{eq:sys_opt} is characterized by utilization flows of each action $a\in \Acal^c$ and each class $c\in[C]$ denoted by $f^{c\star}_a$; utilization flows for each resource $r\in [\Rsf]$ denoted by $\sigma^\star_r$; and average payoffs for each class $c\in [C]$, denoted by $\bar w_c^\star$. Solving \eqref{eq:sys_opt} efficiently may be challenging depending on the chosen function $g$ under Assumption~\ref{ass:J}. Finding a solution is outside of the scope of this paper, so, henceforth, we assume that a solution $f^\star$ is known.

\begin{mdframed}[style= callout]
From Theorem~\ref{th:unique_MSNE}, all MSNE have the same equilibrium resource flows. Therefore, if one designs tolls $\tau$ such that there is a MSNE that has optimal resource flows $\sigma^\star$, then every MSNE also has optimal resource flows $\sigma^\star$. Furthermore, during the design procedure one can consider the single-stage rewards of every action to be fixed and equal to the single-stage rewards at the optimal resource flows denoted by $w^\star_a = w(a,\sigma^\star)$ for all $a\in \Acal$. Then, if there is a MSNE with optimal resource flows for constant single-stage rewards, it is also a MSNE for the original game.
\end{mdframed}

We start by considering, for any class $c\in [C]$, an auxiliary optimization problem that characterizes the best-case payoff under fixed single-stage rewards at $\sigma^\star$ and where continuous tolls $\tilde{\tau}_a^c$ with $c\in [C]$ and $a\in \Acal^c$ are allowed  
\begin{equation}\label{eq:class_opt}
	\begin{aligned}
		\underset{\nu^c_a, a\in \Acal^c}{\max}\quad & { \sum\nolimits_{a\in \Acal^c} w_a^\star \nu^c_a} \\
		\mathrm{s.t.}\quad  \quad 
		& {  m^c\Rno- \sum\nolimits_{a\in \Acal^c} \tilde{\tau}_a^c \nu^c_a \geq 0}\\
		& {  \sum\nolimits_{a\in \Acal^c} \nu^c_a  = m^c\Rd}\\
		& \nu^c_a \geq 0, \quad  a\in \Acal^c.
	\end{aligned}
\end{equation}
In \eqref{eq:class_opt}, the decision variables $\nu^c_a \geq 0$ with $a\in \Acal^c$ are the action utilization rates of class $c$ and the objective is to maximize the average payoff under fixed single-stage rewards at $\sigma^\star$. The first and second constraints enforce that the average rate of variation of the amount of tokens must not be negative (agents cannot be depleting on average their token wallets), and that the class flow is conserved, respectively. Furthermore, \eqref{eq:class_opt} is a Linear Program (LP), which allows the use of strong duality theorems for its analysis.

The LP \eqref{eq:class_opt} lies at the center of the proposed design procedure. First, we design continuous tolls $\tilde{\tau}$ such that a solution to \eqref{eq:class_opt} is the solution $f^{c\star}_a$ with $a\in \Acal^c$ to \eqref{eq:sys_opt} for the action flows of each class $c\in[C]$. Second, we show that there exists a distribution of policies for each class $c$ that prescribes action flows whose average payoff is arbitrarily close to the optimal value of \eqref{eq:class_opt} and that no other policy can achieve a better payoff. Hence, such policy distributions for each class characterize a $\epsilon$-MSNE. Third, putting the previous two steps together for arbitrary $\epsilon >0$, we can obtain an expression for integer tolls $\tau$ that prescribe an $\epsilon$-MSNE with optimal resource flows $\sigma^\star$ and optimal average payoffs for each class $\bar w^\star$. The three steps outlined above are presented in the three following results, starting with the design of continuous tolls $\tilde{\tau}$ as follows.

\begin{lemma}\label{lem:opt_prices}
	Consider any class $c\in[C]$, any $a\in \Acal^c$, and let $f^{c\star}_a$ be a solution to \eqref{eq:sys_opt}. Then, for tolls $\tilde{\tau}^c_a(\alpha)$ given by
	\begin{equation}\label{eq:opt_prices_cont}
		\begin{cases}
			\tilde{\tau}^c_a(\alpha) = \Rno/\Rd +\alpha (w^{\star}_a -\bar{w}_c^\star), &\quad f^{c^\star}_a >0\\
			\tilde{\tau}^c_a(\alpha) \geq  \Rno/\Rd +\alpha (w^{\star}_a -\bar{w}_c^\star), &\quad f^{c^\star}_a =0,
		\end{cases}		
	\end{equation}
	where $\alpha>0$ is any scaling factor, $f^{c\star}_a$ with $a\in \Acal^c$ is a maximizer of \eqref{eq:class_opt}.
\end{lemma}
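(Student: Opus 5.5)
We plan to prove the statement by LP duality applied to \eqref{eq:class_opt}. First we check that $f^{c\star}$ is feasible for \eqref{eq:class_opt}. The flow constraint $\sum_{a\in\Acal^c} f^{c\star}_a = m^c\Rd$ and nonnegativity hold directly by the constraints of \eqref{eq:sys_opt}. For the token constraint, we use the definition $\bar w^\star_c = \sum_a f^{c\star}_a w^\star_a/(m^c\Rd)$, which gives $\sum_a f^{c\star}_a (w^\star_a-\bar w^\star_c) = 0$. Since $f^{c\star}_a$ is nonzero only where $\tilde\tau^c_a(\alpha)=\Rno/\Rd+\alpha(w^\star_a-\bar w^\star_c)$, we obtain
\begin{equation*}
\sum_{a\in\Acal^c}\tilde\tau^c_a f^{c\star}_a = \frac{\Rno}{\Rd}m^c\Rd + \alpha\cdot 0 = m^c\Rno .
\end{equation*}
The token constraint therefore holds with equality, and the objective value at $f^{c\star}$ is $m^c\Rd\,\bar w^\star_c$.

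Next we build a dual certificate. Attach a multiplier $\lambda\ge 0$ to the token constraint and a free multiplier $\beta$ to the flow constraint. The Lagrangian is $\sum_a \nu_a(w^\star_a-\lambda\tilde\tau^c_a-\beta)+\lambda m^c\Rno+\beta m^c\Rd$. Dual feasibility requires $w^\star_a-\lambda\tilde\tau^c_a-\beta\le 0$ for all $a\in\Acal^c$. We choose $\lambda=1/\alpha$ and $\beta=\bar w^\star_c-\Rno/(\alpha\Rd)$. Then
\begin{equation*}
w^\star_a-\lambda\tilde\tau^c_a-\beta = \frac{1}{\alpha}\Big(\Rno/\Rd+\alpha(w^\star_a-\bar w^\star_c)-\tilde\tau^c_a\Big),
\end{equation*}
which equals $0$ when $f^{c\star}_a>0$ and is $\le 0$ when $f^{c\star}_a=0$, by \eqref{eq:opt_prices_cont}. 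The dual objective is $\lambda m^c\Rno+\beta m^c\Rd = m^c\Rno/\alpha + m^c\Rd\bar w^\star_c - m^c\Rno/\alpha = m^c\Rd\bar w^\star_c$. This matches the primal value.

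To conclude, weak duality (or equivalently the complementary slackness conditions just verified) shows that $f^{c\star}$ is optimal for \eqref{eq:class_opt}. Complementary slackness holds because the token constraint is tight (so $\lambda>0$ is allowed) and the reduced costs vanish on $\supp(f^{c\star})$. The main care point is the token constraint's feasibility, which hinges on the identity $\sum_a f^{c\star}_a(w^\star_a-\bar w^\star_c)=0$ following from the definition of $\bar w^\star_c$ in \eqref{eq:sys_opt}. One could alternatively avoid duality by noting that for any feasible $\nu$, $\sum_a w^\star_a\nu_a \le \sum_a(\lambda\tilde\tau^c_a+\beta)\nu_a\le \lambda m^c\Rno+\beta m^c\Rd$. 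This is the same argument written directly, and it is what we will present.
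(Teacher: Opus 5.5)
Your proposal is correct and follows essentially the paper's argument: the same primal feasibility check, with the token constraint made tight by $\sum_a f^{c\star}_a(w^\star_a-\bar w^\star_c)=0$, and the same dual certificate $y_t=1/\alpha$, $y_m=\bar w^\star_c-\Rno/(\alpha\Rd)$. The only difference is how you close: you use equal primal and dual objective values (or the direct inequality chain), whereas the paper appeals to complementary slackness and strong duality; your version makes explicit that weak duality alone suffices.
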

\begin{proof}
	See Appendix~\ref{sec:proof_lem_opt_prices}.
\end{proof}

\begin{lemma}\label{lem:opt_pol}
	Under Assumptions~\ref{ass:senseless_policies} and~\ref{ass:noise}, consider a discretization $\tau^c_a(\alpha) = \mathrm{round}(\tilde{\tau}^c_a(\alpha))$ for all $c\in [C]$ and all $a\in \Acal^c$ of the continuous tolls in \eqref{eq:opt_prices_cont} in Lemma~\ref{lem:opt_prices} with scaling factor $\alpha>0$. Then, for all $\epsilon>0$ there is $\alpha^\star>0$ and $\bar{k}^\star\in \N$  such that for all $c\in [C]$, all $\alpha\geq\alpha^\star$, and all $\bar{k} \geq  \bar{k}^\star$ there exists $\mu^c\in X^c$ such that for all $u\in \UDcbar$
	\begin{equation}\label{eq:opt_pol_cond1}
		\begin{split}
		\!\!\!\mu^c[\Kcal,u] >0 & \implies \sum_{k\in \Kcal} \sum_{a\in \Acal^c} \eta^{c,u}(k)u(a|k)w^\star_a \\ & \!\!\!\geq \sum_{k\in \Kcal} \sum_{a\in \Acal^c} \!\eta^{c,v}(k)v(a|k)w^\star_a -\epsilon,  \quad \!\! \forall  v\in \UDcbar
	 \end{split}
	\end{equation}
and for all $a\in \Acal^c$
	\begin{equation}\label{eq:opt_pol_cond2}
		\Big|f^{c\star}_a-   \sum\nolimits_{u\in \UDcbar} \sum\nolimits_{k\in \Kcal}\mu^c[k,u]u(a|k)\Big| \leq \epsilon.
	\end{equation}
\end{lemma}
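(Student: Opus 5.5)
Our approach is to combine the LP \eqref{eq:class_opt} with an explicit construction of a distribution over deterministic policies in $\UDcbar$ whose stationary action flows approximate $f^{c\star}$. We would argue class by class, since conditions \eqref{eq:opt_pol_cond1}--\eqref{eq:opt_pol_cond2} decouple across classes once the single-stage rewards are frozen at $w^\star_a$.

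\emph{Step 1: an upper bound valid for every policy.} Fix $v\in\UDcbar$ with stationary distribution $\eta^{c,v}$ (Lemma~\ref{lem:F_def}). Write $\nu_a := m^c\Rd\sum_k \eta^{c,v}(k)v(a|k)$. Stationarity of the token chain gives a flux balance: the mean token drift is $\Rno$ (noise gifts) minus $\Rd\sum_k\eta^{c,v}(k)\tau^c(v(k))$ (payments), plus a correction from the truncation at $\bar k$. At $\bar k$ noise is lost and overflow from negative tolls is clipped. Since the drift of a stationary bounded chain is zero, the truncation only removes tokens. Hence $m^c\Rno-\sum_a\tau^c_a\nu_a\geq0$, so $\nu$ is feasible for \eqref{eq:class_opt} with integer tolls $\tau^c(\alpha)$ in place of $\tilde\tau^c(\alpha)$. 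The payoff of $v$ is therefore at most the LP value $V(\tau)$ divided by $m^c\Rd$.

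\emph{Step 2: robustness of the LP to rounding.} With the continuous tolls, Lemma~\ref{lem:opt_prices} gives the LP optimum $f^{c\star}$. Its dual certificate has multiplier $1/\alpha$ on the token constraint, from the affine form $\tilde\tau=\Rno/\Rd+\alpha(w^\star-\bar w^\star_c)$. Rounding perturbs the tolls by at most $1/2$. Dividing the token constraint by $\alpha$ shows it becomes $\sum_a (w^\star_a-\bar w^\star_c+O(1/\alpha))\nu_a\ge -O(1/\alpha)$. As $\alpha\to\infty$, the perturbed LP therefore converges to the limit LP, whose optimal value is $m^c\Rd\bar w^\star_c$, attained by $f^{c\star}$. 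By continuity of LP values under such perturbations (the feasible set is compact and the constraint is nondegenerate thanks to the $\Rno/\Rd$ slack), the value with integer tolls is at most $m^c\Rd\bar w^\star_c+\epsilon/2$ for $\alpha\ge\alpha^\star$. This gives the right-hand side of \eqref{eq:opt_pol_cond1} an upper bound of $\bar w^\star_c+\epsilon/2$.

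\emph{Step 3: attaining the bound with policies (the main obstacle).} We need $\mu^c$ supported on policies whose payoffs are all at least $\bar w^\star_c-\epsilon/2$ and whose aggregate action flows match $f^{c\star}$ within $\epsilon$. The plan is to use threshold policies compatible with Assumption~\ref{ass:senseless_policies}. Let $a^-$ denote an action of $\supp f^{c\star}$ with nonpositive toll; one exists since $\sum_a f^{c\star}_a\tilde\tau_a= m^c\Rno$ and $\tilde\tau_a-\Rno/\Rd$ averages to zero under $f^{c\star}$. Let $a^+$ denote an expensive action in the support. A policy uses $a^-$ below a threshold $\theta$ and $a^+$ at or above it. Since $\Rno<\Rd/4$ and $\bar k\ge\bar k^\star$ is large, the token chain concentrates near $\theta$ away from the boundaries. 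By a renewal/flux argument, the long-run action frequencies then solve the token balance with equality up to $O(1/\bar k)$ and $O(1/\alpha)$ errors. Such two-action mixtures have payoff near the LP value restricted to that pair. Because $f^{c\star}$ is a basic-type optimum, decompose it as a convex combination of pairwise mixtures that each satisfy the token balance. Each mixture has reward exactly $\bar w^\star_c$, since the dual prices all support actions at $\bar w^\star_c$ modulo the balance. Assign mass to the corresponding threshold policies with weights from that decomposition, and set $\mu^c[k,u]=\eta^{c,u}(k)\mu^c[\Kcal,u]$. Then \eqref{eq:opt_pol_cond2} holds within $\epsilon$, and every used policy has payoff at least $\bar w^\star_c-\epsilon/2$. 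Combined with Step 2, this yields \eqref{eq:opt_pol_cond1}.

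The delicate part is making the stationary frequencies of threshold policies quantitatively close to the balance equation. This requires controlling boundary effects at $0$ and $\bar k$ and the integer granularity of the tolls, which is why both $\alpha^\star$ and $\bar k^\star$ appear.
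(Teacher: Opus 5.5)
Your architecture is the paper's. Steps~1--2 reproduce Proposition~\ref{prop:no_pol_is_better}: flux balance makes every policy's action flows feasible for the integer-toll LP, followed by a perturbation argument in $\Delta_a/\alpha$. Step~3 reproduces Propositions~\ref{prop:decomp}--\ref{prop:there_is_a_pol}. You decompose $f^{c\star}$ with respect to $\tilde\tau$ and absorb the rounding as an $O(1/\alpha)$ shift of each pair's proportions; this is valid because a two-action component has $w^\star_p<\bar w^\star_c<w^\star_q$, and it is a simpler substitute for the paper's QP/Berge argument.

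The step that fails as written is the error bookkeeping ``$O(1/\bar k)$ and $O(1/\alpha)$'': the two errors are coupled. The expensive tolls grow like $\tau^c_q\approx\alpha(w^\star_q-\bar w^\star_c)$. If $\bar k<\tau^c_q$, then $q$ is unaffordable for every policy in $\UDcbar$, so \eqref{eq:opt_pol_cond2} fails whenever $f^{c\star}_q>\epsilon$. Hence, under Condition~\ref{cond:distinct_payoff}, no single $\bar k^\star$ can serve all $\alpha\ge\alpha^\star$. What your argument actually proves is the lemma with $\bar k^\star=\bar k^\star(\alpha)$. The same holds for the paper: its Proposition~\ref{prop:there_is_a_pol} fixes $\alpha$ before choosing $\bar k^\star$, and its tail bound $\gamma^{\lfloor(\bar k-\tau^c_q+1)/\tau^c_q\rfloor-1}$ depends on $\tau^c_q$. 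Once $\bar k\ge\tau^c_q-\tau^c_p$, no jump of action $p$ is clipped, and the paper's identity $\Delta=\frac{\Rno}{\Rd}\,\eta^{c,u}(\bar k)/(\tau^c_q-\tau^c_p)$ already gives $\Delta=O(1/\alpha)$. So a $\bar k^\star$ linear in $\alpha$ suffices, and the tail estimate (where $\Rno<\Rd/4$ enters) is not needed.

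Three justifications also need repair, though the conclusions survive.

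\emph{Basic optimum.} In Step~3, $f^{c\star}$ is in general not basic; in a network game it typically splits over many paths. The decomposition comes instead from the fact that $f^{c\star}$ lies in the polytope $\{\nu\ge0:\ \supp\nu\subseteq\supp f^{c\star},\ \sum_a\nu_a=m^c\Rd,\ \sum_a(w^\star_a-\bar w^\star_c)\nu_a=0\}$. This polytope is cut out by two equalities, so $f^{c\star}$ is a convex combination of vertices with at most two nonzero entries.

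\emph{One-entry vertices.} These are support actions with $w^\star_a=\bar w^\star_c$ and integer toll $0$. They must be realized by constant policies, whose chains pile up at $\bar k$ rather than concentrating in the interior. This is harmless, and it is exactly the paper's case where Condition~\ref{cond:distinct_payoff} fails.

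\emph{Step~2.} There is no $\Rno/\Rd$ slack: using $\sum_a\nu_a=m^c\Rd$, those terms cancel exactly. Write $\tau^c_a(\alpha)=\Rno/\Rd+\alpha(w^\star_a-\bar w^\star_c)+e_a$ with $e_a\ge-1/2$. The constraint then becomes $\sum_a(w^\star_a-\bar w^\star_c+e_a/\alpha)\nu_a\le0$; your displayed inequality has the wrong sign and a spurious right-hand side. This bounds every policy's payoff by $\bar w^\star_c+1/(2\alpha)$ in one line, with no LP-continuity or nondegeneracy argument.
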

\begin{proof}
	See Appendix~\ref{sec:proof_lem_opt_pol}.
\end{proof}

Intuitively, for any $\epsilon >0$ and any class $c$, there is a joint state-policy distribution $\mu^c\in X^c$ that places mass on $\epsilon$-optimal policies and whose action flows differ at most by $\epsilon$ w.r.t\  $f_a^{c\star}$ with $a\in \Acal^c$. The main result of this paper follows from these two lemmas.

\begin{theorem}\label{th:opt_prices_MSNE}
	Under Assumptions~\ref{ass:cont}-\ref{ass:noise}, for each $c\in [C]$ consider a discretization $\tau^c_a(\alpha) = \mathrm{round}(\tilde{\tau}^c_a(\alpha))$ for all $a\in \Acal^c$ of the continuous tolls in \eqref{eq:opt_prices_cont} in Lemma~\ref{lem:opt_prices} with scaling factor $\alpha>0$. Then, for all $\delta>0$ there is $\epsilon>0$, $\alpha^\star>0$ and $\bar{k}^\star\in \N$ such that $|\bar \sigma_r(\mu) -\sigma^\star_c| \leq \delta$ and $|\bar w_c(\mu) -\bar w^\star_c| \leq \delta$ for all $r\in \Rcal$, all $c\in \Ccal$, all $\mu\in \MSNE_\epsilon(F,\phi)$, all $\alpha\geq\alpha^\star$, and all $\bar{k} \geq  \bar{k}^\star$.
\end{theorem}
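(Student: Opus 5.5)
The plan is to build, from Lemma~\ref{lem:opt_prices} and Lemma~\ref{lem:opt_pol}, one explicit state-policy distribution $\mu^\star\in X$ that is an $\epsilon'$-MSNE of the true game and has resource flows and class rewards close to $(\sigma^\star,\bar w^\star)$. Then we transfer this to every $\epsilon$-MSNE using the flow-uniqueness part of Theorem~\ref{th:unique_MSNE}, and finally move from flows to class rewards by continuity.

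\textbf{Step 1 (construction).} Fix a target accuracy $\epsilon_1>0$. Take $\alpha\geq\alpha^\star$ and $\bar k\geq\bar k^\star$ from Lemma~\ref{lem:opt_pol}, which yields for each class $c$ some $\mu^c\in X^c$ satisfying \eqref{eq:opt_pol_cond1}--\eqref{eq:opt_pol_cond2}. Since those conditions involve only the policy marginal through $\eta^{c,u}$, I will replace $\mu^c$ by $\mu^c[k,u]:=\eta^{c,u}(k)\mu^c[\Kcal,u]$ if necessary, so that \eqref{eq:MSNE_s} holds exactly.
\begin{itemize}
\item By \eqref{eq:opt_pol_cond2}, summed over actions using a resource and multiplied by $\Rd$ (with the normalization between $f^\star$ and $\mu$ tracked carefully), $\|\sigma(\mu^\star)-\sigma^\star\|\leq C_1\epsilon_1$, where $C_1$ depends only on $q$ and $\Rd$.
\item Lipschitz continuity of the $w_r$ (Assumption~\ref{ass:cont}) then gives $|w(a,\sigma(\mu^\star))-w^\star_a|\leq C_2\epsilon_1$ for all $a$.
\item Hence $|F^c_u(\mu^\star)-\sum_k\sum_a\eta^{c,u}(k)u(a|k)w^\star_a|\leq C_2\epsilon_1$ uniformly in $u$.
\end{itemize}
Combined with \eqref{eq:opt_pol_cond1}, this shows $\mu^\star\in\MSNE_{\epsilon_1+2C_2\epsilon_1}(F,\phi)$.

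\textbf{Step 2 (transfer to exact MSNE and to all $\epsilon$-MSNE).} This is the main obstacle. Theorem~\ref{th:unique_MSNE} controls $\epsilon$-MSNE relative to exact MSNE $\nu$, not relative to one another. I will argue that every exact MSNE $\nu$ has $\sigma(\nu)$ within $\delta/2$ of $\sigma(\mu^\star)$, and I would try this first via a compactness argument.
\begin{itemize}
\item Suppose not. Take $\epsilon_1\to0$, with the tolls and $\bar k$ held fixed so that $F,\phi$ do not change.
\item The constructed $\mu^\star_{\epsilon_1}$ are $\epsilon$-MSNE with $\epsilon\to 0$. By compactness of $X$ and continuity of $F$, any limit point is an exact MSNE.
\item By Theorem~\ref{th:unique_MSNE} that limit point has flow $\sigma(\nu)$, while by Step 1 its flow is $\sigma^\star$. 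Thus $\sigma(\nu)=\sigma^\star$, a contradiction.
\end{itemize}
The subtlety is that $\alpha^\star,\bar k^\star$ depend on $\epsilon$. So I would first fix $\epsilon_0$, choose $\alpha,\bar k$ accordingly, and then note that Lemma~\ref{lem:opt_pol} is monotone: larger $\alpha,\bar k$ only help. If uniformity over $\alpha\geq\alpha^\star$ becomes delicate, I would fall back on the Lipschitz/monotone structure used in the proof of Theorem~\ref{th:unique_MSNE}, namely that the game is strictly monotone in $\sigma$. That structure should give a quantitative bound $\|\sigma(\mu)-\sigma(\nu)\|\leq C\sqrt{\epsilon}$ between any two $\epsilon$-MSNE, with $C$ independent of $\alpha$ and $\bar k$.

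Once flows are controlled, the second part of Theorem~\ref{th:unique_MSNE} supplies $\epsilon>0$ such that every $\mu\in\MSNE_\epsilon(F,\phi)$ has $\|\sigma(\mu)-\sigma(\nu)\|\leq\delta/2$. By the triangle inequality, $|\sigma_r(\mu)-\sigma^\star_r|\leq\delta$.

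\textbf{Step 3 (class rewards).} Write $\bar w_c(\mu)$ as $\frac{1}{m^c}\sum_u\mu^c[\Kcal,u]$ times the stationary-averaged reward of $u$, up to an error of order $\epsilon$ from \eqref{eq:epsMSNE_s}. Mass above $\epsilon$ sits only on policies within $\epsilon$ of the best payoff, so $\bar w_c(\mu)$ is within $O(\epsilon)$ of $\max_v F^c_v(\mu)$. By the flow closeness from Step 2 and Lipschitz continuity, $\max_vF^c_v(\mu)$ is close to the value of the best policy under $w^\star$.

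By Lemma~\ref{lem:opt_pol} and LP duality in \eqref{eq:class_opt} (Lemma~\ref{lem:opt_prices}), this value is within $\epsilon$ of the optimum $\bar w^\star_c$. An upper bound also holds: no policy can beat the LP value, because its stationary token balance enforces the first constraint of \eqref{eq:class_opt}, with $\Rno$ accounting for noise and saturation at $\bar k$ only wasting tokens. Shrinking $\epsilon$ then yields $|\bar w_c(\mu)-\bar w_c^\star|\leq\delta$.
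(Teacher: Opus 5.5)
Your route is the paper's:
\begin{itemize}
\item build an approximate equilibrium with near-optimal action flows from Lemmas~\ref{lem:opt_prices} and~\ref{lem:opt_pol};
\item pass by Lipschitz continuity from the frozen rewards $w^\star_a$ to the true ones;
\item use the second part of Theorem~\ref{th:unique_MSNE} to reach every $\epsilon$-MSNE;
\item evaluate $\bar w_c(\mu)$ through the near-optimal policies that carry mass above $\epsilon$.
\end{itemize}
Your Step~3 makes explicit the comparison with the best frozen payoff, which the paper's final chain uses only implicitly. The paper does the transfer step by a single invocation of Theorem~\ref{th:unique_MSNE} and is silent on how $\epsilon$ depends on $(\alpha,\bar k)$. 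Since $\MSNE(F,\phi)\neq\emptyset$, that theorem already compares any two $\epsilon$-MSNE via the triangle inequality, so that part is no obstacle. You correctly single out the dependence on $(\alpha,\bar k)$ as the delicate point, but neither of your remedies works.

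The compactness argument needs $\epsilon_1\to0$ within a fixed game, which Lemma~\ref{lem:opt_pol} cannot provide. Also, larger $\alpha$ does not ``only help'': tolls scale with $\alpha$, so $\bar k$ must grow with $\alpha$ to keep the positive-toll actions affordable.

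Your fallback, a bound $C\sqrt{\epsilon}$ between \emph{any} two $\epsilon$-MSNE with $C$ independent of $(\alpha,\bar k)$, is false for the paper's definition. Condition \eqref{eq:epsMSNE_u} is vacuous for policies of mass at most $\epsilon$, and $n^c=|\UDcbar|$ grows without bound in $\bar k$. So for $\bar k$ large one can spread a class's mass over many policies at will. For example, suppose a class has a third action $r$ besides the two actions $p,q$ it uses at the optimum. There are order-$\bar k$ threshold policies mixing only $p$ and $r$, and spreading the mass over them (stationary states) gives an $\epsilon$-MSNE with no flow on $q$. Hence $\epsilon$ cannot be chosen independently of $\bar k$ by any argument; it has to be chosen after $(\alpha,\bar k)$. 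The same dimension dependence is hidden in the $O(\epsilon)$ of your Step~3 and in the paper's $\Ocal(\epsilon)$. That term absorbs up to $n^c\epsilon$ of low-mass policies and a stationarity error of order $(\bar k+1)n^c\epsilon$.

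What does hold uniformly, and suffices for the correctly quantified statement, is your $C\sqrt{\epsilon}$ idea applied only to the constructed $\mu^\star$ against an exact MSNE.
\begin{itemize}
\item The premise in \eqref{eq:opt_pol_cond1} is $\mu^c[\Kcal,u]>0$, so \emph{every} policy in the support of $\mu^\star$ is $\epsilon'$-optimal in the true game. Here $\epsilon'$ is independent of $(\alpha,\bar k)$, because your Step~1 estimates involve only action-level sums.
\item Let $y$ be the policy marginal of $\mu^\star$ (so $\mu^\star=\bar\mu(y)$, using exact stationarity) and $x^\star\in\NE(\Fcal)$. Then $\Fcal(y)^\top(x^\star-y)\le\epsilon'\sum_c m^c$, with no dimension factor.
\item Concavity of $U$ gives $\tilde U(\sigma^{\mathrm{NE}})-\tilde U(\sigma(\mu^\star))\le\epsilon'\sum_c m^c$.
\item $\tilde U$ is strongly concave with modulus $\kappa$ depending only on the $w_r$ and $\Rd$, and $\sigma^{\mathrm{NE}}$ is optimal over the convex set of achievable flows. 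Together these yield $\|\sigma^{\mathrm{NE}}-\sigma(\mu^\star)\|\le\sqrt{2\epsilon'\sum_c m^c/\kappa}$ uniformly in $(\alpha,\bar k)$.
\end{itemize}
So exact-MSNE flows are uniformly close to $\sigma^\star$. Your Step~3 at $\epsilon=0$, where $\bar w_c=\max_v F^c_v$, then gives the same for class rewards. For each fixed $(\alpha,\bar k)$, Theorem~\ref{th:unique_MSNE} and your Step~3, now with $\bar k$-dependent constants, supply the $\epsilon$.

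One minor point: replacing $\mu^c$ by its stationary projection does not preserve \eqref{eq:opt_pol_cond2}, which depends on the joint state-policy distribution. Take exact stationarity from the construction in the proof of Lemma~\ref{lem:opt_pol} instead; the paper uses it tacitly too when it calls that distribution an $\epsilon$-MSNE.
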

\begin{proof}
	See Appendix~\ref{sec:proof_th_opt_prices_MSNE}.
\end{proof}

\begin{mdframed}[style= callout]
	One can now bring the three main results of this paper together. Theorem~\ref{th:opt_intra_class_fair} establishes that for any toll map the intra-class fairness condition is satisfied with arbitrary accuracy in a set of $\epsilon_1$-MSNE. Theorem~\ref{th:opt_prices_MSNE} provides a closed-form expression for integer tolls for which an optimal combination of efficiency and fairness is achieved with arbitrary accuracy in a set of $\epsilon_2$-MSNE.
	By Theorem~\ref{th:global_stable}, the state of the mean-field token economy model (which is a good approximation of the finite-population model by Lemma~\ref{lem:mfg_approx}) enters a set of $\epsilon$-MSNE in finite time with $\epsilon = \min(\epsilon_1,\epsilon_2)$ for a sufficiently large time-scale separation. Hence, both goals of Problem~\ref{prob:design} are achieved.
\end{mdframed}



Since Theorem~\ref{th:global_stable} guarantees global convergence to the set of MSNE, the framework naturally extends to settings where some parameters vary slowly over time; one only needs to recompute the tolls as the parameters (e.g., class masses, resource availability, or fairness metrics) change.

\vspace{-0.2cm}

\section{Illustrative Application}\label{sec:illustrative_example}

We consider a network congestion game on the illustrative traffic network of Sioux Falls, USA \cite{TNRCT2025}. We consider a population of approximately $2\times 10^5$ agents, $76$ resources which are the roads of the traffic network, and $117$ classes each corresponding to agents that desire to travel between each origin-destination pair. The reward function of each resource is the symmetric of the travel time on that road link, which decreases monotonically as more agents use that link according to a quartic BPR model \cite{BPR1964}. We choose an average reward efficiency metric, i.e., $\Jeff(\sigma) = \sum_{r\in [\Rsf]}\sigma_r w_r(\sigma_r)$, and a minimum class-average reward inter-class fairness metric, i.e., $\JF(\bar{w}) = \min_{c\in[C]} \bar w_c$. We desire to maximize the performance metric $\JF(\bar{w}) + \gamma\Jeff(\sigma)$, where $\gamma = 0.1$ is a small regularization weight. Some simulation results are depicted in Fig.~\ref{fig:sim}. From Figs.~\ref{fig:w_bar} and~\ref{fig:cost} one can immediately see that, as expected, the average payoff of the classes converge close to the optimal average payoffs and the performance metric also converges to the optimum. The gap between the limit of the simulation results and the corresponding optima can be made arbitrarily close to zero by increasing $N$, $\alpha^\star$, and $\bar{k}^\star$.
Interestingly, Fig.~\ref{fig:cost} also depicts the asymptotic value of the performance metric if no incentive scheme were implemented (i.e., zero tolls for every action), which is roughly $13\%$ worse. Fig.~\ref{fig:tokens} shows the token distribution of one of the classes, which also converges to a stationary distribution. Fig.~\ref{fig:actions} shows the joint distribution of actions and tokens at the end of the simulation, which corresponds to the combination of different policies with identical reward rather than a single policy. For the sake of conciseness, the details of the simulation were omitted. Nonetheless, the MATLAB implementation used to carry out the design and simulation is available in an open-source repository at \weblink{https://github.com/fish-tue/token-economy}.



\begin{figure}[ht]
	\centering
	\begin{subfigure}[b]{\linewidth}
		\centering
		\includegraphics[width=0.82\linewidth]{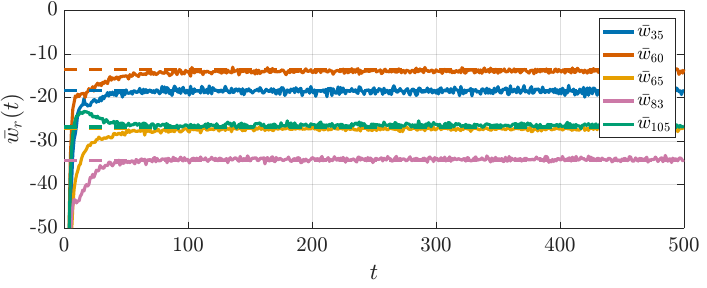} 
		\vspace{-0.3cm}
		\caption{Evolution of the resource utilization flows.}
		\label{fig:w_bar}
	\end{subfigure}
	
	\begin{subfigure}[b]{\linewidth}
		\centering
		\includegraphics[width=0.82\linewidth]{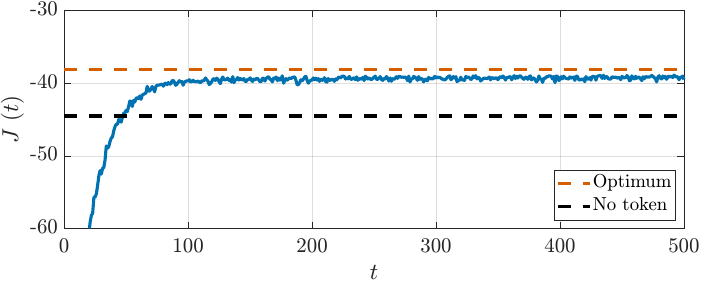}
		\vspace{-0.3cm}
		\caption{Evolution of the performance metric.}
		\label{fig:cost}
	\end{subfigure}
	\vspace{0.1cm} 
	\begin{subfigure}[b]{\linewidth}
		\centering
		\includegraphics[width=0.82\linewidth]{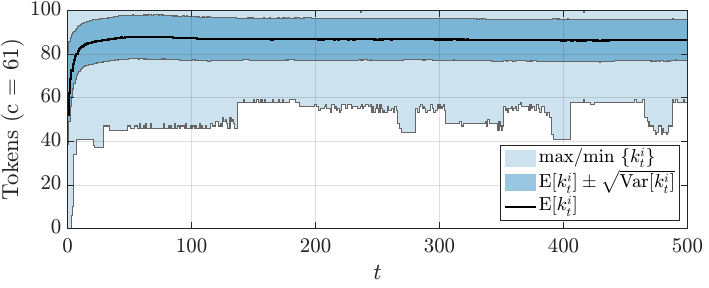}
		\vspace{-0.3cm}
		\caption{Evolution of the token distribution of class $61$.}
		\label{fig:tokens}
	\end{subfigure}%
	\hfill
	\begin{subfigure}[b]{\linewidth}
		\centering
		\includegraphics[width=0.82\linewidth]{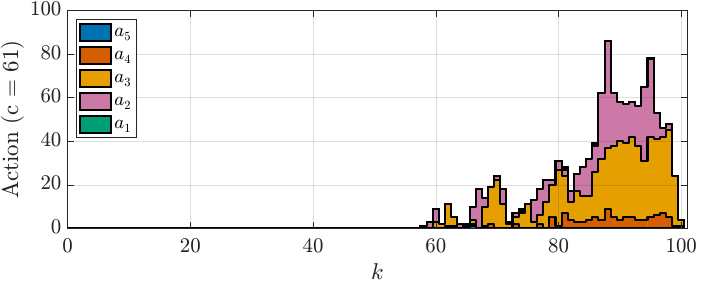}
		\vspace{-0.3cm}
		\caption{Token-action distribution of class $61$ at the end of the simulation.}
		\label{fig:actions}
	\end{subfigure}
	\caption{Simulation of the token economy for the illustrative traffic network.}
	\label{fig:sim}
	\vspace{-0.2cm}
\end{figure}


\section{Conclusion}\label{sec:conclusion}
\vspace{-0.1cm}
In this paper, we proposed a token economy framework for dynamic resource allocation in congestion games. We considered three performance metrics: (i) intra-class fairness, ensuring that users within the same class experience the same average reward over time regardless of individual factors such as wealth; (ii) inter-class fairness, which concerns how the average reward is distributed across classes (e.g., favoring classes that are worse off); and (iii) efficiency, which seeks to maximize the overall utility of the resources (e.g., maximizing the average reward or minimizing environmental impact).
The design problem for the token economy amounts to finding tolls that guarantee intra-class fairness and optimize a desired trade-off between inter-class fairness and efficiency. We showed that the token economy can be modeled as a dynamic game with many players and is well approximated by a mean-field model. A Nash-like equilibrium for this mean-field approximation was shown to exist and to be essentially unique (i.e., the equilibrium resource utilization flows are unique). Moreover, the state of the token economy converges to a neighborhood of the equilibria from any initial condition.
We further established that intra-class fairness is achieved for any choice of tolls, and we derived a closed-form expression for integer tolls that induce an equilibrium at the optimal trade-off between efficiency and inter-class fairness. Finally, numerical simulations illustrated and validated the theoretical results.

\vspace{-0.1cm}
\appendix
\vspace{-0.2cm}
\section{Proofs of Section~\ref{sec:model}}

\subsection{Proof of Lemma~\ref{lem:F_def}}\label{sec:proof_lem_F_def}

First, we show by contradiction that $\phi^{c,u}$, defined in \eqref{eq:equiv_jump_chain_noise}, has a unique recurrent communicating class. Indeed, assume that there are multiple recurrent communicating classes in $\phi^{c,u}$. Consider two distinct communicating classes $\Kcal_1$ and $\Kcal_2$ and let $\Kcal_1$ be the class with the lowest maximum token amount of the two. Due to the regularization noise in $\phi^{c,u}$ under Assumption~\ref{ass:noise}, there is a nonnull probability of transition from every state of $\Kcal_1$ to a state of $\Kcal_2$ in a finite number of jumps, which is a contradiction since then $\Kcal_1$ cannot be a recurrent communicating class. Second, from a simple extension of \cite[Theorem~3.5.2]{Norris1997} (for details on the extension see \cite[Lemma~2.2]{PedrosoAgazziEtAl2025MFGAvg}) it follows that continuous-time Markov chains whose embedded Markov chains have one and only one recurrent class admit a unique stationary distribution, which shows the first statement of the lemma immediately. Third, we prove the second claim. For any $c\in [C]$ consider any agent $i\in \Ccal_c$ that uses any policy $u\in \Ucal_D^c$. Under Assumption~\ref{ass:noise}, it follows from the analysis above that $\phi^{c,u}$ has one and only one recurrent communicating class. As a result, from a simple extension of \cite[Theorem~3.8.1]{Norris1997} (for details on the extension see \cite[Lemma~2.2]{PedrosoAgazziEtAl2025MFGAvg}), the state distribution of agent $i$ converges almost surely to $\eta^{c,u}$, from any initial condition, as $t\to \infty$. Additionally, by Assumption~\ref{ass:cont}, the single stage reward is continuous and takes arguments on a compact set, therefore it is bounded. Hence, for a fixed $\hat{\sigma}$, one can use the Dominated Convergence Theorem \cite[Theorem~9.1.2]{Rosenthal2006} to exchange the limit and expected value in \eqref{eq:def_Ji} to express the long-time average reward as \eqref{eq:Ji_avg}.\hfill$\qedsymbol$%
%
\section{Proofs of Section~\ref{sec:descriptive}}

\vspace{-0.1cm}
\subsection{Proof of Lemma~\ref{lem:mfg_approx}}\label{sec:proof_lem_mfg_approx}
Notice that the vector field of the ODE characterized by \eqref{eq:ODE_mu_ev} is in the tangent space of $X$ under Assumption~\ref{ass:rev_protocol} and is Lipschitz continuous under Assumptions~\ref{ass:cont} and~\ref{ass:rev_protocol}. Existence and uniqueness of a solution to \eqref{eq:ODE_mu_ev} follows from an extension to compact convex spaces of the Picard-Lindel\"{o}f Theorem \cite[Theorem~5.7]{Smirnov2002} and Lipschitz continuity follows from Gr\"onwall's Inequality \cite[Theorem~4.A.3]{Sandholm2010}. Finally, convergence in probability follows form a direct application of Kurtz's Theorem \cite[Theorems~10.2.1 and~10.2.3]{Sandholm2010}.\hfill$\qedsymbol$%

\subsection{Proof of Theorem~\ref{th:unique_MSNE}}\label{sec:proof_th_unique_MSNE}

First, we resort to the notion of steady-state game to obtain a simpler characterization of the set of MSNE. The steady-state game is characterized by the particularization of the payoff map $F$ when the state dynamics are stationary. Formally, let $x^c \in X^c_{\Ucal_D}$ denote a distribution of policies for class $c\in [C]$. Denote also the collection of policy distributions of all populations by $x:= \col(x^c, c\in [C])\in X_{\Ucal_D}$. Notice that, given $x\in X_{\Ucal_D}$, the state-policy distribution in steady-state associated with $x$ is denoted by $\bar{\mu}(x) \in X$ and characterized as $\bar{\mu}^c(x)[k,u] = x^c[u]\eta^{c,u}(k)$  for all $c\in [C]$, all $k\in \Kcal$, and all $u\in \UDcbar$, where $\eta^{c,u} \in \Pcal(\Scal)$ is the stationary state distribution of the continuous-time state transition Markov chain associated with $u$, which is unique by Lemma~\ref{lem:F_def} under Assumption~\ref{ass:noise}.
The steady-state payoff map of class $c\in [C]$ is a map $\Fcal^c : \bar{X} \to \R^{n^c}$ defined as $\Fcal^c(x) = F^c(\bar{\mu}(x))$. We also denote the collection of maps of all classes by $\Fcal := \col(\Fcal^c, c\in [C])$.

The steady-state game admits a standard definition of Nash equilibrium. Formally, we say that $x\in X_{\Ucal_D}$ is a Nash equilibrium of the steady-state game, which is denoted by $x\in \NE(\Fcal)$, if for all $c\in [C]$ and all $u\in \UDcbar$ \;$x^c[u]>0 \implies \Fcal^c_u(x)\geq \Fcal^c_v(x), \; \forall v\in \UDcbar$. Remarkably, comparing the definitions of MSNE of the mean-field game and the definition of NE of the steady-state game, one can immediately show the equivalence between both as follows.

\begin{proposition}\label{prop:NEss_iff_MSNE} 
	Consider $x\in X_{\Ucal_D}$ and $\mu = \bar{\mu}(x)$. Then $x\in  \NE(\Fcal)  \iff \mu \in \MSNE(F,\phi)$.
\end{proposition}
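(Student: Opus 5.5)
The proof amounts to unfolding the two definitions and checking that they match condition by condition. The starting point is the construction of $\mu=\bar{\mu}(x)$, namely $\mu^c[k,u]=x^c[u]\eta^{c,u}(k)$. From it I would derive two elementary identities.

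The first identity concerns the policy marginal. Since $\eta^{c,u}\in\Pcal(\Kcal)$ sums to one,
\begin{equation*}
\mu^c[\Kcal,u]=\sum_{k\in\Kcal}x^c[u]\eta^{c,u}(k)=x^c[u]
\end{equation*}
for all $c\in[C]$ and $u\in\UDcbar$.

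The second identity concerns the payoffs. By definition of the steady-state payoff map, $\Fcal^c(x)=F^c(\bar{\mu}(x))=F^c(\mu)$, so $\Fcal^c_u(x)=F^c_u(\mu)$ for every $u$.

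With these in hand, condition \eqref{eq:MSNE_s} holds automatically for $\mu$. Indeed, $\mu^c[k,u]=x^c[u]\eta^{c,u}(k)=\eta^{c,u}(k)\mu^c[\Kcal,u]$ by the first identity.

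Condition \eqref{eq:MSNE_u} is then the NE condition rewritten. Replacing $\mu^c[\Kcal,u]$ by $x^c[u]$ and $F^c_u(\mu)$ by $\Fcal^c_u(x)$ turns the implication
\begin{equation*}
\mu^c[\Kcal,u]>0\implies F^c_u(\mu)\geq F^c_v(\mu)\quad \forall v\in\UDcbar
\end{equation*}
into
\begin{equation*}
x^c[u]>0\implies \Fcal^c_u(x)\geq\Fcal^c_v(x)\quad \forall v\in\UDcbar,
\end{equation*}
which is exactly the definition of $x\in\NE(\Fcal)$.

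For the direction $(\Rightarrow)$: the NE condition gives \eqref{eq:MSNE_u}, and \eqref{eq:MSNE_s} holds by construction, so $\mu\in\MSNE(F,\phi)$. For the direction $(\Leftarrow)$: \eqref{eq:MSNE_u} for $\mu=\bar\mu(x)$ translates back to the NE condition for $x$, and \eqref{eq:MSNE_s} is not needed.

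I expect no real obstacle. The only thing to check is that $\bar\mu(x)\in X$, which holds because the masses sum to $\sum_u x^c[u]=m^c$. One might also remark that every $\mu\in\MSNE(F,\phi)$ has the form $\bar\mu(x)$ with $x^c[u]=\mu^c[\Kcal,u]$, by \eqref{eq:MSNE_s}. That remark is useful later but is not required for the stated equivalence.
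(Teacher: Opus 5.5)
Your proposal is correct and takes the same approach as the paper, which just notes that the equivalence follows immediately by comparing definitions. The identities $\mu^c[\Kcal,u]=x^c[u]$ and $\Fcal^c(x)=F^c(\bar\mu(x))$ turn \eqref{eq:MSNE_u} into the NE condition, while \eqref{eq:MSNE_s} holds by construction of $\bar\mu(x)$.
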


Second, using Proposition~\ref{prop:NEss_iff_MSNE}, one can analyze the Nash equilibria of the steady-state game to draw conclusions about the set of MSNE. Specifically, we resort to the fact that the steady-state game is a full potential game and to well-known results to characterize $\NE(\Fcal)$ as shown in the following result.

\begin{proposition}\label{prop:ss_is_potential}
	The steady-state game characterized by $\Fcal$ is a full potential game, i.e., $\Fcal$ admits a domain extension to $\Rnn^n$ and there exists a continuously differentiable function $U:\Rnn^n \to \R$ such that $ \Fcal = \nabla U$. Furthermore, $U$ can be expressed as $U(x) = \tilde{U}(\sigma(\bar\mu(x))), \forall x\in \Rnn^n$, where $\tilde U : \Rnn^R \to \R$ is strongly convex. Moreover, $\NE(\Fcal)$ is non-empty, compact, and convex, and the equilibrium resource flows are unique, i.e., $\sigma(\bar{\mu}(x)) = \sigma(\bar{\mu}(y))$ for all $x,y \in \NE(\Fcal)$.
\end{proposition}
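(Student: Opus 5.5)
The plan is to write $\Fcal$ explicitly as a composition of a linear map with a separable reward vector, and then to read the potential off a Beckmann-type integral.

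\textbf{Linear flows.} Fix $c\in[C]$ and $u\in\UDcbar$. Define
\begin{equation*}
b^{c,u}_r:=\Rd\sum_{k\in\Kcal}\sum_{a\in\Acal^c_{[\Rsf]}(r)}\eta^{c,u}(k)u(a|k)\geq 0,
\end{equation*}
which is the per-unit-mass rate at which policy $u$ uses resource $r$ in steady state. Since $\bar\mu^c(x)[k,u]=x^c[u]\eta^{c,u}(k)$, we get
\begin{equation*}
\sigma_r(\bar\mu(x))=\sum_{c}\sum_{u}x^c[u]\,b^{c,u}_r .
\end{equation*}
Thus $\sigma(\bar\mu(x))=Bx$ for a fixed nonnegative matrix $B\in\R^{\Rsf\times n}$. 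This formula extends verbatim to all $x\in\Rnn^n$, which gives the domain extension of $\Fcal$.

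\textbf{Payoff.} Expanding $w(a,\sigma)=\sum_{r\in a}w_r(\sigma_r)$ in \eqref{eq:def_F} and exchanging sums gives
\begin{equation*}
\Fcal^c_u(x)=\frac{1}{\Rd}\sum_r b^{c,u}_r\,w_r\big((Bx)_r\big),
\end{equation*}
that is, $\Fcal(x)=\Rd^{-1}B^\top W(Bx)$ with $W(\sigma)=\col(w_r(\sigma_r),r\in[\Rsf])$.

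\textbf{Potential.} Set $\tilde U(\sigma)=\Rd^{-1}\sum_r\int_0^{\sigma_r}w_r(s)\,\dint s$ and $U(x)=\tilde U(Bx)$. By the chain rule and continuity of each $w_r$ (Assumption~\ref{ass:cont}), $U$ is $C^1$ and $\nabla U=B^\top\nabla\tilde U(Bx)=\Fcal$. Note that $\tilde U$ is separable and each $w_r$ is uniformly decreasing with some slope bound $m_r>0$. Hence $\tilde U$ is strongly \emph{concave} with modulus $\min_r m_r/\Rd$, so $-\tilde U$ is strongly convex. I expect the statement's ``strongly convex'' to mean this sign convention, or equivalently the cost version with $\tilde U$ replaced by $-\tilde U$ and $\Fcal=-\nabla(-U)$. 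I would state this convention explicitly.

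\textbf{Nash equilibria.} For a full potential game, $\NE(\Fcal)$ coincides with the KKT points of $\max_{x\in X_{\Ucal_D}}U(x)$ (Sandholm, Thm.~3.1.3). Since $U=\tilde U\circ B$ is concave, this condition is also sufficient, so $\NE(\Fcal)=\argmax_{X_{\Ucal_D}}U$. The set $X_{\Ucal_D}$ is a product of simplices, hence compact, and $U$ is continuous. It follows that the argmax is nonempty and compact, and it is convex because $U$ is concave.

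\textbf{Uniqueness of flows.} Suppose $x,y$ are both maximizers with $Bx\neq By$. Then $z=(x+y)/2\in X_{\Ucal_D}$, and strict concavity of $\tilde U$ gives $U(z)>\tfrac12(U(x)+U(y))=\max U$, which is a contradiction. Hence $Bx=By$, i.e., $\sigma(\bar\mu(x))=\sigma(\bar\mu(y))$.

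\textbf{Main obstacle.} The delicate step is the bookkeeping identity $\Fcal^c_u=\Rd^{-1}\sum_r b^{c,u}_r w_r(\cdot)$. It requires verifying that $\sum_a u(a|k)\sum_{r\in a}$ matches $\sum_{r}\sum_{a\in\Acal^c_{[\Rsf]}(r)}u(a|k)$, which is just the exchange of the sums over actions and over resources contained in them. The same identity makes clear that the matrix $B$ is the same one that appears in $\sigma$.
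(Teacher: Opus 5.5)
Your proof is correct and follows the paper's route. You use the same Beckmann-type potential $U(x)=\Rd^{-1}\sum_r\int_0^{\sigma_r(\bar\mu(x))}w_r(s)\,\dint s$, get concavity by composing with the linear map $x\mapsto\sigma(\bar\mu(x))$, and identify $\NE(\Fcal)$ with $\argmax_{X_{\Ucal_D}}U$ via Sandholm; your explicit matrix $B$ and midpoint argument just spell out the ``algebraic manipulation'' and flow-uniqueness steps the paper leaves terse. You are also right about the sign: the paper's own proof shows that $\tilde U$ is strongly \emph{concave}, so ``strongly convex'' in the statement is a slip.
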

\begin{proof}
	Expanding $\Fcal^c_u$ with \eqref{eq:def_F} yields
	\begin{equation}\label{eq:F_x}
		\Fcal_u^c(x) = \sum_{k\in \Kcal} \sum_{a \in \Acal^c} \sum_{r\in a} \eta^{c,u}(k)u(a|k)w_r(\sigma_r(\bar{\mu}(x))).
	\end{equation}
	for all $c\in [C]$ and all $u\in \UDcbar$. Notice that for all $c\in [C]$ and all $u\in \UDcbar$, $\Fcal_u^c(x)$ in \eqref{eq:F_x} admits a domain extension to $\Rnn^{n}$ since the resource reward function $w_r$ are defined in $\Rnn$ and the maps $\sigma_r(\cdot)$ for $r\in [\Rsf]$ and $\bar{\mu}(\cdot)$ are consistent with the domain extension. Henceforth, we consider the domain extension of $\Fcal^c : \Rnn^n \to \R^{n^c}$.
	First, define $U(x)  =  \frac{1}{\Rd}\sum_{r\in [\Rsf]} \int_0^{\sigma_r(\bar{\mu}(x))}w_r(s) \dint s$
	as the candidate potential function. After algebraic manipulation, which we omit for the sake of conciseness, it follows that $\partial U(x)/\partial x^c[u] = \Fcal_u^c(x)$ for any $c\in [C]$ and any $u\in \UDcbar$. Hence, $\Fcal$ is a full potential game. %
	Second, we establish two concavity properties of $U$. Define  $\tilde U(\sigma) = (1/\Rd)\sum_{r\in [\Rsf]} \int_0^{\sigma_r}w_r(s) \dint s$ which can be written as $\tilde U(\sigma) = (1/\Rd)\sum_{r\in [\Rsf]} u_r(\sigma_r)$, where $u_r(z) = \int_0^zw_r(s)\dint s$ for all $r\in [\Rsf]$. The first derivative of $u_r(z)$ exists and is given by $\dint u_r/\dint z = w_r(z)$, which by Assumption~\ref{ass:cont} is strictly decreasing for all $r\in [\Rsf]$. Hence, $u_r(z)$ is strictly concave for all $r\in [\Rsf]$ and so is  $\tilde U(\sigma)$. Since the composition with an affine map preserves concavity and the map $\sigma(\bar{\mu}(\cdot))$ is linear, it follows that $U(x)$ is concave.
	Third, since $U(x)$ is concave, it is well known that $\NE(\Fcal)$ corresponds to the set of maximizers of $U(x)$ in $X_{\Ucal_D}$ \cite[Corollary~3.1.4]{Sandholm2010}. Since $X_{\Ucal_D}$ is compact and $U(x)$ is continuous it follows that $\NE(\Fcal)$ is non-empty, compact, and convex. Furthermore, since $\tilde U$ is strictly concave the maximizer of $\tilde U$ is unique and, therefore, the equilibrium resource flows are unique. Notice that one could state a stronger concavity property for $\tilde U$. Indeed, since $w_r(z)$ is uniformly decreasing, it follows immediately that $\tilde U$ is strongly concave.
\end{proof}

Together with the relation between $\NE(\Fcal)$ and $\MSNE(F,\phi)$ in Proposition~\ref{prop:NEss_iff_MSNE}, Proposition~\ref{prop:ss_is_potential} shows the first statement.
Consider now a generic $\epsilon>0$ and any $\mu \in \MSNE_\epsilon(F,\phi)$. Define $\zeta \in X$ as $\zeta^c[k,u] = \mu^c[\Kcal,u]\eta^{c,u}(k)$ for all $c\in[C]$, all $k\in \Kcal$, and all $u\in \UDcbar$. The distribution $\zeta$ has two important properties: (i)~$\zeta^c[\Kcal,u] = \mu^c[\Kcal,u]$ for all $c\in [C]$ and all $u\in \UDcbar$; and (ii)~by the definition of $\epsilon$-MSNE in Definition~\ref{def:MSNE} for all $c\in [C]$, all $k\in \Kcal$, and all $u\in \UDcbar$
\begin{equation}\label{eq:zeta_mu_error}
	|\zeta[k,u]-\mu[k,u]| \leq |\mu^c[\Kcal,u]\eta^{c,u}(k) -\mu^c[k,u] | \leq \epsilon.
\end{equation}
Define additionally $z\in X_{\Ucal_D}$ as $z^c[u] = \zeta^c[\Kcal,u]$ for all $c\in [C]$ and all $u\in \UDcbar$, which satisfies: (i)~$\bar\mu(z) = \zeta$ by construction; and (ii)~$\|\mu-\bar\mu(z)\|\leq l_1\epsilon$ for some $l_1>0$ by \eqref{eq:zeta_mu_error}. Moreover, for all $c\in [C]$ and all $u\in \UDcbar$ if $z^c[u] >\epsilon$ then $\mu^c[\Kcal,u]>\epsilon$ and by the definition of $\epsilon$-MSNE in Definition~\ref{def:MSNE} $F^c_u(\mu)\geq F^c_v(\mu)-\epsilon$ for all $v\in \UDcbar$. The payoff is globally Lipschitz continuous by Assumption~\ref{ass:cont}, and $\|\mu-\bar\mu(z)\|\leq l_1\epsilon$, thus there is $l_2>0$ such that $|F^c_v(\mu)-\Fcal_v^c(z)| \leq l_2\epsilon$ for all $c\in [C]$, all $v\in \UDcbar$, and any choice of $\mu \in \MSNE_\epsilon(F,\phi)$. One concludes that for all $c\in [C]$ and all $u\in \UDcbar$
\begin{equation}\label{eq:NE_z}
	z^c[u] > \epsilon \implies \Fcal_u^c(z) \geq  \Fcal_v^c(z) -l_3\epsilon\; \forall v\in \UDcbar,
\end{equation}
where $l_3:= 1 +2l_2$. Define $y\in X_{\Ucal_D}$ as $y\in \argmin_{y\in X_{\Ucal_D}} \|y-z\| \; \text{s.t.}\; z^c[u] \leq \epsilon \implies y^c[u] = 0 \; \forall c\in [C]\; \forall u\in \UDcbar$. The policy distribution $y$ satisfies: (i)~by construction there is $l_4>0$ such that  $\|z-y\|\leq l_4\epsilon$; and (ii)~$y^c[u]>0 \implies z^c[u]> \epsilon$.  Again, since the payoff is globally Lipschitz continuous and $\|z-y\|\leq l_4\epsilon$, it follows that there is $l_5 >0$ such that $|\Fcal^c_v(z)-\Fcal_v^c(y)| \leq l_5\epsilon$ for all $c\in [C]$ and all $v\in \UDcbar$. One concludes from \eqref{eq:NE_z} that for all $c\in [C]$ and all $u\in \UDcbar$
\begin{equation}\label{eq:NE_y}
	y^c[u] > 0 \implies \Fcal_u^c(y) \geq  \Fcal_v^c(y) -l_6\epsilon\; \forall v\in \UDcbar,
\end{equation}
where $l_6:= l_3+2l_5$. Furthermore, since $\|\mu-\bar\mu(z)\|\leq l_1\epsilon$ and $\|z-y\|\leq l_4\epsilon$, it follows there exists $l_7>0$ such that $\|\mu-\bar\mu(y)\| \leq l_7\epsilon$. Notice that for any policy distribution $x\in X_{\Ucal_D}$, using \eqref{eq:NE_y}, for all $c\in [C]$
\begin{equation*}
	\begin{split}
			\!\Fcal(y)^{\!\top}(x\!-\!y) &=\!\! \sum_{c \in [C]}\sum_{u\in \UDcbar} \!x^c[u]\Fcal_u^c(y) -  \! \sum_{c \in [C]}\sum_{u\in \UDcbar} \!y^c[u]\Fcal_u^c(y)\\
			&\leq  \sum_{c \in [C]}m^c\Fcal^{c\star} -  \sum_{c \in [C]}m^c(\Fcal^{c\star}  - l_6\epsilon) = l_6\epsilon,
	\end{split}
\end{equation*}
where $\Fcal^{c\star}  = \max_{v\in \UDcbar} \Fcal^c_v(y)$.
Particularizing for a $x^\star\in \NE(\Fcal)$ yields $\Fcal(y)^\top(x^\star-y) \leq   l_6\epsilon$, which using the potential function of Proposition~\ref{prop:ss_is_potential} can be written as $\nabla U(y)^\top(x^\star-y) \leq   l_6\epsilon$. Therefore, since the potential $U$ is concave by the proof of Proposition~\ref{prop:ss_is_potential}, it follows that $U(x^\star)-U(y) \leq \nabla U(y)^\top(x^\star-y) \leq   l_6\epsilon$, which can be rewritten as
\begin{equation}\label{eq:eps_NE_potential}
	\tilde U(\sigma(\bar\mu(x^\star)))-\tilde U(\sigma(\bar \mu (y))) \leq l_6\epsilon.
\end{equation}
Since $\tilde U$ is strongly concave by Proposition~\ref{prop:ss_is_potential} and the fact that $\sigma(\bar\mu(x^\star))$ is the unique maximizer of $\tilde U$, it follows from \eqref{eq:eps_NE_potential} that there exists $l_8>0$ such that $\|\sigma(\bar\mu(x^\star)) - \sigma(\bar \mu (y))\| \leq l_8 \epsilon$. Since  $\|\mu-\bar\mu(y)\| \leq l_7\epsilon$ and the fact that the map $\sigma(\cdot)$ is linear one concludes that there exists $l_9>0$ such that $\|\sigma(\bar\mu(x^\star)) - \sigma(\mu)\| \leq l_9\epsilon$. By the first statement of the result, resource utilization rates are unique for all MSNE so we conclude that  $\|\sigma(\mu^\star) - \sigma(\mu)\| \leq l_9\epsilon$ for all $\mu^\star \in \MSNE(F,\phi)$.
Hence, for all $\delta>0$ one can choose $\epsilon = \delta/l_9$ such that $\|\sigma(\mu^\star) - \sigma(\mu)\| \leq \delta$ for all $\mu \in \MSNE_\epsilon(F,\phi)$ and all $\mu^\star \in \MSNE(F,\phi)$. \hfill$\qedsymbol$%

\subsection{Proof of Theorem~\ref{th:global_stable}}\label{sec:proof_th_global_stable}
Since the steady-state game $\Fcal$ is a full potential game by Proposition~\ref{prop:ss_is_potential}, then we are in the conditions of \cite[Corollary~4]{PedrosoAgazziEtAl2025MFGAvgII}, which shows that for all $B>0$ and all $\mu(0)\in \mathrm{int}(X)$ there exist $\delta,T >0$ such that for all $\Rr/(\Rd+\Rno) < \delta$ trajectories $\mu(t)$ of \eqref{eq:ODE_mu_ev} satisfy $d_{\MSNE(F,\phi)}(\mu(t))\leq B\;\;\; \forall t \geq T$. Here, the distance of a point $x$ to a set $\Acal$ is defined as $d_{\Acal}(x):= \inf_{y\in \Acal} \|x-y\|$. Hence, we only need to show that for any $\epsilon >0$ there is  $B>0$  such that $d_{\MSNE(F,\phi)}(\mu)\leq B \implies \mu \in \MSNE_\epsilon(F,\phi)$ for all $\mu \in X$.  Notice that $\MSNE(F,\phi)$ is compact by Theorem~\ref{th:unique_MSNE}, and thus closed. Therefore, the infimum in the definition of distance of a point to $\MSNE(F,\phi)$ is attained in $\MSNE(F,\phi)$. It follows that if $d_{\MSNE(F,\phi)}(\mu)\leq B$ for some $B> 0$ then there is $\nu \in \MSNE(F,\phi)$ such that $\|\mu-\nu\|\leq B$. Thus, henceforth, consider any $\mu\in X$ and any $\nu \in \MSNE(F,\phi)$ such that $\|\mu-\nu\|\leq B$  for a generic $B>0$.
First, for any $c\in [C]$ and any $u\in \UDcbar$, if $\mu^c[\Kcal,u] > 0$ either (i)~$\nu^c[\Kcal,u] > 0$ or (ii)~$\nu^c[\Kcal,u] = 0$. In case~(i), it follows by the definition of MSNE that $F^c_u(\nu)  \geq F^c_v(\nu)$ for all $v\in \UDcbar$. Therefore, since the payoff function is continuous by Assumption~\ref{ass:cont}, it follows that there exists $B_1>0$ such that for all $B\leq B_1$ one has $F^c_u(\mu)  \geq F^c_v(\mu)-\epsilon$. In case~(ii), for all $B\leq B_2 = \epsilon$ one has $\|\mu-\nu\|\leq B_2 = \epsilon$, hence it follows that $\mu^c[\Kcal,u] -\nu^c[\Kcal,u] = \mu^c[\Kcal,u] \leq B_2 = \epsilon$. 
Second,  for all $c\in [C]$, all $u\in \UDcbar$, and all $k\in \Kcal$
\begin{equation*}
	\begin{split}
		& |\mu^c[k,u] - \eta^{c,u}(k)\mu^c[\Kcal,u]| \\
		 = &	|\nu^c[k,u] + (\mu^c[k,u] - \nu^c[k,u]) - \eta^{c,u}(k)\nu^c[\Kcal,u] 
		 \\ & \quad \quad -\eta^{c,u}(k)(\mu^c[\Kcal,u]-\nu^c[\Kcal,u])|\\
		 = &	|(\mu^c[k,u] - \nu^c[k,u]) - \eta^{c,u}(k)(\mu^c[\Kcal,u]-\nu^c[\Kcal,u])|  \leq 2B.
	\end{split}
\end{equation*}
Hence, $|\mu^c[k,u] - \eta^{c,u}(k)\mu^c[\Kcal,u]| \leq \epsilon$  for all $B\leq B_3 = \epsilon/2$. One concludes that by choosing $B = \min(B_1,B_2,B_3)$ the conditions \eqref{eq:epsMSNE_u} and \eqref{eq:epsMSNE_s} of the definition of a $\epsilon$-MSNE hold for $\mu$. \hfill$\qedsymbol$%

\section{Proofs of Section~\ref{sec:design}}

\subsection{Proof of Theorem~\ref{th:opt_intra_class_fair}}\label{sec:proof_th_opt_intra_class_fair}
Consider any class $c\in [C]$ and let $i,j$ represent independent samples drawn from $\Ccal^c$ uniformly at random. Under Assumptions~\ref{ass:cont} and \ref{ass:noise} we are in the conditions of Lemma~\ref{lem:F_def}, so $J^i$ and $J^j$ can be expressed by \eqref{eq:Ji_avg}. For any $\delta \in (0,1)$ let us choose $\epsilon = \delta m^c/|\Ucal_D^c| \leq \delta$. It follows from the definition of $\epsilon$-MSNE in Definition~\ref{def:MSNE} that $J^i(\mu) \geq J^j(\mu) -\delta$ with probability greater than or equal to $1-\epsilon|\Ucal_D^c|/m^c = 1-\delta$ and $J^j(\mu) \geq J^i(\mu) -\delta$ with probability greater than or equal to $1-\epsilon|\Ucal_D^c|/m^c = 1-\delta$ for all $\mu\in \MSNE_\epsilon(F,\phi)$, which immediately establishes that $|J^j(\mu) - J^i(\mu)| \leq \delta$ with probability $(1-\delta)^2$. The same argument can be used to show the statement when $\delta = 0$.\hfill$\qedsymbol$%

\subsection{Proof of Lemma~\ref{lem:opt_prices}}\label{sec:proof_lem_opt_prices} 
The proof of this result resorts to strong duality of linear programming \cite[Chap.~5]{BoydVandenberghe2004}. Consider the dual of \eqref{eq:class_opt} 
\begin{equation}\label{eq:class_opt_dual}
	\begin{aligned}
		\underset{y^c_t, y^c_c}{\min}\quad & {m^c\Rno y^c_t+m^c\Rd y^c_m} \\
		\mathrm{s.t.}\quad  
		& { y_t^c \geq 0}\\
		& { y_m^c\in \R}\\
		& \tilde{\tau}^c_ay^c_t+y^c_m \geq w^\star_a, \quad a \in \Acal^c.
	\end{aligned}
\end{equation}
The idea of the proof is to first show that $f^{c\star}_a$ with $a\in \Acal^c$ satisfies the constraints of \eqref{eq:class_opt} given the tolls in \eqref{eq:opt_prices_cont}. Second, we show that there exist $y^{c\star}_t$ and $y^{c\star}_m$ which satisfy the constraints of the dual \eqref{eq:class_opt_dual} and that are complementary to $f^{c\star}_a$ with $a\in \Acal^c$. Then, by strong LP duality \cite[Chap.~5]{BoydVandenberghe2004} $f^{c\star}_a$ with $a\in \Acal^c$  and $(y^{c\star}_t,y^{c\star}_m)$ are solutions to \eqref{eq:class_opt} and \eqref{eq:class_opt_dual}, respectively.
First, it follows immediately from the constraints of \eqref{eq:sys_opt} that $\sum\nolimits_{a\in \Acal^c} f^{c\star}_a  = m^c\Rd$ and that $f_a^{c\star} \geq 0$ for all $a\in \Acal^c$. Moreover, from \eqref{eq:opt_prices_cont} one can write
\begin{equation}\label{eq:equality_primal}
	\begin{split}
	\!\!\!\sum_{a\in\Acal^c} \!\!f_a^{c\star} \tilde{\tau}_a^c &\!=\! (\Rno/\Rd)\!\! \!\sum_{a\in\Acal^c} \!f_a^{c\star}\!+\! \alpha\!\!\! \sum_{a\in\Acal^c} \!\!f_a^{c\star}  w_a^\star \!-\! \alpha \bar{w}_c^\star\!\!\!\sum_{a\in\Acal^c} \!f_a^{c\star}\!\!\! \\
	& = m^c\Rno \!+\! \alpha m^c\Rd \bar{w}_c^\star \!-\! \alpha  m^c\Rd \bar{w}_c^\star \!=\! m^c\Rno.\!\!\!
\end{split}
\end{equation}
Therefore, one concludes that $f^{c\star}_a$ with $a\in \Acal^c$ satisfies the constraints of \eqref{eq:class_opt} and that the first inequality is active. Second, we show that there is $(y^{c\star}_t,y^{c\star}_m)$ that satisfies the constraints of \eqref{eq:class_opt_dual}.  Consider the candidate solution $y^{c\star}_t = 1/\alpha$ and $y^{c\star}_m = -\Rno/(\alpha \Rd) + \bar{w}_c^\star$. Notice that $y_t^{c\star} \geq 0$ and $y_m^{c\star} \in  \R$. Furthermore, for any $a\in \Acal^c$
\begin{equation}\label{eq:ineq_dual}
	\begin{cases}
		\!\tilde{\tau}^c_ay^c_t \!+\! y^c_m = \frac{\Rno}{\alpha \Rd}\!+\!\frac{\alpha(w_a^\star-\bar{w}_c^\star)}{\alpha} -\frac{\Rno}{\alpha \Rd} \!+\! \bar{w}_c^\star = w_a^\star, & f^{c\star}_a >0\\
		\!\tilde{\tau}^c_ay^c_t\!+\!y^c_m \geq  \frac{\Rno}{\alpha \Rd}+\frac{\alpha(w_a^\star-\bar{w}_c^\star)}{\alpha} -\frac{\Rno}{\alpha \Rd} \!+\! \bar{w}_c^\star \!=\! w_a^\star, & f^{c\star}_a  = 0.
	\end{cases}
\end{equation}
Therefore,  $(y^{c\star}_t,y^{c\star}_m)$ satisfies the constraints of \eqref{eq:class_opt_dual}. Third, we show that $f^{c\star}_a$ with $a\in \Acal^c$ and $(y^{c\star}_t,y^{c\star}_m)$ are complementary. Indeed $f^{c\star}_a >0 \implies \tilde{\tau}^c_ay^c_t+y^c_m = w_a^\star$ for all $a\in \Acal^c$ by \eqref{eq:ineq_dual}. Conversely, $\tilde{\tau}^c_ay^c_t+y^c_m > w_a^\star \implies f^{c\star}_a = 0 $ for all $a\in \Acal^c$ also by \eqref{eq:ineq_dual}. Moreover, $y_t^c = 1/\alpha >0$ and $\sum_{a\in\Acal^c} f_a^{c\star} \tilde{\tau}_a^c = m^c\Rno$ by \eqref{eq:equality_primal}. Then, by strong LP duality \cite[Chap.~5]{BoydVandenberghe2004}, $f^{c\star}_a$ with $a\in \Acal^c$  is a solution to \eqref{eq:class_opt}.\hfill$\qedsymbol$%


\subsection{Proof of Lemma~\ref{lem:opt_pol}}\label{sec:proof_lem_opt_pol}

This proof relies on three propositions, whose proofs are presented at the end of the section. The proof of Lemma~\ref{lem:opt_pol} is split in two cases, depending on whether the following condition is satisfied for each class.
\begin{condition}\label{cond:distinct_payoff}
	For a given class $c\in [C]$ there is $a \in \Acal^c$  such that $f^{c\star}_{a}>0$ and $w_a^\star \neq  \bar{w}_c^\star$.
\end{condition}%
First, under Condition~\ref{cond:distinct_payoff}, Proposition~\ref{prop:decomp} shows that after discretizing the tolls of Lemma~\ref{lem:opt_prices}, it is possible to decompose the optimal flows of each class $c\in[C]$ with arbitrary precision into a finite number of flows $g^{c,i}_a$ with $i = 1,2,\ldots,L$ and $a\in \Acal^c$ such that for each $i$: (i)~two and only two actions are used; (ii)~the rate of earning tokens associated with the flows is null; (iii)~the average payoff of $g^{c,i}_a$ with $a\in \Acal^c$ can be made arbitrarily close to the average payoff of $f^{c~\star}_a$  with $a\in \Acal^c$. 


\begin{proposition}\label{prop:decomp}
	Under Assumption~\ref{ass:noise}, for $c\in [C]$ under Condition~\ref{cond:distinct_payoff}, consider a discretization $\tau^c_a(\alpha) = \mathrm{round}(\tilde{\tau}^c_a(\alpha))$ of  the continuous tolls in \eqref{eq:opt_prices_cont} in Lemma~\ref{lem:opt_prices} with scaling factor $\alpha>0$ for all $a\in \Acal^c$. Then for all $\epsilon>0$ there is $\alpha^\star>0$ such that for all $\alpha\geq\alpha^\star$ there exists a finite number $L$ of action flows $g^{c,i}_a \geq 0$ with $i = 1,2,\ldots,L$ and $a\in \Acal^c$ such that 
	\begin{enumerate}[(a)]
		\item For all $i = 1,2,\ldots, L$, there exist two and only two nonnull $g^{c,i}_a$ with $a\in \Acal^c$;
		\item For all $i = 1,2,\ldots, L$, $m^c\Rno {\sum_{a\in \Acal^c} g^{c,i}_a}/({m^c\Rd}) -  \sum_{a\in \Acal^c} \tau^c_a(\alpha) g^{c,i}_a = 0$ 
		\item $\sum_{i = 1}^L \sum_{a\in \Acal^c}g^{c,i}_a = m^c\Rd$;
		\item For all $a\in \Acal^c$, $\big|\sum_{i = 1}^L g_a^{c,i} - f_a^{c\star}  \big |\leq \epsilon$;
		\item For all $i \!=\! 1, 2, \ldots, L$, $\big|\!\sum_{a\in \Acal}\!w_a^\star g_a^{c,i} \!- \bar{w}^\star \sum_{a\in \Acal} g_a^{c,i}     \big | \!\leq \epsilon$.%
	\end{enumerate}
\end{proposition}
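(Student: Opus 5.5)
The plan is to recast everything in terms of the per-unit token balance of each action and then decompose the optimal class flows as a transportation problem between token-earning and token-spending actions. For each $a\in\Acal^c$ define $b_a(\alpha) := \Rno/\Rd - \tau^c_a(\alpha)$. Then item~(b) reads $\sum_a b_a(\alpha) g^{c,i}_a = 0$. Since $|\tau^c_a(\alpha)-\tilde\tau^c_a(\alpha)|\le 1/2$, Lemma~\ref{lem:opt_prices} gives, for every $a$ with $f^{c\star}_a>0$,
\begin{equation*}
b_a(\alpha) = -\alpha(w^\star_a-\bar w^\star_c) + e_a, \qquad |e_a|\le 1/2 .
\end{equation*}

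\textbf{Step 1: signs of the balances.} I will show that every used action has a nonzero balance with a definite sign once $\alpha$ is large. There are three cases.
\begin{itemize}
\item If $w^\star_a=\bar w^\star_c$, then $\tau^c_a=\mathrm{round}(\Rno/\Rd)=0$ because $\Rno/\Rd<1/4$ (Assumption~\ref{ass:noise}). Hence $b_a=\Rno/\Rd>0$.
\item If $w^\star_a<\bar w^\star_c$, then $b_a\ge \alpha(\bar w^\star_c-w^\star_a)-1/2>0$.
\item If $w^\star_a>\bar w^\star_c$, then $b_a\le -\alpha(w^\star_a-\bar w^\star_c)+1/2\to-\infty$.
\end{itemize}
The constraints of \eqref{eq:sys_opt} give $\sum_a f^{c\star}_a(w^\star_a-\bar w^\star_c)=0$. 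Together with Condition~\ref{cond:distinct_payoff}, this forces the used actions to include at least one with $w^\star_a>\bar w^\star_c$ and at least one with $w^\star_a<\bar w^\star_c$. So both the positive set $P$ and the negative set $M$ are nonempty. Moreover, $\min_{a\in M}|b_a|\ge \alpha\beta-1/2$ with $\beta:=\min_{a\in M}(w^\star_a-\bar w^\star_c)>0$.

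\textbf{Step 2: restoring exact token balance.} Using \eqref{eq:equality_primal}, the residual is
\begin{equation*}
\Delta:=\sum_a f^{c\star}_a b_a = \sum_a f^{c\star}_a(\tilde\tau^c_a-\tau^c_a),
\end{equation*}
so $|\Delta|\le m^c\Rd/2$, which is independent of $\alpha$. I will build a perturbed flow $f'$ that is supported on the same actions, keeps $\sum_a f'_a=m^c\Rd$, and satisfies $\sum_a f'_a b_a=0$. To do this, move a mass $\theta$ between a fixed $p\in P$ and a fixed $q\in M$. This changes the balance by $\theta(b_p-b_q)$, and $b_p-b_q\ge\alpha\beta-1/2$. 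The required mass is therefore $\theta=O(1/\alpha)$, which for $\alpha$ large is at most both $\epsilon/2$ and $\min_{a:f^{c\star}_a>0}f^{c\star}_a/2$. So nonnegativity is preserved, and $|f'_a-f^{c\star}_a|\le\epsilon$, giving~(d).

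I expect this step to be the main obstacle. The difficulty is ensuring a perturbation of the right sign exists without emptying any support. If it fails I would fall back to rescaling all flows on the $M$ side by a common factor $1+O(1/\alpha)$ and compensating on the $P$ side.

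\textbf{Step 3: pairwise decomposition.} With $f'$ balanced, $\sum_{p\in P}f'_p b_p=\sum_{q\in M}f'_q|b_q|$. I will run the greedy north-west-corner transport between the "supplies" $f'_pb_p$ and the "demands" $f'_q|b_q|$. Each elementary transport of token amount $s$ from $p$ to $q$ defines a flow $g^{c,i}$ with $g^{c,i}_p=s/b_p$ and $g^{c,i}_q=s/|b_q|$, and zero elsewhere. This yields at most $|\Acal^c|$ pieces, so $L$ is finite. Each piece has exactly two nonzero entries, giving~(a). Each piece has zero token balance, giving~(b). The pieces sum back to $f'$, giving~(c) and~(d).

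For~(e), invert the expression for $b_a$ from the first paragraph: for $a\in P\cup M$, $w^\star_a-\bar w^\star_c=(e_a-b_a)/\alpha$. Hence
\begin{equation*}
\sum_a (w^\star_a-\bar w^\star_c)g^{c,i}_a=\frac1\alpha\sum_a e_a g^{c,i}_a-\frac1\alpha\sum_a b_a g^{c,i}_a=\frac1\alpha\sum_a e_a g^{c,i}_a,
\end{equation*}
where the last equality uses~(b). Its absolute value is at most $m^c\Rd/(2\alpha)$, which is at most $\epsilon$ for $\alpha\ge\alpha^\star$.
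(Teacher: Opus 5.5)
Your proof is correct, and it takes a genuinely different route from the paper's. The paper first builds an exact decomposition for the continuous tolls $\tilde\tau^c_a(\alpha)$, allowing single-action pieces whenever $\tilde\tau^c_p(\alpha)=\Rno/\Rd$. It then poses the QP \eqref{eq:decomp_QP} over pair-supported decompositions with the rounded tolls, and excludes single-action pieces because $\Rno/\Rd$ is not an integer. Finally, it obtains (d)--(e) non-constructively: it writes the rounding error as $\theta_a=\Delta_a/\alpha$ and invokes Berge's maximum theorem for continuity of the optimal value at $\theta=0$.

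You instead work directly with the rounded tolls, in three moves. First, you fix the sign of every balance $b_a$ on the support of $f^{c\star}$. Your observation that actions with $w^\star_a=\bar w^\star_c$ receive toll $0$, and hence balance $\Rno/\Rd>0$, is what rules out single-action pieces from the start. Second, you repair the $O(1)$ token residual by an $O(1/\alpha)$ mass shift. Third, you split the balanced flow with a transportation scheme, so that (e) follows algebraically from (b).

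Your route gives several concrete gains:
\begin{itemize}
\item an explicit threshold $\alpha^\star$;
\item explicit $O(1/\alpha)$ errors in (d) and (e);
\item $L\le|\Acal^c|$, with no zero pieces to discard;
\item a self-contained substitute for the lower hemicontinuity of the feasible set, which the paper asserts without proof. Establishing that property would require essentially the feasible perturbation you build in Step~2.
\end{itemize}
The paper's QP/Berge formulation is more generic, but it is purely qualitative.

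Finally, the obstacle you anticipate in Step~2 does not arise. The shift $\theta$ may have either sign, and your bound $|\theta|\le (m^c\Rd/2)/(\alpha\beta-1/2)$ already keeps it below half of the smallest positive $f^{c\star}_a$. So no support is emptied, and the fallback is unnecessary.
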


Second, we consider each of the pairs of flows that result from the decomposition detailed in Proposition~\ref{prop:decomp}. We show that for a sufficiently large maximum amount of tokens $\bar{k}$, there is a policy whose stationary action distribution can be made arbitrarily close to the proportion of each pair of flows of the decomposition.

\begin{proposition}\label{prop:there_is_a_pol}
	Under Assumptions~\ref{ass:senseless_policies} and~\ref{ass:noise}, for $c\in [C]$ under Condition~\ref{cond:distinct_payoff} consider a discretization $\tau^c_a = \mathrm{round}(\tilde{\tau}^c_a)$ of the continuous tolls in \eqref{eq:opt_prices_cont} in Lemma~\ref{lem:opt_prices} for some scaling factor $\alpha>0$ for all $a\in \Acal_c$. Consider a pair of actions $p,q \in \Acal^c$ and flows $g_p^c,g_q^c>0$ with unitary sum, i.e., $g_p^c+ g_q^c = 1$, such that $\Rno/\Rd - \tau^c_p g^{c}_p -  \tau^c_q g^{c}_q = 0$. Then, for all $\epsilon >0$ there exists $\bar{k}^\star \in \N$ such that for all $\bar{k}\geq \bar{k}^\star$ there is a policy $u\in \UDcbar$ such that 
	\begin{equation*}
		\Big | g_a^c-  \sum_{k\in \Kcal} \eta^{c,u}(k)u(a|k)\Big| \leq \epsilon, \quad \forall a\in \{p,q\}.
	\end{equation*}
\end{proposition}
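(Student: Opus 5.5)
The plan is to use a single \emph{threshold policy} and identify its stationary action frequencies through a drift-balance (conservation of tokens) identity. The perturbation caused by the cap $\bar{k}$ becomes negligible as $\bar{k}\to\infty$.

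\textbf{Step 1: sign structure of the tolls.} Since $g_p^c,g_q^c>0$, $g_p^c+g_q^c=1$ and $\tau^c_p g^c_p+\tau^c_q g^c_q=\Rno/\Rd\in(0,1/4)$, the tolls cannot be equal, because an integer cannot lie in $(0,1/4)$. Label the actions so that $\tau^c_p<\tau^c_q$. Then $\tau^c_p<\Rno/\Rd<1/4$, so $\tau^c_p\le 0$ because it is an integer. Likewise $\tau^c_q>\Rno/\Rd>0$, so $\tau^c_q\ge 1$.

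\textbf{Step 2: the policy.} For a threshold $k_0=\lfloor\bar{k}/2\rfloor$, define $u(k)=p$ for $k<k_0$ and $u(k)=q$ for $k\ge k_0$, with $\bar{k}$ large enough that $k_0\ge\tau^c_q$. This policy is feasible, since $p$ is always affordable and $q$ is used only when $k\ge\tau^c_q$. It is also monotone in tolls, so $u\in\UDcbar$ and Assumption~\ref{ass:senseless_policies} is respected. Any other action of $\Acal^c$ is never used. Only deterministic integer jumps occur, and the token amount never goes below $0$.

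\textbf{Step 3: drift-balance identity.} Let $\eta=\eta^{c,u}$, which is unique by Lemma~\ref{lem:F_def}, and let $\pi_a=\sum_k\eta(k)u(a|k)$ for $a\in\{p,q\}$, so that $\pi_p+\pi_q=1$. Apply stationarity to the test function $h(k)=k$: the expected rate of change of $h$ under $\eta$ is zero. Without the cap, this rate equals $\Rno-\Rd(\tau^c_p\pi_p+\tau^c_q\pi_q)$. The cap modifies it only on the states within distance $\max(|\tau^c_p|,1)$ of $\bar{k}$, where an action with nonpositive toll or a noise gift is truncated. Hence
\begin{equation*}
\big|\Rno-\Rd(\tau^c_p\pi_p+\tau^c_q\pi_q)\big|\le K\,\eta\big(\{k\ge \bar{k}-M\}\big),
\end{equation*}
with constants $K,M$ independent of $\bar{k}$. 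Substitute $\pi_q=1-\pi_p$ and compare with the same relation for $g^c$, which holds exactly. Since $\tau^c_q-\tau^c_p\ge1$, this gives $|\pi_a-g^c_a|\le K'\,\eta(\{k\ge\bar{k}-M\})$ for $a\in\{p,q\}$.

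\textbf{Step 4 (main obstacle): uniform tail bound.} It remains to show that $\eta(\{k\ge \bar{k}-M\})\to0$ as $\bar{k}\to\infty$, uniformly in the sense needed. Above $k_0$ the chain has strictly negative drift, $\Rno-\Rd\tau^c_q\le \Rno-\Rd<0$, and its upward jumps are bounded (only $+1$ from noise). I would use a Foster--Lyapunov argument with $V(k)=e^{\theta(k-k_0)}$ and $\theta>0$ small. On $\{k\ge k_0\}$ the generator satisfies
\begin{equation*}
\mathcal{L}V\le-\gamma V,
\end{equation*}
with $\gamma>0$ independent of $\bar{k}$, because the cap only decreases $V$. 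Below $k_0$, $V$ is bounded by $1$ and $\mathcal{L}V$ is bounded by a constant. Integrating $\mathcal{L}V$ against $\eta$ gives zero, which yields $\EV_\eta[V]\le C$ uniformly in $\bar{k}$. Markov's inequality then gives
\begin{equation*}
\eta(\{k\ge\bar{k}-M\})\le C e^{-\theta(\bar{k}/2-M-1)}\to0.
\end{equation*}
Choosing $\bar{k}^\star$ so that $K'Ce^{-\theta(\bar{k}^\star/2-M-1)}\le\epsilon$ completes the argument. Because the threshold moves with $\bar{k}$, the same construction works for every $\bar{k}\ge\bar{k}^\star$.
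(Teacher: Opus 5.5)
Your proof is correct. It follows the paper's overall architecture: a two-action threshold policy, then a token drift-balance identity showing that the action frequencies deviate from $(g^c_p,g^c_q)$ only through stationary mass near the cap, then a tail estimate. The key tail estimate, however, is done differently.

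\emph{The paper's route.} It puts the threshold at the smallest affordable level $\tau^c_q$ and writes the drift balance exactly, obtaining the deviation $\Delta=\frac{\Rno}{\Rd}\frac{\eta^{c,u}(\bar k)}{\tau^c_q-\tau^c_p}$. It then bounds $\eta^{c,u}(\bar k)$ by an induction over blocks of $\tau^c_q$ consecutive token levels. This shows that block masses decay geometrically toward $\bar k$ with ratio $\gamma=1/2-\sqrt{1/4-\Rno/\Rd}$, where $\gamma(1-\gamma)=\Rno/\Rd$. This is where the paper's proof needs the factor $1/4$ in Assumption~\ref{ass:noise}, since $\gamma$ must be real.

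\emph{What your route buys.} Your exponential Foster--Lyapunov bound replaces that induction and only uses the strictly negative drift $\Rno-\Rd\tau^c_q<0$ above the threshold. For this proposition it therefore works whenever $\Rno<\Rd$; your Step~1 likewise needs only $\Rno/\Rd\in(0,1)$. It would also carry over unchanged to other bounded-jump token dynamics. The price is less explicit constants, and an inequality in place of the paper's exact error formula.

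\emph{Minor remarks.} The midpoint threshold $k_0=\lfloor\bar k/2\rfloor$ is harmless but unnecessary. Threshold $\tau^c_q$ with $V(k)=e^{\theta k}$ works equally well. In either case, once $\bar k$ is large (for threshold $\tau^c_q$, once $\bar k\ge\tau^c_q+|\tau^c_p|$), the only truncation is the noise gift at $\bar k$. You can then take $M=0$ and recover exactly the paper's identity for $\Delta$. Your Step~1 is also slightly cleaner than the paper's: ordering $\tau^c_p<\tau^c_q$ rules out both tolls positive and both tolls nonpositive at once.
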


Third, Proposition~\ref{prop:no_pol_is_better} shows that, for each class $c\in[C]$, after discretizing the tolls of Lemma~\ref{lem:opt_prices} no policy exists that can achieve better payoff that the optimal value of \eqref{eq:class_opt} up to an arbitrary precision $\epsilon >0$.

\begin{proposition}\label{prop:no_pol_is_better}
	Under Assumptions~\ref{ass:senseless_policies} and~\ref{ass:noise}, for any $c\in [C]$ consider a discretization $\tau^c_a(\alpha) = \mathrm{round}(\tilde{\tau}^c_a(\alpha))$ of  the continuous tolls in \eqref{eq:opt_prices_cont} in Lemma~\ref{lem:opt_prices} with scaling factor $\alpha>0$ for all $a\in \Acal_c$. Then for all $\epsilon>0$ there is $\alpha^\star>0$ such that for all $\alpha\geq\alpha^\star$
	\begin{equation*}
		\bar{w}^\star  \geq \sum_{k\in \Kcal} \sum_{a\in \Acal^c} \eta^{c,v}(k)v(a|k)w^\star_a -\epsilon,  \quad  \forall  v\in \UDcbar.
	\end{equation*}
\end{proposition}

Finally, we are ready to show the statement of the lemma. Consider a class $c\in [C]$. 
First, assume that Condition~\ref{cond:distinct_payoff} is not satisfied for class $c$. Then, for each $a\in \Acal^c$ such that $f^{c^\star}_a >0$, $\tau^c_a = 0$ and one can allocate in $\mu^c \in X^c$ the mass $f^{c^\star}_a$ to the policy that uses action $a$ from any token amount. Then, \eqref{eq:opt_pol_cond1} is satisfied by Proposition~\ref{prop:no_pol_is_better} under Assumptions~\ref{ass:senseless_policies} and~\ref{ass:noise}. Moreover, the left-hand side of \eqref{eq:opt_pol_cond2} is zero by construction, therefore \eqref{eq:opt_pol_cond2} is satisfied for any $\epsilon >0$.
Second, assume that Condition~\ref{cond:distinct_payoff} is satisfied for class $c$. Then, under Assumptions~\ref{ass:senseless_policies} and~\ref{ass:noise} by Propositions~\ref{prop:decomp}, \ref{prop:there_is_a_pol}, and~\ref{prop:no_pol_is_better} for sufficiently large $\alpha$ and sufficiently large $\bar{k}$ there is a finite number of policies that one can allocate in $\mu^c \in X^c$ such that  \eqref{eq:opt_pol_cond1} and \eqref{eq:opt_pol_cond2} are satisfied. To conclude, since there is a finite number of classes, the result holds uniformly by taking the largest thresholds $\alpha^\star$ and $\bar{k}^\star$ among all classes. \hfill  $\qedsymbol$%

	\subsubsection{Proof of Proposition~\ref{prop:decomp}}
		Under the conditions of the proposition, from Lemma~\ref{lem:opt_prices},  $f^{c\star}_a$ with $a\in \Acal^c$ is a solution to \eqref{eq:class_opt}. First, we show that, under Condition~\ref{cond:distinct_payoff}, the inequality $m^c\Rno- \sum\nolimits_{a\in \Acal^c} \tilde{\tau}_a^c(\alpha) f^{c\star}_a \geq 0$ in \eqref{eq:class_opt} is nonstrict at the solution for class $c$. We prove this claim by contradiction. Assume, by contradiction, that the inequality is strict, i.e., $m^c\Rno- \sum\nolimits_{a\in \Acal^c} \tilde{\tau}_a^c(\alpha) f^{c\star}_a > 0$, and that Condition~\ref{cond:distinct_payoff} holds. Then, from Condition~\ref{cond:distinct_payoff}, there exist $p,q \in \Acal^c$ such that $f^{c\star}_p>0$, $f^{c\star}_q>0$, and $w_p^\star >  w_q^\star$. Consider a perturbation of the solution to \eqref{eq:class_opt} characterized by $\nu^c_a = f^{c\star}_a$ for all $a\notin  \{p,q\}$,  $\nu^c_p = f^{c\star}_p+\delta$, and $\nu^c_q = f^{c\star}_q-\delta$, where $\delta>0$ is small enough such that $\nu^c_q >0$. Notice that $\nu^c_a$ with $a\in \Acal^c$ is in the feasible set of \eqref{eq:class_opt} and that it leads to a strictly lower objective than the one achieved by  $f^{c\star}_a$ with $a\in \Acal^c$, which is a contradiction with the hypothesis that  $f^{c\star}_a$ with $a\in \Acal^c$ is a solution to \eqref{eq:class_opt}.
		
		Second, we show that there exists a decomposition of flows $\tilde{g}_a^{c,i}$ with $i = 1,2,\ldots,L$ that satisfies (b)--(e) with $\epsilon =0$  if $\tilde{\tau}^c_a(\alpha) = \tau^c_a(\alpha)$. One can show this claim with a constructive argument. Indeed,  one can sequentially allocate flow from $f^{c\star}$  to: (i)~a single action $p$ whose toll is $\tilde{\tau}^c_p(\alpha) = \Rno/\Rd$ or (ii)~pairs of actions $(p,q)$ whose tolls satisfy $\tilde{\tau}^c_p(\alpha) < \Rno/\Rd$ and $\tilde{\tau}^c_q(\alpha) > \Rno/\Rd$. For a single action $p\in \Acal^c$ in the $i$-th allocation, one can allocate the entirety of the flow in $f^{c\star}_p$, i.e., $\tilde{g}^{c,i}_p = f^{c\star}_p$ and  $\tilde{g}^{c,i}_a =0$ for all $a\neq p$. Notice that (b) is satisfied for $i$ by construction if $\tilde{\tau}^c_p(\alpha) = \tau^c_p(\alpha)$. For two actions $p_i,q_i \in \Acal^c$ in the $i$-th allocation, one can set $\tilde{g}^{c,i}_a =0$ for all $a\notin\{ p_i,q_i\}$ and choose $\tilde{g}^{c,i}_{p_i},\tilde{g}^{c,i}_{q_i} >0$ such that
		\begin{equation*}
			m^c\Rno (\tilde{g}^{c,i}_{p_i}+\tilde{g}^{c,i}_{q_i}) /(m^c\Rd) - \tilde{g}^{c,i}_{p_i}\tilde{\tau}^c_{p_i}(\alpha) - \tilde{g}^{c,i}_{q_i}  \tilde{\tau}^c_{q_i}(\alpha) = 0
		\end{equation*}
		and that all the remaining unallocated flow of either $p_i$ or $q_i$ is entirely allocated in $i$. Again notice~(a) and~(b) are satisfied for $i$ by construction if  $\tilde{\tau}^c_{p_i}(\alpha) = \tau^c_{p_i}(\alpha)$ and $\tilde{\tau}^c_{q_i}(\alpha) = \tau^c_{q_i}(\alpha)$. Since  $m^c\Rno- \sum\nolimits_{a\in \Acal^c} \tilde{\tau}_a^c f^{c\star}_a = 0$ it is always possible to find a new action or pair of actions until all the flow in  $f^{c\star}_a$ with $a\in \Acal^c$ is allocated, i.e., $\sum_{i = 1}^L \tilde{g}_a^{c,i} = f_a^{c\star}$ for all $a\in \Acal^c$ thereby satisfying~(c) and~(d) with $\epsilon =0$. Furthermore, notice that for all $i = 1,2,\ldots,L$ we have shown that
		\begin{equation*}
			m^c\Rno {\sum\nolimits_{a\in \Acal^c} \tilde{g}^{c,i}_a}/({m^c\Rd}) -  \sum\nolimits_{a\in \Acal^c} \tilde{\tau}^c_a(\alpha) \tilde{g}^{c,i}_a = 0, 
		\end{equation*}
		which can be rewritten using \eqref{eq:opt_prices_cont} as
		\begin{equation*}
			\begin{split}
				&\frac{\Rno}{\Rd} \!\sum_{a\in \Acal^c}\! \tilde{g}^{c,i}_a \!-\! \frac{\Rno}{\Rd}\!\!\sum_{a\in \Acal^c} \tilde{g}^{c,i}_a \!-\! \alpha\! \sum_{a\in \Acal^c} \!\tilde{g}^{c,i}_a \left(w^{\star}_a \!-\! \bar{w}_c^\star \right) = 0\\
				\!\iff &
				\sum\nolimits_{a\in \Acal^c} w^{\star}_a  \tilde{g}^{c,i}_a  -  \bar{w}_c^\star \sum\nolimits_{a\in \Acal^c}  \tilde{g}^{c,i}_a   = 0.
			\end{split}
		\end{equation*}
		Therefore,~(e) is also satisfied with $\epsilon = 0$.
		
		Third, we write a candidate flow decomposition $g_a^{c,i}$ with $i = 1,2,\ldots,L$ and $a\in \Acal^c$ for the discretized tolls as a solution to the following quadratic program (QP)
		\begin{equation}\label{eq:decomp_QP}
			\begin{aligned}
				\!\!\underset{g_a^{c,i}}{\min}\quad \!\!& \sum_{a\in \Acal^c}\!\! \left(\sum_{i = 1}^L g_a^{c,i} \!-\! f_a^{c\star}\!\right)^{\!\!2} \!\!+ \!\sum_{i = 1}^L \!\left(\sum_{a\in \Acal} \!w_a^\star g_a^{c,i} \!-\! \bar{w}^\star\! \sum_{a\in \Acal}\! g_a^{c,i} \!\right)^{\!\!2} \!\!\!\!\!\!\!\!\!\!\\
				\!\!\mathrm{s.t.}  \quad 
				& \frac{\Rno}{\Rd} {\sum_{a\in \Acal^c} \!g^{c,i}_a} \!-\!  \sum_{a\in \Acal^c}\! \tau^c_a(\alpha) g^{c,i}_a \!= \!0, \; i = 1,2,\ldots,L\\
				&  \sum\nolimits_{i = 1}^L \sum\nolimits_{a\in \Acal^c}g^{c,i}_a = m^c\Rd \\
				& g^{c,i}_a \geq 0, \quad  i = 1,2,\ldots,L,  \;  a\in \Acal^c,
			\end{aligned}
		\end{equation} %
		where $g_a^{c,i}$ for $i = 1,2,\ldots,L$ are all combinations where at most two entries $g_a^{c,i}$ with $a\in \Acal^c$ are nonnull. Under Condition~\ref{cond:distinct_payoff}, there is always $\tilde{\tau}_p^c(\alpha) > \Rno/\Rd$ and $\tilde{\tau}_q^c(\alpha)< \Rno/\Rd$ with $p,q\in \Acal^c$. After the discretization, for sufficiently large $\alpha$, i.e., $\alpha \geq \alpha^\star_1$, there is also always ${\tau}_p^c(\alpha) > \Rno/\Rd$ and ${\tau}_q^c(\alpha)< \Rno/\Rd$. It follows that the feasible set of \eqref{eq:decomp_QP}  is nonempty for all $\alpha\geq \alpha^\star_1$ and compact, since it is characterized by linear equalities and nonstrict linear inequalities. Since the QP \eqref{eq:decomp_QP} has a continuous objective function and nonempty compact feasible set, then a solution exists. From the constraints of \eqref{eq:decomp_QP}, it follows immediately that a solution satisfies~(b)--(c). To show that is also satisfies (a), assume by contradiction that there is a solution to \eqref{eq:decomp_QP} for which there is $g^{c,i}_a$ with $a\in \Acal^c$ with a single nonnull entry $g^{c,i}_p >0$. Then, one can rewrite the first constraint of \eqref{eq:decomp_QP} as $(\Rno/\Rd- \tau_p^c(\alpha)) g^{c,i}_p = 0$, which is equivalent to $\Rno/\Rd =  \tau_p^c(\alpha)$. But, by Assumption~\ref{ass:noise}, $0< \Rno/\Rd <1$, which is a contradiction with the hypothesis that $\tau_p^c(\alpha)$ is an integer. To show that it also satisfies~(d) and~(e) for sufficiently large $\alpha$, we can rewrite the first equality constraint of \eqref{eq:decomp_QP} using the identity $\tau^c_a(\alpha) = \tilde{\tau}^c_a(\alpha) + \Delta_a$ for all $a\in \Acal^c$, where $\Delta_a$ is the discretization error that satisfies $|\Delta_a|\leq1/2$. We can then write for each $i = 1,2,\ldots, L$
		\begin{equation*}
			\begin{split}
				\frac{\Rno}{\Rd} {\sum_{a\in \Acal^c} g^{c,i}_a}-  \sum_{a\in \Acal^c} (\tilde{\tau}^c_a(\alpha) + \Delta_a) g^{c,i}_a & \!= \!0 \\
				\frac{\Rno}{\Rd}\! {\sum_{a\in \Acal^c} \!\!g^{c,i}_a}  \!-\!  	\frac{\Rno}{\Rd} \!{\sum_{a\in \Acal^c}\!\! g^{c,i}_a}\!-\! \alpha\!\! \!\sum_{a\in \Acal^c} \!\! (w_a^\star\!-\!\bar{w}_c^\star)g^{c,i}_a \!-\!\!\! \sum_{a\in \Acal^c}\! \!\Delta_a g^{c,i}_a  & \!=\! 0 \\
				\sum_{a\in \Acal^c}  (w_a^\star-\bar{w}_c^\star)g^{c,i}_a + \sum_{a\in \Acal^c} \theta_a g^{c,i}_a  &\! =\! 0,
			\end{split}
		\end{equation*}
		where we use the change of variables $\theta_a := \Delta_a/\alpha$ for all $a\in \Acal^c$. As a result, one can equivalently write \eqref{eq:decomp_QP} as 
		\begin{equation}\label{eq:decomp_QP_param}
			\begin{aligned}
				\!\!\!\underset{g_a^{c,i}}{\min}\; & \sum_{a\in \Acal^c}\!\!\left(\sum_{i = 1}^L \!g_a^{c,i} \!-\! f_a^{c\star}\!\!\right)^{\!2} \!\!\!+\! \sum_{i = 1}^L\;\left(\sum_{a\in \Acal} \!w_a^\star g_a^{c,i} \!-\! \bar{w}^\star \! \sum_{a\in \Acal} \!g_a^{c,i} \!\right)^{\!2}\!\!\!\!\!\!\!\!\!\!\! \\
				\mathrm{s.t.}\;\,   
				&  \sum_{a\in \Acal^c}  (w_a^\star +\theta_a-\bar{w}_c^\star)g^{c,i}_a = 0, \; i = 1,2,\ldots,L\!\!\!\!\\
				&  \sum\nolimits_{i = 1}^L \sum\nolimits_{a\in \Acal^c}g^{c,i}_a = m^c\Rd \\
				& g^{c,i}_a \geq 0, \quad  i = 1,2,\ldots,L,  \;  a\in \Acal^c.
			\end{aligned}
		\end{equation}
		One can conclude that when $\theta_a = 0$ for all $a\in \Acal^c$, which corresponds to null discretization error, then the decomposition of flows $\tilde{g}_a^{c,i}$ with $i = 1,2,\ldots,L$ and $a\in \Acal^c$ is a solution to \eqref{eq:decomp_QP_param} and it achieves null objective at the solution.

		Fourth, we study how the optimal value of the objective of \eqref{eq:decomp_QP_param} changes w.r.t.\ the parameters $\theta_a$ with $a\in \Acal^c$. The objective of \eqref{eq:decomp_QP_param} is continuous w.r.t.\ the decision variables and does not depend on the parameters $\theta_a$ with $a\in \Acal^c$, the feasible set is nonempty and compact for $\alpha\geq \alpha_1^\star$ (i.e., for $\theta_a$ with $a\in \Acal^c$ sufficiently close to zero), the feasible set is continuous (upper and lower hemicontinuous) at $\theta_a = 0$  for all $a\in \Acal^c$ since Condition~\ref{cond:distinct_payoff} holds. We are in the conditions of Berges's Maximum Theorem \cite[Theorem~17.31]{AliprantisBorder2006}, so the optimal objective is continuous w.r.t.\ the parameters $\theta_a$ with $a\in \Acal^c$ at $\theta_a = 0$  for all $a\in \Acal^c$. Finally, by increasing $\alpha$,   $\theta_a$  for all $a\in \Acal^c$ can be made arbitrarily close to zero and, by continuity, the solution to \eqref{eq:decomp_QP_param} has an objective that can be made arbitrarily close to zero. It follows that for all $\epsilon>0$ that there exists $\alpha^\star$ such that for all $\alpha\geq \alpha^\star$, a solution to \eqref{eq:decomp_QP_param} exists and its objective is at most $\epsilon$. It follows immediately that such solution satisfies~(a)--(e).


	\subsubsection{Proof of Proposition~\ref{prop:there_is_a_pol}}
		First, we claim that $\tau^c_q \neq \tau^c_p$ and that either $\tau^c_q$ or $\tau^c_q$ is nonpositive. To see why, assume by contradiction that (i)~$\tau^c_q = \tau^c_p$; or (ii) $\tau^c_q$ and $\tau^c_q$ are both positive. Consider that~(i) holds. By hypothesis, $0 =  \Rno/\Rd - \tau^c_p g^{c}_p -  \tau^c_q g^{c}_q = \Rno/\Rd  - \tau^c_p$, or equivalently, $\Rno/\Rd = \tau^c_p$. By Assumption~\ref{ass:noise}, it follows that $0<\Rno/\Rd <1$, which is a contradiction with the fact that $\tau^c_p$ is an integer. Consider that~(ii) holds, i.e,  $\tau^c_q \geq 1$ and  $\tau^c_p \geq 1$. Therefore, $0 =  \Rno/\Rd - \tau^c_p g^{c}_p -  \tau^c_q g^{c}_q \leq \Rno/\Rd  - g^{c}_p - g^{c}_q  = \Rno/\Rd-1$, or equivalently that $\Rno/\Rd \geq 1$, which is a contradiction with Assumption~\ref{ass:noise}.  Henceforth, consider without any loss of generality that $\tau^c_p\leq 0$ and  $\tau^c_q >0$. 
		
		%
		
		Second, the proof of this result is constructive. Consider a candidate policy $u\in \UDcbar$ defined as $u(p|k) = 1$ if $0\leq k < \tau^c_q$ and $u(q|k) = 1$ if $\tau^c_q \leq k \leq \bar{k}$. The stationary state distribution of $u$, denoted by $\eta^{c,u} \in \Pcal(\Kcal)$, exists and is unique by Lemma~\ref{lem:F_def} under Assumption~\ref{ass:noise}. Moreover, the stationary state distribution follows the token stationarity condition
		\begin{equation*}
			\begin{split}
					&\sum_{k = 0}^{\tau_q^c-1}\! \eta^{c,u}(k)\!\left(\!-\frac{\Rd}{\Rno\!+\!\Rd}\tau^c_p\!+\! \frac{\Rno}{\Rno\!+\!\Rd}\right) \!-\!  \eta^{c,u}(\bar{k})\frac{\Rd}{\Rno\!+\!\Rd} \tau^c_q\\
					&+ \sum_{k = \tau_q^c}^{\bar{k}-1} \!\eta^{c,u}(k)\left(-\frac{\Rd}{\Rno\!+\!\Rd}\tau^c_q+ \frac{\Rno}{\Rno\!+\!\Rd}\right)  = 0,
			\end{split}
		\end{equation*}
		which can be equivalently rewritten as
		\begin{equation}\label{eq:two_action_pol_ss}
			-\tilde{g}_p^c  \tau_p^c -\tilde{g}_q^c  \tau_q^c + (1- \eta^{c,u}(\bar{k})) \Rno/\Rd  = 0,
		\end{equation}
		where $\tilde{g}^c_p:= \sum_{k = 0}^{\tau_q^c-1} \eta^{c,u}(k)$ and $\tilde{g}^c_q:= \sum_{k = \tau_q}^{\bar{k}} \eta^{c,u}(k)$ are the flows of actions $p$ and $q$, respectively. Since $\tilde{g}^c_p + \tilde{g}^c_q = 1$, \eqref{eq:two_action_pol_ss} uniquely characterizes the action distribution of the candidate policy. Let us consider a solution parameterized by a scalar $\Delta \in \R$ as $\tilde{g}^c_p = g^c_p +\Delta$ and  $\tilde{g}^c_q = g_q^c -\Delta$. One can then rewrite \eqref{eq:two_action_pol_ss} as
		\begin{equation}\label{eq:two_action_pol_ss_delta}
			\begin{split}
				& \frac{\Rno}{\Rd} -g^c_p  \tau_p^c -g^c_q  \tau_q^c  - \Delta \tau_p^c  - \Delta \tau_q^c \eta^{c,u}(\bar{k})\frac{\Rno}{\Rd} \!=\! 0\!\\
				\iff & \Delta \tau_p^c  - \Delta \tau_q^c\eta^{c,u}(\bar{k})\Rno/\Rd  = 0\\
				\iff & \Delta = \frac{\Rno}{\Rd} \frac{1}{ \tau_q^c- \tau_p^c}\eta^{c,u}(\bar{k}),
			\end{split}
		\end{equation}
		which is well defined since it was already established that $\tau^c_q \neq \tau^c_p$. As a result, the action distribution of the candidate policy $u$ differs from the flows $g_p^c,g_q^c>0$ by $\Delta$, which by \eqref{eq:two_action_pol_ss_delta} is proportional to the mass on the maximum amount of tokens.
		
		Third, we show that $\eta^{c,u}(\bar{k})$ can be made arbitrarily small as $\bar{k}$ increases. To do so, we split the token support in $\{\tau^c_q,\tau^c_q+1,\ldots, \bar{k}\}$ into chunks of mass. Specifically define $\eta_s := \sum_{k = \bar{k}-s\tau^c_q+1}^{\bar{k}-(s-1)\tau^c_q}\eta^{c,u}(k)$ for $s = 1,2,\ldots, \lfloor (\bar{k}-\tau_q^c+1)/\tau^c_q\rfloor$. We proceed to prove by induction that $\eta_{s} \leq \eta_{s+1}\gamma$, where $\gamma := 1/2-\sqrt{1/4 -\Rno/\Rd}$. Notice that $\gamma < 1/4 +\Rno/\Rd < 1$ because by Assumption~\ref{ass:noise} $\sqrt{1/4 -\Rno/\Rd} > 1/4 -\Rno/\Rd$. Moreover,
		$\gamma > \Rno/\Rd$ because by Assumption~\ref{ass:noise}
		\begin{equation*}
			\begin{split}
				&0 < (\Rno/\Rd)^2 \\
				\!\iff&  1/4\!-\!\Rno/\Rd < \!1/4\!-\!\Rno/\Rd \!+\!(\Rno/\Rd)^2\\ 
				\!\iff &  \sqrt{1/4\!-\!\Rno/\Rd}  < \!1/2\!-\!\Rno/\Rd\\
				\!\iff &\gamma > \Rno/\Rd.
			\end{split}
		\end{equation*}
		Finally, $\gamma = (\Rno/\Rd)/(1-\gamma)$ because
		\begin{equation*}
			\begin{split}
				&\frac{\Rno}{\Rd}\frac{1}{1-\gamma} = 	\frac{\Rno}{\Rd}\frac{1}{1/2+\sqrt{1/4-\Rno/\Rd}} \\ =& \frac{\Rno}{\Rd}\frac{1/2-\sqrt{1/4-\Rno/\Rd}}{1/4 - 1/4 +\Rno/\Rd} = \gamma
			\end{split}
		\end{equation*}
		For $s = 1$, the stationary distribution of the last chunk satisfies
		\begin{equation*}
			-\eta_1 \frac{\Rd}{\Rd+\Rno}+ \eta^{c,u}(\bar{k}-\tau^c_q)\frac{\Rno}{\Rd+\Rno} = 0. 
		\end{equation*}
		Since $ \eta^{c,u}(\bar{k}-\tau^c_q) \leq \eta_2$, it follows that $\eta_{1} \leq \eta_{2}\Rno/\Rd$, which implies that $\eta_{1} \leq \eta_{2}\gamma$ since  $\gamma > \Rno/\Rd$. The stationary distribution of the $s$-th chunk satisfies
		\begin{equation}\label{eq:ss_s_chunck}
			\begin{split}
							&-\eta_s \frac{\Rd}{\Rd+\Rno} -\eta^{c,u}(\bar{k}-(s-1)\tau^c_q) \frac{\Rno}{\Rd+\Rno}  \\
							&\quad\quad + \eta_{s-1} \frac{\Rd}{\Rd+\Rno}   + \eta^{c,u}(\bar{k}-s\tau^c_q)\frac{\Rno}{\Rd+\Rno}  = 0. 
			\end{split}
		\end{equation}
		Using the induction hypothesis, i.e., $\eta_{s-1}\leq \gamma \eta_s$, and since $\eta^{c,u}(\bar{k}-s\tau^c_q) \leq \eta_{s+1}$, one can equivalently rewrite \eqref{eq:ss_s_chunck} as
		\begin{equation*}
			\begin{split}
				\eta_s\Rd &\leq \gamma \Rd \eta_{s} + \Rno\eta_{s+1}\\
				\iff 	\eta_s &\leq  \eta_{s+1} (\Rno/\Rd)/(1-\gamma) = \gamma\eta_{s+1},
			\end{split}
		\end{equation*}
		which concludes the proof by induction. One concludes that $	1 \geq  \eta_{\lfloor (\bar{k}-\tau_q^c+1)/\tau^c_q\rfloor} \geq \gamma^{\lfloor (\bar{k}-\tau_q^c+1)/\tau^c_q\rfloor-1}\eta_1\geq \gamma^{\lfloor (\bar{k}-\tau_q^c+1)/\tau^c_q\rfloor-1} \eta^{c,u}(\bar{k})$,
		i.e., $\eta^{c,u}(\bar{k}) \leq  \gamma^{\lfloor (\bar{k}-\tau_q^c+1)/\tau^c_q\rfloor-1}$. It follows that for all $\epsilon >0$ there exists $\bar{k}^\star$ such that  $\Delta \leq \epsilon$ for all $\bar{k} \geq \bar{k}^\star$, which concludes the proof.

	\subsubsection{Proof of Proposition~\ref{prop:no_pol_is_better}}
	In case Condition~\ref{cond:distinct_payoff} is not satisfied for class $c$, then $w_a^\star = \bar{w}_c^\star$ for all $a\in \Acal^c$ for which $f_a^{c\star} > 0$. First, we show that $w_a \leq \bar{w}^\star$ for all $a\in \Acal^c$ by contradiction. Admit, by contradiction, that there is $p\in \Acal^c$ such that $w_p^\star > \bar{w}_c^\star$. Notice that $f^{c\star}_p = 0$, otherwise one reaches a contradiction immediately since Condition~\ref{cond:distinct_payoff} does not hold by hypothesis. Consider action flows $\nu_p^c = \delta$, where $\delta >0$ can be made arbitrarily small, $\nu_q^c = f_q^{c\star} - \delta$, where $q \in \Acal^c$ is such that $f^{c^\star}_q>0$, and $\nu_a^c = f_a^{c\star}$ for all $a\in \Acal^c\setminus \{p,q\}$. Notice that $\sum\nolimits_{a\in \Acal^c} \nu^c_a  = m^c\Rd$ and that $\nu_a^c \geq 0$ for all $a\in \Acal^c$ if one chooses $\delta$ sufficiently small. Furthermore, by Assumption~\ref{ass:noise} it follows that ${\tau}^c_a = 0$ for all $a\in \Acal^c$ for which $f_a^{c\star} > 0$, thus $m^c\Rno- \sum\nolimits_{a\in \Acal^c} {\tau}_a^c f^{c\star}_a  =  m^c\Rno >0$. As a result, $m^c\Rno- \sum\nolimits_{a\in \Acal^c} {\tau}_a^c \nu^{c}_a  >0$ for sufficiently small $\delta$. One concludes that  $\nu_a^c$ for $a\in \Acal^c$ satisfies the constraints of \eqref{eq:class_opt} and achieves a strictly greater objective than the solution $f_a^{c\star}$  for $a\in \Acal^c$, which is a contradiction. Second, given that $w_a \leq \bar{w}_c^\star$ for all $a\in \Acal^c$, it follows immediately that no policy can achieve a payoff that is greater than $\bar{w}_c^\star$, which proves the result for any choice of $\alpha >0$.
	
	In case Condition~\ref{cond:distinct_payoff} is satisfied for class $c$, we rewrite the optimization problem \eqref{eq:class_opt} for the discretized tolls as 
	\begin{equation}\label{eq:class_opt_disc}
		\begin{aligned}
			\underset{\nu^c_a \geq 0, a\in \Acal^c}{\max}\quad & { \sum\nolimits_{a\in \Acal^c} w_a^\star \nu^c_a} \\
			\mathrm{s.t.}\quad  \quad 
			& {  m^c\Rno- \sum\nolimits_{a\in \Acal^c} \tau_a^c(\alpha) \nu^c_a \geq 0}\\
			& {  \sum\nolimits_{a\in \Acal^c} \nu^c_a  = m^c\Rd}
		\end{aligned}
	\end{equation}
	and denote a solution by $\nu^{c\star}_a$ with $a\in \Acal^c$. First, we show that 
	\begin{equation}\label{eq:pol_payoff_ineq1}
		\sum_{a\in \Acal^c} \frac{\nu^{c\star}_{a}w^{\star}_{a}}{m^c\Rd}\geq \sum_{k\in \Kcal} \sum_{a\in \Acal^c} \eta^{c,v}(k)v(a|k)w^\star_a\; \quad \forall v\in \UDcbar
	\end{equation}
	by contradiction, where $\eta^{c,v}$ is well defined by Lemma~\ref{lem:F_def} under Assumption~\ref{ass:noise}. Specifically, assume by contradiction that there is $v\in \UDcbar$ such that $ 	\sum_{a\in \Acal^c} \nu^{c\star}_{a}w^{\star}_{a}/(m^c\Rd) < \sum_{k\in \Kcal} \sum_{a\in \Acal^c} \eta^{c,v}(k)v(a|k)w^\star_a$. Define $\nu_a^c := m^c\Rd\sum_{k\in \Kcal} v(a|k)\eta^{c,v}(k)$ for $a\in \Acal^c$. Notice that $\nu_a^c \geq 0$ for all $a\in \Acal^c$ and that  $\sum\nolimits_{a\in \Acal^c} \nu^c_a  = m^c\Rd$. Moreover, the stationary token distribution of policy $v$ satisfies the flow balance condition 
	\begin{equation*}
		\begin{split}
			&\!\sum_{a\in \Acal^c}\sum_{k\in \Kcal} \!m^c\Rd\eta^{c,v}(k)v(a|k)\tau^c_a(\alpha) +\!\!\!\!\! \sum_{k\in \Kcal\setminus\{\bar{k}\}} \!\!\!\!m^c\Rno\eta^{c,v}(k) =0\!\! \\
			& \iff \sum_{a\in \Acal^c} \nu^c_a\tau^c_a(\alpha) + m^c\Rno = m^c\Rno\eta^{c,v}(\bar{k}) \geq 0,
		\end{split}
	\end{equation*}
	therefore $m^c\Rno- \sum\nolimits_{a\in \Acal^c} \tau_a^c(\alpha) \nu^c_a \geq 0$. One concludes that $\nu_a^c$ for $a\in \Acal^c$ satisfies the constraints of \eqref{eq:class_opt_disc} and the hypothesis   $\sum\nolimits_{a\in \Acal^c} w_a^\star \nu^c_a < m^c\Rd\sum_{k\in \Kcal} \sum_{a\in \Acal^c} \eta^{c,v}(k)v(a|k)w^\star_a =  \sum_{a\in \Acal^c}\nu^{c}_aw^\star_a$, i.e., the objective function of \eqref{eq:class_opt_disc} for $\nu^{c}_a$ with $a\in \Acal^c$ is strictly greater that at the solution  $\nu^{c\star}_a$ with $a\in \Acal^c$, which is a contradiction. Second, we can rewrite the first constraint of \eqref{eq:class_opt_disc} using the identity $\tau^c_a(\alpha) = \tilde{\tau}^c_a(\alpha) + \Delta_a$ for all $a\in \Acal^c$, where $\Delta_a$ is the discretization error that satisfies $|\Delta_a|\leq1/2$, as $m^c\Rno-\sum_{a\in \Acal^c} (\Rno/\Rd +\alpha (w^{\star}_a -\bar{w}^\star) + \Delta_a)\nu^c_a  \geq 0
				\iff  \sum_{a\in \Acal^c} (w^{\star}_a + \theta_a -\bar{w}^\star)  \leq  0$,
	where we use the change of variables $\theta_a := \Delta_a/\alpha$ for all $a\in \Acal^c$. As a result, one can equivalently write \eqref{eq:class_opt_disc} as 
	\begin{equation}\label{eq:class_opt_disc_param}
		\begin{aligned}
			\underset{\nu^c_a \geq 0, a\in \Acal^c}{\max}\quad & { \sum\nolimits_{a\in \Acal^c} w_a^\star \nu^c_a} \\
			\mathrm{s.t.}\quad  \quad 
			&  \sum\nolimits_{a\in \Acal^c} (w^{\star}_a + \theta_a -\bar{w}^\star)  \leq  0\\
			& {  \sum\nolimits_{a\in \Acal^c} \nu^c_a  = m^c\Rd}
		\end{aligned}
	\end{equation}
	One can conclude that when $\theta_a = 0$ for all $a\in \Acal^c$, which corresponds to null discretization error, then the flows $f_a^{c\star}$ with $a\in \Acal^c$ are a solution to \eqref{eq:decomp_QP_param} by Lemma~\ref{lem:opt_prices} and it achieves an  objective of $m^c\Rd\bar{w}^\star$ at the solution. The objective of \eqref{eq:class_opt_disc_param} is continuous w.r.t.\ the decision variables and does not depend on the parameters $\theta_a$ with $a\in \Acal^c$, the feasible set is nonempty and compact for $\alpha\geq \alpha_1^\star$ (i.e., for $\theta_a$ with $a\in \Acal^c$ sufficiently close to zero), the feasible set is continuous (upper and lower hemicontinuous) at $\theta_a = 0$  for all $a\in \Acal^c$ since Condition~\ref{cond:distinct_payoff} holds. We are in the conditions of Berges's Maximum Theorem \cite[Theorem~17.31]{AliprantisBorder2006}, so the optimal objective is continuous w.r.t.\ the parameters $\theta_a$ with $a\in \Acal^c$ at $\theta_a = 0$  for all $a\in \Acal^c$. Finally, by increasing $\alpha$,  $\theta_a$  for all $a\in \Acal^c$ can be made arbitrarily close to zero and, by continuity, the solution to \eqref{eq:class_opt_disc_param} has an objective that can be made arbitrarily close to $m^c\Rd\bar{w}^\star$. It follows that for all $\epsilon>0$ that there exists $\alpha^\star$ such that for all $\alpha\geq \alpha^\star$, a solution to \eqref{eq:class_opt_disc_param} exists and its objective is at most $m^c\Rd\bar{w}^\star + m^c\Rd\epsilon$. One concludes that 
	\begin{equation}\label{eq:pol_payoff_ineq2}
		m^c\Rd(\bar{w}^\star +\epsilon)  \geq 	\sum\nolimits_{a\in \Acal^c} \nu^{c\star}_{a}w^{\star}_{a}.
	\end{equation}
	Combining \eqref{eq:pol_payoff_ineq1} and \eqref{eq:pol_payoff_ineq2} establishes the result.

%

\subsection{Proof of Theorem~\ref{th:opt_prices_MSNE}}\label{sec:proof_th_opt_prices_MSNE}

From Lemmas~\ref{lem:opt_prices} and~\ref{lem:opt_pol}, for all $\delta_1>0$ there exists a $\epsilon_1$-MSNE such that the action flows differ at most by $\delta_1$ w.r.t\  $f_a^{c\star}$ with $a\in \Acal^c$ in the game whose single-stage rewards are constant and equal to the payoffs at the optimum, i.e., $w^\star_a = w(a,\sigma^\star)$. From the first constraint of \eqref{eq:sys_opt}, the resource flows can also be made arbitrarily close to $\sigma^\star$, i.e., for all $\delta_2>0$ there exists a $\epsilon_2$-MSNE $\mu$ such that $|\sigma_r(\mu) -\sigma^\star_r| \leq \delta_2$ for all $r\in \Rcal$ for constant single-stage rewards. By the continuity of the resource reward functions under Assumption~\ref{ass:cont} and the fact that the resource flows can also be made arbitrarily close to $\sigma^\star$, for all $\delta_3$  there exists a $\epsilon_3$-MSNE $\mu$ such that $|\sigma_r(\mu) -\sigma^\star_r| \leq \delta_3$ for all $r\in [\Rsf]$ for the original game with resource reward functions $w_r$ for all $r\in [\Rsf]$. Finally, it follows from the (near) uniqueness of resource flows of $\epsilon$-MSNE in Theorem~\ref{th:unique_MSNE} that for all $\delta_4>0$ there is $\epsilon_4>0$ such that all $|\sigma_r(\mu) -\sigma^\star_r| \leq \delta_4$ for all $r\in [\Rsf]$ and all $\mu\in \MSNE_{\epsilon_4}(F,\phi)$. Take a generic $\epsilon>0$ and any $\mu \in \MSNE_\epsilon(F,\phi)$. Henceforth, $\Ocal(\epsilon)$ denotes a term of the form $l\epsilon$ for some $l>0$. For any $c\in [C]$, one can write $\bar w_c(\mu)$ as
\begin{equation*}
	\begin{split}
		& \sum_{k\in \Kcal}\sum_{u\in \UDcbar}\sum_{a\in \Acal^c} \mu^c[k,u]u(a|k)w(a,\sigma(\mu))\\
		 = &\!\!\!\!\!  \sum_{u\in \UDcbar: \mu^c[\Kcal,u]>\epsilon} \!\!\!\!\!\!\!\!\!\!\!\!  \mu^c[\Kcal,u] \!\sum_{k\in \Kcal}\sum_{a\in \Acal^c} \! \frac{\mu^c[k,u]}{\mu^c[\Kcal,u]}u(a|k)w(a,\sigma(\mu)) \!+\! \Ocal(\epsilon)\\
		 =& \!\!\!\!\! \sum_{u\in \UDcbar: \mu^c[\Kcal,u]>\epsilon}   \!\!\!\!\!\!\!\!\!\!\!\! \mu^c[\Kcal,u]\sum_{k\in \Kcal}\sum_{a\in \Acal^c} \eta^{c,u}(k)u(a|k)w(a,\sigma(\mu)) \!+\! \Ocal(\epsilon)\\
		 =& \!\!\!\!\! \sum_{u\in \UDcbar: \mu^c[\Kcal,u]>\epsilon}  \!\!\!\!\!\!\!\!\!\!\!\!  \mu^c[\Kcal,u]\sum_{k\in \Kcal}\sum_{a\in \Acal^c} \eta^{c,v^\star}(k)v^\star(a|k)w_a^\star \!+\! \Ocal(\epsilon)\\
		 = &\!\!\!\!\! \sum_{u\in \UDcbar: \mu^c[\Kcal,u]>\epsilon}  \!\!\!\!\!\!\!\!\!\! \!\!  \mu^c[\Kcal,u] \bar w_c^\star \!+\! \Ocal(\epsilon)\\
		 =& \;  \bar w_c^\star + \Ocal(\epsilon)
	\end{split}
\end{equation*}
where the equalities follow from: algebraic manipulation; \eqref{eq:epsMSNE_s} in Definition~\ref{def:MSNE}; the existence of a $\epsilon$-MSNE $\nu$ in Lemma~\ref{lem:opt_pol} that satisfies \eqref{eq:opt_pol_cond1} where $v^\star$ is such that $\nu^c[\Kcal,v^\star]>0$; the fact that $\nu$ satisfies \eqref{eq:opt_pol_cond2}; and algebraic manipulation, respectively.  \hfill$\qedsymbol$%


\section*{References}
\bibliographystyle{IEEEtran}
\bibliography{../../../../Papers/_bib/references-gt.bib,../../../../Papers/_bib/references-c.bib,../../../../Publications/bibliography/parsed-minimal/bibliography.bib}
	
\vspace{-2cm}
\begin{IEEEbiography}[{\includegraphics[width=1in,height=1.25in,clip,keepaspectratio]{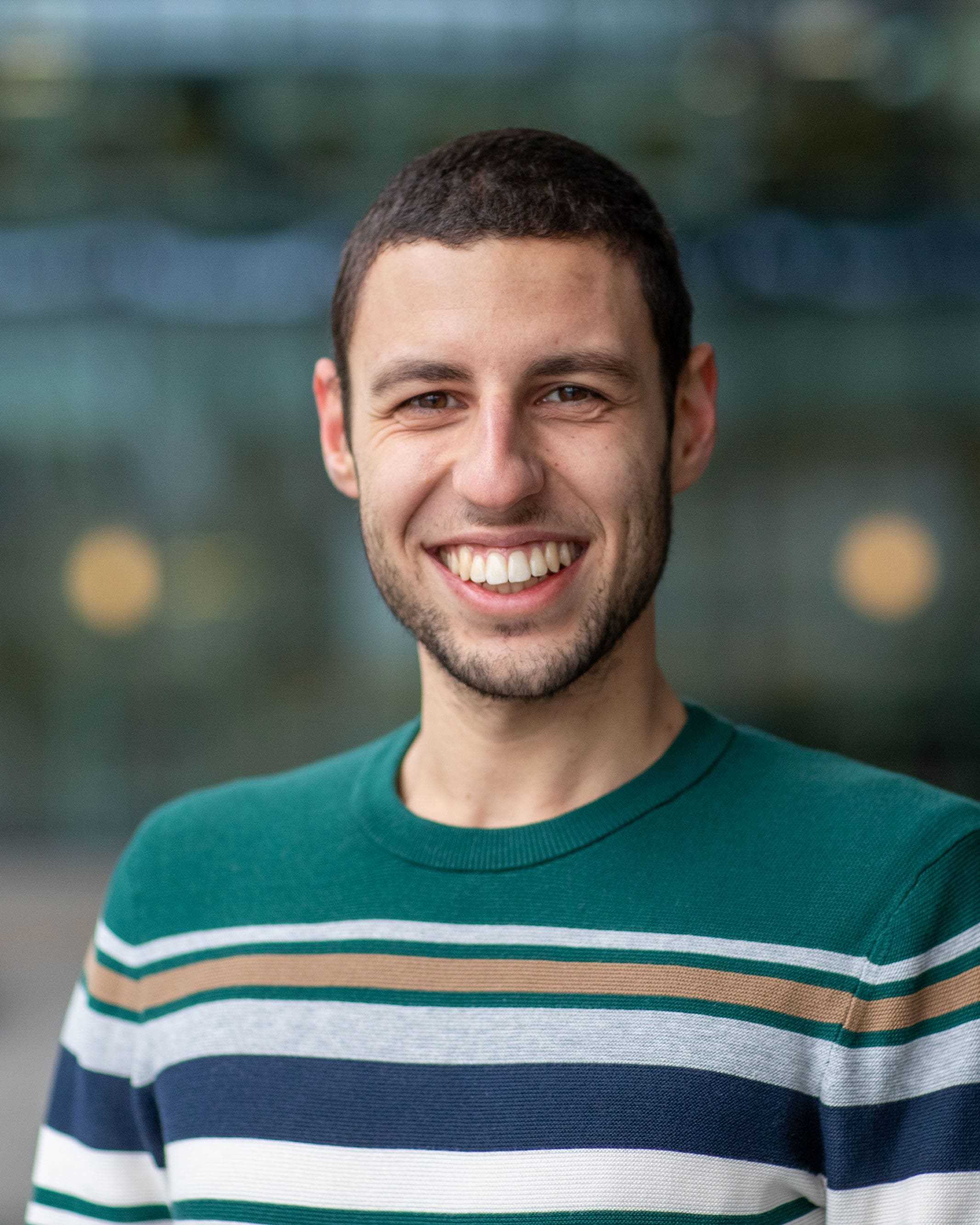}}]{Leonardo Pedroso} (Graduate Student Member, IEEE) received the M.Sc.\ degree in aerospace engineering from Instituto Superior T\'ecnico (IST), University of Lisbon (ULisboa), Portugal, in 2022. From 2019 to 2022, he held a research scholarship with the Institute for Systems and Robotics (ISR), IST, ULisboa. Since 2023, he is working toward the Ph.D.\ degree in mechanical engineering with the Control Systems Technology section, Eindhoven University of Technology, The Netherlands. Since 2024, he is working toward a second Ph.D.\ degree in aerospace engineering with the ISR, IST, ULisboa, Portugal.  He was the recipient of the 2024 Best M.Sc.\ Thesis Award by the Portuguese Automatic Control Association. His current research interests include mean field games and distributed control and estimation of ultra large-scale systems.\end{IEEEbiography}
\vspace{-1cm}
\begin{IEEEbiography}[{\includegraphics[width=1in,height=1.25in,clip,keepaspectratio]{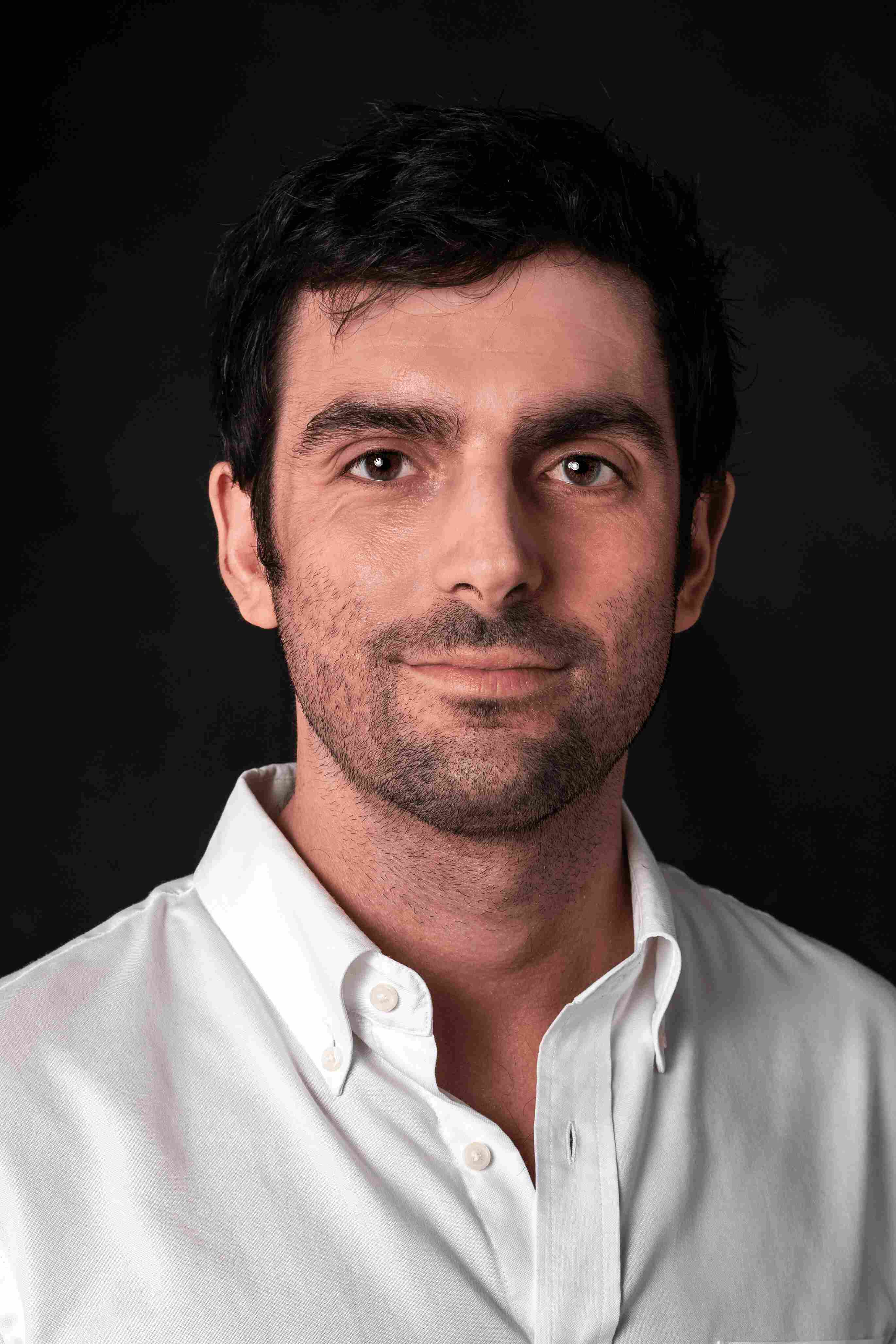}}]{Andrea Agazzi} received a B.Sc. in Physics from ETH Zurich, a M.Sc. in Theoretical and Mathematical Physics from Imperial College London (with distinction), and a Ph.D. in Theoretical Physics from the University of Geneva respectively in 2012, 2013 and 2017. From 2018 he was a Postdoctoral Associate and from 2019 Griffith Research Assistant Professor in the Mathematics Department at Duke University, USA. From 2022 to 2024 he was an Assistant Professor (tenure-track) at the University of Pisa, Italy. Since 2024, he has been an Associate Professor in the Department of Mathematics and Statistics at the University of Bern, Switzerland. His research focuses on stochastic analysis, interacting particle systems, and the mathematical foundations of deep learning.
\end{IEEEbiography}

\vspace{-1cm}
\begin{IEEEbiography}[{\includegraphics[width=1in,height=1.25in,clip,keepaspectratio]{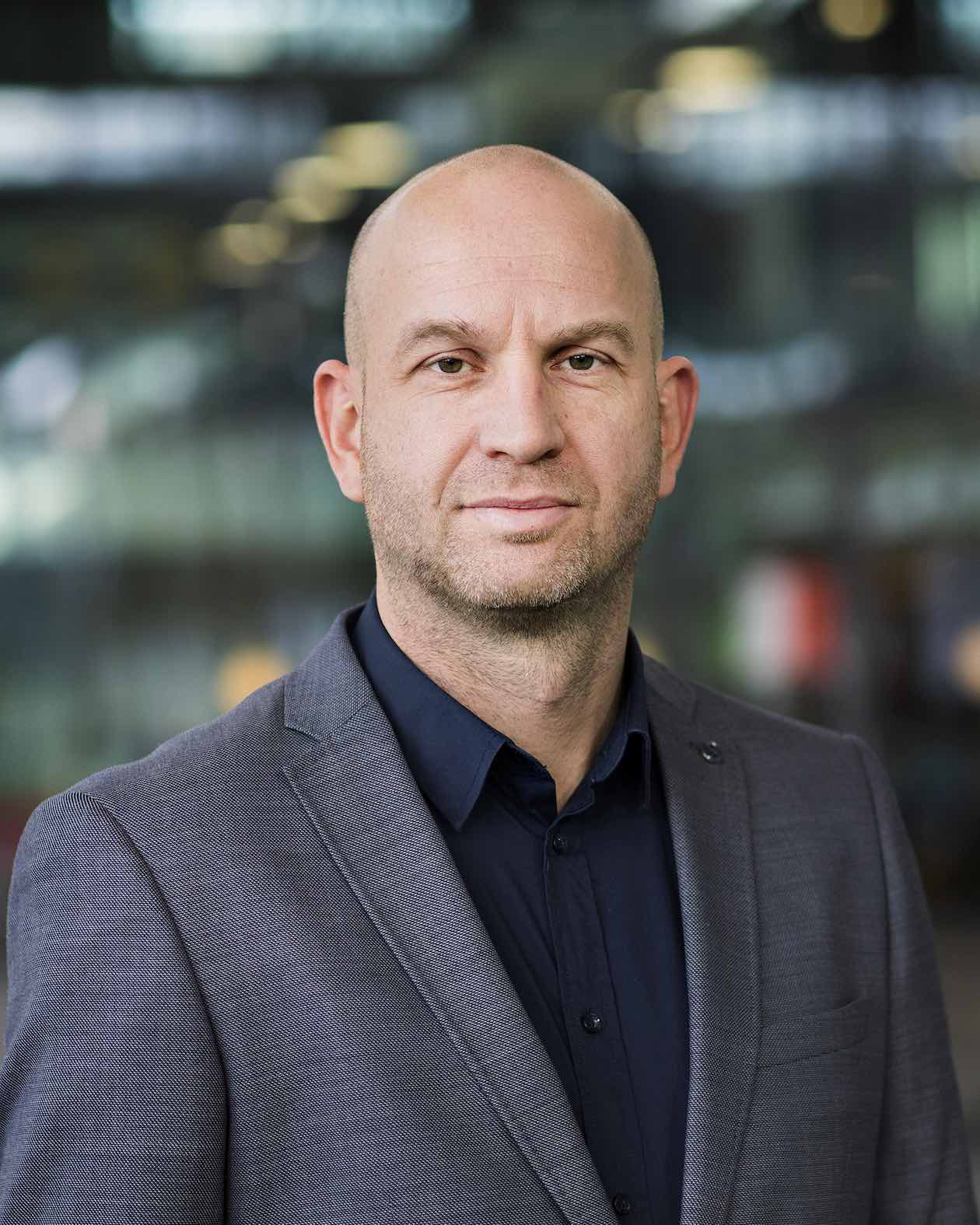}}]{W.P.M.H. Heemels} (Fellow, IEEE)  received M.Sc. (mathematics) and Ph.D. (EE, control theory) degrees (summa cum laude) from the Eindhoven University of Technology (TU/e) in 1995 and 1999, respectively. From 2000 to 2004, he was with the Electrical Engineering Department, TU/e, as an assistant professor, and from 2004 to 2006 with the Embedded Systems Institute (ESI) as a Research Fellow. Since 2006, he has been with the Department of Mechanical Engineering, TU/e, where he is currently a Full Professor and Vice-Dean. He held visiting professor positions at ETH, Switzerland (2001), UCSB, USA (2008) and University of Lorraine, France (2020). He is a Fellow of the IEEE and IFAC, and was the chair of the IFAC Technical Committee on Networked Systems (2017-2023). He served/s on the editorial boards of Automatica,  Nonlinear Analysis: Hybrid Systems (NAHS), Annual Reviews in Control, and IEEE Transactions on Automatic Control, and is the Editor-in-Chief of NAHS as of 2023. He was a recipient of a personal VICI grant awarded by NWO (Dutch Research Council) and recently obtained an ERC Advanced Grant. He was the recipient of the 2019 IEEE L-CSS Outstanding Paper Award and the Automatica Paper Prize 2020-2022. He was elected for the IEEE-CSS Board of Governors (2021-2023).   His current research includes hybrid and cyber-physical systems, networked and event-triggered control systems and model predictive control and their applications. \end{IEEEbiography}

\vspace{-1cm}
\begin{IEEEbiography}[{\includegraphics[width=1in,height=1.25in,clip,keepaspectratio]{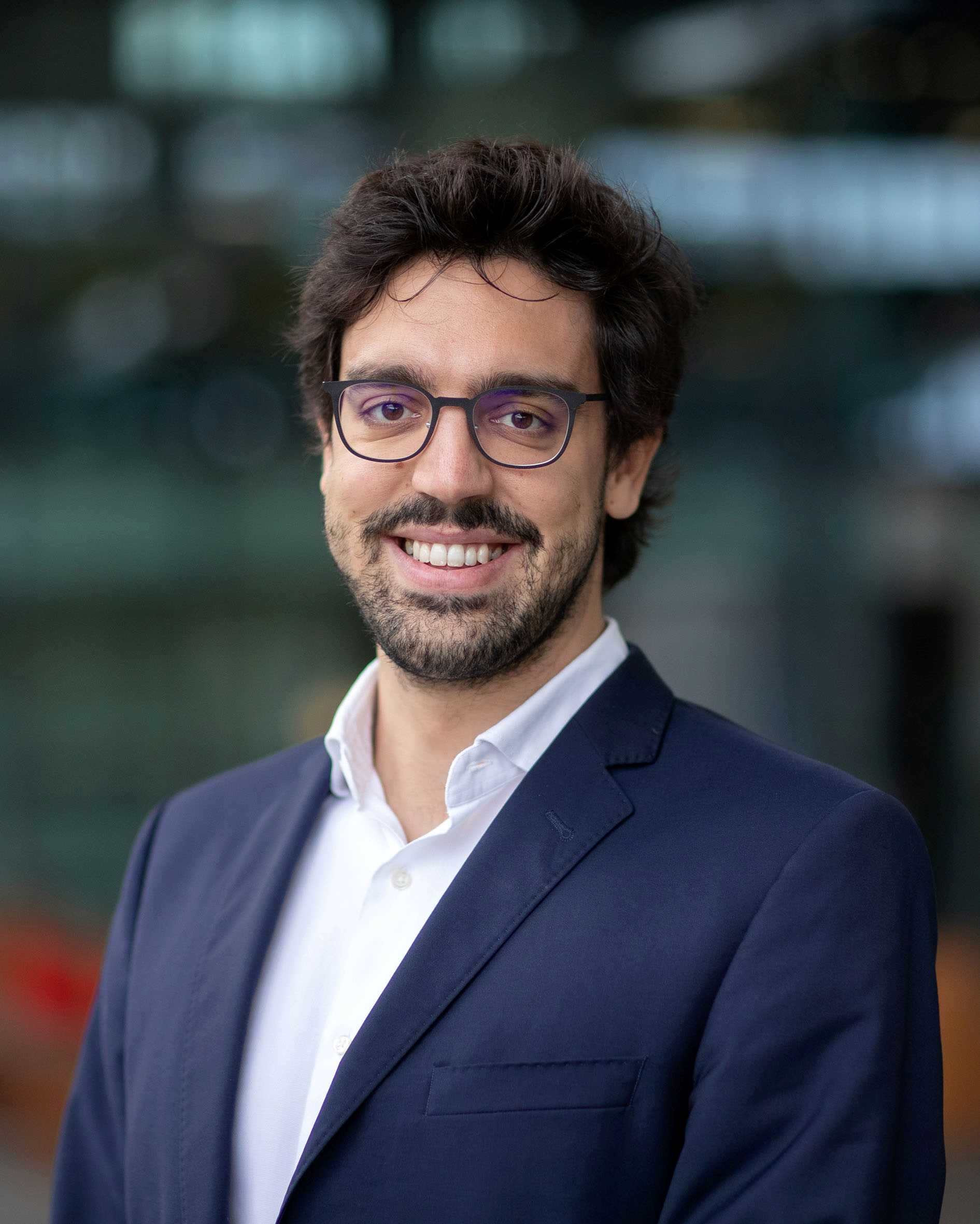}}]{Mauro Salazar} is an Associate Professor in the Control Systems Technology section at Eindhoven University of Technology (TU/e). He received the Ph.D. degree in Mechanical Engineering from ETH Z\"urich in 2019. Before joining TU/e he was a Postdoctoral Scholar in the Autonomous Systems Lab at Stanford University. Dr. Salazar’s research is at the interface of control theory and optimization, and is aimed at the development of a transdisciplinary set of tools for the design, the deployment and the operation of sustainable and societally responsible mobility systems. 
	Both his Master thesis and PhD thesis were recognized with the ETH Medal, and his papers were granted the Best Student Paper award at the 2018 Intelligent Transportation Systems Conference and the 2022 European Control Conference, and the Best Paper Award at the 2024 Vehicle Power and Propulsion Conference.
\end{IEEEbiography}


\end{document}